\title{
Quantum diffusion in the Kronig-Penney model
}
\author{Masahiro Kaminaga 
\thanks{
Department of Electrical Engineering and Information Technology, 
Tohoku Gakuin University, Tagajo, 985-8537, JAPAN. 
Tel.: +81-22-368-7059 \ 
E-mail:kaminaga@mail.tohoku-gakuin.ac.jp}           
\and
Takuya Mine
\thanks{
Faculty of Arts and Sciences,
Kyoto Institute of Technology,
Matsugasaki, Kyoto, 606-8585, JAPAN.
Tel: +81-75-724-7834 \ 
E-mail: mine@kit.ac.jp
}
}
\date{Received: date / Accepted: date}
\newtheorem{theorem}{\indent Theorem}
\newtheorem{proposition}[theorem]{\indent Proposition}
\newtheorem{lemma}[theorem]{\indent Lemma}
\def\Im{\mathop{\rm Im}\nolimits}
\def\tr{\mathop{\rm tr}\nolimits}
\def\det{\mathop{\rm det}\nolimits}
\begin{document}
\maketitle

\begin{abstract}
In this paper we consider the 
1D Schr\"odinger operator $H$ with periodic point interactions.
We show an $L^1-L^\infty$ bound for the 
time evolution operator $e^{-itH}$ restricted to each energy band
with decay order $O(t^{-1/3})$ as $t\to \infty$,
which comes from some kind of resonant state.
The order $O(t^{-1/3})$ is optimal for our model.
We also give an asymptotic bound for the coefficient
in the high energy limit.
For the proof,
we give an asymptotic analysis for the band functions
and the Bloch waves in the high energy limit.
Especially we give the asymptotics for the inflection points
in the graphs of band functions,
which is crucial for the asymptotics of the coefficient
in our estimate.  
\end{abstract}

\section{Introduction}
\label{intro}
The one-electron models of solids are based on the study of 
Schr\"odinger operator with periodic potential.
There are a lot of studies on the periodic potential, 
in particular, for periodic point interactions, 
we can show the spectral set explicitly 
(Albeverio et.\ al.\ \cite{Alb2nd} is the best guide to this field for readers).
Most fundamental case is the one-dimensional Schr\"odinger operator 
with periodic point interactions,
called the \textit{Kronig--Penney model} 
(see Kronig--Penney \cite{KrPe}),
given by
\begin{equation}
\label{kronig-penney0}
H = - \frac{d^2}{dx^2} + V\sum_{j=-\infty}^{\infty}\delta(x-j) 
\quad \mbox{on }L^2({\mathbf R}),
\end{equation}
where $V$ is a non-zero real constant, and $\delta(\cdot - j)$ is 
the Dirac delta measure at $j\in{\mathbf Z}$. 
The positive sign of $V$ corresponds the repulsive interaction, 
while the negative one corresponds the attractive one.
More precisely, $H$ is the negative Laplacian 
with boundary conditions on integer points:
\begin{equation}
\label{kronig-penney1}
H = -\frac{d^2}{dx^2}\quad \mbox{on }{\cal D}, 
\end{equation}
where 
\begin{equation}
\label{kronig-penney2}
{\cal D}=\{u\in H^1({\mathbf R})\cap H^2({\mathbf R}\setminus{\mathbf Z})
:u'(j+)-u'(j-)=Vu(j),\ j\in{\mathbf Z}\}.
\end{equation}
Here 
$u(j\pm)=\lim_{\epsilon \to +0}u(j\pm \epsilon)$ and
$H^p(\Omega)$ is the usual Sobolev space of order $p$ 
on the open set $\Omega$. 
From Sobolev's embedding theorem 
$H^1({\mathbf R})\hookrightarrow C_b^0({\mathbf R})$, 
every elements of ${\cal D}$ are continuous (classical sense) 
and uniformly bounded functions.
It is well-known that $H$ is self-adjoint \cite{{GeKir},{Alb2nd}} and 
is a model describing electrons on the quantum wire.
The spectrum of this model is explicitly given by 
\begin{equation}
\label{kronig-penney3}
\sigma(H) = \left\{E\in\mathbf{R}:-2\leq D(\sqrt{E})\leq 2 \right\},
\end{equation}
where 
$$
D(k) = 2\cos k + V\frac{\sin k}{k}.
$$
$D(k)$ is so-called {\it discriminant} 
and $D(\sqrt{E})$ can be regarded as an entire function 
with respect to $E\in \mathbf{C}$.
The spectrum $\sigma(H)$ of $H$ consists of infinitely 
many closed intervals (spectral bands) 
and is purely absolutely continuous. 

On the other hand, 
for the Schr\"odinger operator $H=-\Delta + V$ on $\mathbf{R}^d$
with decaying potential $V$,
the \textit{dispersive estimate}
for the Schr\"odinger time evolution operator
$e^{-itH}$ is stated as follows:
\begin{eqnarray}
\label{dispersive_estimate}
&& \|P_{ac}e^{-it H} u\|_{L^p(\mathbf{R}^d)}
\leq C |t|^{-d\left(\frac{1}{2}-\frac{1}{p}\right)} \|u\|_{L^q(\mathbf{R}^d)},
\quad
u\in L^2(\mathbf{R}^d)\cap L^q(\mathbf{R}^d)\\
&&(2<p\leq\infty,\ 1/p+1/q=1),\nonumber
\end{eqnarray}
where $P_{ac}$ denotes the spectral projection 
to the absolutely continuous subspace for $H$.
The dispersive estimate
is a quantitative representation of the
diffusion phenomena in quantum mechanics,
and is extensively studied recently,
because of its usefulness in the theory of 
the non-linear Schr\"odinger operator
(see e.g.\ 
Journ\'e--Soffer--Sogge \cite{JoSoSo},
Weder \cite{We},
Yajima \cite{Ya},
Mizutani \cite{Miz},
and references therein).
The estimate (\ref{dispersive_estimate})
is also obtained 
in the case of the one-dimensional point interaction.
Adami--Sacchetti \cite{AdSa} obtain (\ref{dispersive_estimate})
when $V$ is one point $\delta$ potential,
and so do Kova{\v r}{\'\i}k--Sacchetti \cite{KoSa}
when $V$ is the sum of $\delta$ potentials at two points.
The motivation of the present paper is
to obtain a similar estimate for our periodic model
(\ref{kronig-penney0}).
Though this problem is quite fundamental,
we could not find such kind of results
in the literature,
probably because the deduction of the result
requires a detailed analysis
of the band functions, as we shall see below.

Since the spectrum of 
the Schr\"odinger operator with periodic potential 
is absolutely continuous,
one may expect some dispersion type estimate holds
also in this case.
However, there seems to be few results 
about the dispersive type estimate for the time
evolution operator of the differential equation
with periodic coefficients.\footnote{%
Some authors study the pointwise asymptotics
for the integral kernel
of the time evolution operator in the large time limit;
see e.g.\ Korotyaev \cite{Ko} and references therein.
But we could not find the dispersive type estimate 
for the periodic Schr\"odinger operator itself
in the literature.
}
An example is the paper by
Cuccagna \cite{Cuc},
in which the Klein--Gordon equation
\begin{eqnarray*}
&& u_{tt} + H u + \mu u = 0,\quad H = -\frac{d^2}{dx^2}+P(x)
\quad \mbox{on } \mathbf{R},\\
&&u(0,x)=0,\quad u_t(0,x)=g(x)
\end{eqnarray*}
is considered, where $P(x)$ is a smooth real-valued 
periodic function with period $1$.
Cuccagna proves the solution $u(t,x)$ satisfies
\begin{equation}
\label{cuccagna}
 \|u(t,\cdot)\|_{L^\infty(\mathbf{R})}
\leq C_\mu \langle t\rangle^{-1/3}\|g(\cdot)\|_{W^{1,1}(\mathbf{R})},\quad
\langle t \rangle=\sqrt{1+t^2}
\end{equation}
for $\mu\in (0,\infty)\setminus D$,
where $D$ is some bounded discrete set.
The peculiar power $-1/3$ comes from the following reason.
The integral kernel for the time evolution operator
is written as the sum of oscillatory integrals
\begin{equation}
\label{intro0}
 K_{n,t}(x,y)= \frac{1}{2\pi}
\int_{-\pi}^\pi e^{-it (\lambda_n(\theta)-s\theta)}a_n(\theta,x',y')\, d\theta\quad
(n=1,2,\ldots),
\end{equation}
where 
$x=[x]+x'$, $y=[y]+y'$, $[x],[y]\in\mathbf{Z}$, $x',y'\in(0,1)$,
and $s=([x]-[y])/t$.
The function $\lambda_n(\theta)$
is called the \textit{band function} for the $n$-th band,
which is a real-analytic function of $\theta$ with period $2\pi$.
In the large time limit $t\to \infty$,
it is well-known that the main contribution of the oscillatory
integral (\ref{intro0}) comes from the part
nearby the \textit{stationary phase point} $\theta_s$
(the solution of $\lambda_n'(\theta)=s$),
and the \textit{stationary phase method} 
tells us the principal term in the asymptotic bound
is a constant multiple of 
$|\lambda_n''(\theta_s)|^{-1/2}t^{-1/2}$ 
(see e.g.\ Stein \cite[Chapter VIII]{St} or Lemma \ref{lemma_stein} below).
However, since $\lambda_n(\theta)$ is a periodic function,
there exists a point $\theta_0$ so that $\lambda_n''(\theta_0)=0$.
If the stationary phase point $\theta_s$ coincides with $\theta_0$,
then the previous bound no longer makes sense.
Instead, the stationary phase method concludes
the principal term is a constant multiple of
$|\lambda_n^{(3)}(\theta_0)|^{-1/3} t^{-1/3}$.
Since the integral kernel of our operator has the same form 
as (\ref{intro0})
(see (\ref{st01}) below),
we expect a result similar to (\ref{cuccagna}) also holds in our case.

Let us formulate our main result.
Let $H$ be the Hamiltonian for the Kronig-Penney model
given in (\ref{kronig-penney1}) and (\ref{kronig-penney2}).
As stated above,
the spectrum of $H$ has the band structure, that is,
\begin{displaymath}
 \sigma(H) = \bigcup_{n=1}^\infty I_n,
\end{displaymath}
where the $n$-th band $I_n$ is a closed interval of finite length
(for the precise definition,
see (\ref{bloch13}) below).
Our main result is as follows.
\begin{theorem}
\label{theorem_main}
Let $P_n$ be the spectral projection onto the 
$n$-th energy band $I_n$.
Then, for sufficiently large $n$,
there exist positive constants
$C_{1,n}$ and $C_{2,n}$ such that
\begin{equation}
\label{main1}
 \left\|P_n e^{-it H} u\right\|_{L^\infty}
\leq (C_{1,n}\langle t \rangle^{-1/2} + C_{2,n} \langle t \rangle^{-1/3}) \|u\|_{L^1}
\end{equation}
for any $u \in L^1(\mathbf{R})$ and any $t\in \mathbf{R}$,
where $\langle t \rangle=\sqrt{1+t^2}$.
The coefficients obey the bound
\begin{displaymath}
C_{1,n} = O(1),\quad
C_{2,n} = O(n^{-1/9})
\end{displaymath}
as $n\to\infty$.
\end{theorem}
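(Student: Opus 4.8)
The plan is to reduce the bound (\ref{main1}) to a uniform estimate on the oscillatory-integral kernel of $P_n e^{-itH}$ and then feed into it a careful high-energy asymptotic analysis of the band function $\lambda_n(\theta)$. First I would use the Bloch--Floquet decomposition to write this kernel in the form (\ref{intro0}), i.e. as $K_{n,t}(x,y)=\frac{1}{2\pi}\int_{-\pi}^{\pi}e^{-it(\lambda_n(\theta)-s\theta)}a_n(\theta,x',y')\,d\theta$, where $a_n(\theta,x',y')=\phi_n(\theta,x')\overline{\phi_n(\theta,y')}$ is built from the $L^2(0,1)$-normalized Bloch waves, $x=[x]+x'$, $y=[y]+y'$, and $s=([x]-[y])/t$. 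Since $\|P_ne^{-itH}u\|_{L^\infty}\le\big(\sup_{x,y}|K_{n,t}(x,y)|\big)\|u\|_{L^1}$, it suffices to bound the oscillatory integral uniformly in the single real parameter $s$ (equivalently in $x,y$), and separately track the dependence on $n$. For $|t|\le1$ the trivial bound using $\sup_\theta|a_n|$ supplies the $\langle t\rangle$ factor, so the real work is in the regime $|t|\ge1$.

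The phase is $\phi(\theta)=\lambda_n(\theta)-s\theta$, with critical points solving $\lambda_n'(\theta)=s$, and the decay rate is governed by the vanishing order of $\phi''=\lambda_n''$ there. Away from the inflection point, where $\lambda_n''$ is bounded below, an ordinary stationary-phase estimate yields $C|\lambda_n''(\theta_s)|^{-1/2}|t|^{-1/2}$; when the critical point coincides with (or lies near) the unique inflection point $\theta_0$ where $\lambda_n''(\theta_0)=0$ but $\lambda_n'''(\theta_0)\neq0$, the same machinery (a van der Corput type lemma with a $W^{1,1}$ amplitude) instead produces $C|\lambda_n'''(\theta_0)|^{-1/3}|t|^{-1/3}$. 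I would merge these into one bound uniform in $s$ by cutting the interval into a neighbourhood of $\theta_0$, treated by the cubic estimate, and its complement, treated by the quadratic estimate, so that the coefficients emerge as $C_{1,n}\sim\big(\sup_\theta|a_n|\big)\big(\inf_{\text{away}}|\lambda_n''|\big)^{-1/2}$ and $C_{2,n}\sim\big(\sup_\theta|a_n|\big)|\lambda_n'''(\theta_0)|^{-1/3}$.

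The coefficient asymptotics then follow from the high-energy analysis of $\lambda_n$ and $a_n$. Differentiating $D(\sqrt{\lambda_n(\theta)})=2\cos\theta$ implicitly expresses $\lambda_n'$, $\lambda_n''$, $\lambda_n'''$ in terms of $D$ and its derivatives at $k=\sqrt{\lambda_n}$. In the bulk of the band one finds $\lambda_n(\theta)\approx k^2$ with $k\approx n\pi-\theta$ (up to parity and an $O(1/n)$ shift from the $V\sin k/k$ term), hence $\lambda_n''\approx2=O(1)$, giving $C_{1,n}=O(1)$. The crux is the boundary layer near the band edge: setting $\epsilon=V/(n\pi)$ and expanding $D$ near $k=n\pi$, I expect $\lambda_n''(\theta)\approx2-\frac{n\pi\epsilon^2}{2\theta^3}$ for $\epsilon\ll\theta\ll1$, so the inflection sits at $\theta_0\approx(V^2/(4\pi n))^{1/3}=O(n^{-1/3})$, safely outside the $O(1/n)$ layer, and there $\lambda_n'''(\theta_0)\approx6/\theta_0=O(n^{1/3})$. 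Combined with the uniform boundedness of $a_n$ and its $\theta$-derivative, which I would read off from the asymptotics of the normalized Bloch waves, this yields $C_{2,n}=O(|\lambda_n'''(\theta_0)|^{-1/3})=O(n^{-1/9})$.

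The main obstacle lies entirely in this boundary-layer analysis and is twofold. First, one must pin down the exact location and third-derivative value at $\theta_0$ through matched asymptotics, since the leading free-model term $\lambda_n''\approx2$ is curvature-free and the inflection is created solely by the $O(1/n)$ interaction correction; obtaining the sharp growth $\lambda_n'''(\theta_0)\sim n^{1/3}$ (not merely its order) demands uniform control of the error terms in the expansion of $D$ throughout the transition region $\epsilon\ll\theta\ll1$. Second, the stationary-phase estimate must be made genuinely uniform in $s$ as the critical point slides across $\theta_0$, so that the two regimes glue without a spurious blow-up of the constant; this is the familiar difficulty of uniform oscillatory-integral bounds at a degenerate critical point, which I would resolve with a van der Corput lemma insensitive to the precise value of $\lambda_n''$ once $|\lambda_n'''|$ is bounded below near $\theta_0$.
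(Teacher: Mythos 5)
Your proposal follows the paper's own proof essentially step for step: the same Floquet--Bloch reduction to the oscillatory kernel (\ref{st01}), the same Stein/van der Corput lemma applied with $k=3$ on small neighbourhoods of the inflection points and $k=2$ on the complement (where $|\lambda_n''|$ is bounded below uniformly in $s$ and $n$), and the same boundary-layer asymptotics --- $\lambda_n''\approx 2-2\delta_n|\tau|^{-3}$ with $\delta_n=V^2/(4n\pi)$, inflection point at distance $\delta_n^{1/3}=O(n^{-1/3})$ from the band edge, $|\lambda_n^{(3)}(\theta_0)|\approx 6\delta_n^{-1/3}=O(n^{1/3})$ --- which the paper establishes via the Puiseux expansion (Propositions \ref{proposition_puiseux_lambda} and \ref{proposition_inflection}) and which yield exactly $C_{1,n}=O(1)$ and $C_{2,n}=O(n^{-1/9})$. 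One small correction: $\partial_\theta a_n$ is \emph{not} uniformly bounded pointwise (it grows like $n$ near the band edges, cf.\ Proposition \ref{proposition_amplitude1}); only its $L^1$ norm in $\theta$ is uniform in $n$, but since your van der Corput lemma requires precisely the $W^{1,1}$ norm of the amplitude, this does not affect your argument.
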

\noindent
The power $-1/2$ in the first term of the coefficient in (\ref{main1}) 
is the same as in (\ref{dispersive_estimate}) with $d=1$ and $p=\infty$,
since it comes from the states corresponding to the energy
near the band center, which behaves like a free particle.
This fact can be understood from the graph of $\lambda=\lambda_n(\theta)$
(Figure \ref{figure_band_function}).%
\footnote{All the graphs are written by using Mathematica 9.0.}
The part of the graph corresponding to the band center 
is similar to the parabola $\lambda=\theta^2$ or its translation,
which is the band function for the free Hamiltonian $H_0=-d^2/dx^2$.
On the other hand,
the power $-1/3$ comes from part of the integral (\ref{intro0})
given by
\begin{equation}
\label{intro1}
 \tilde{K}_{n,t}(x,y)= \frac{1}{2\pi}
\int_{J_n} e^{-it (\lambda_n(\theta)-s\theta)}a_n(\theta,x',y')\, d\theta,
\end{equation}
where $J_n$ is some open set including 
two solutions $\theta_0$ to the equation $\lambda_n''(\theta)=0$.
Notice that $(\theta_0,\lambda_n(\theta_0))$ is an inflection 
point in the graph of $\lambda=\lambda_n(\theta)$;
see Figure \ref{figure_band_function}.
The estimates for the coefficients 
are obtained from the lower bounds for the derivatives 
of $\lambda_n(\theta)$.
Actually, we can choose $J_n$ so that
\begin{equation}
\label{intro2}
 \inf_{\theta\in [-\pi,\pi]\setminus J_n}|\lambda_n''(\theta)|\geq C,\quad
  \inf_{\theta\in J_n}|\lambda_n^{(3)}(\theta)|\geq C n^{1/3},
\end{equation}
where $C$ is a positive constant independent of $n$.
By (\ref{intro2}) and the estimates for the amplitude function
(Proposition \ref{proposition_amplitude1}),
we can prove Theorem \ref{theorem_main} by using
a lemma for estimating oscillatory integrals,
given in Stein's book
(see Stein \cite[page 334]{St} or Lemma \ref{lemma_stein} below).
We can also prove the power $-1/3$ is optimal,
by considering the case $s=\lambda'(\theta_0)$
(so, $\theta_0$ is a stationary phase point),
and applying the asymptotic expansion formula in the stationary phase method
(see e.g.\ Stein \cite[Page 334]{St}).
\begin{figure}[htbp]
 \begin{center}
  \includegraphics[width=8cm]{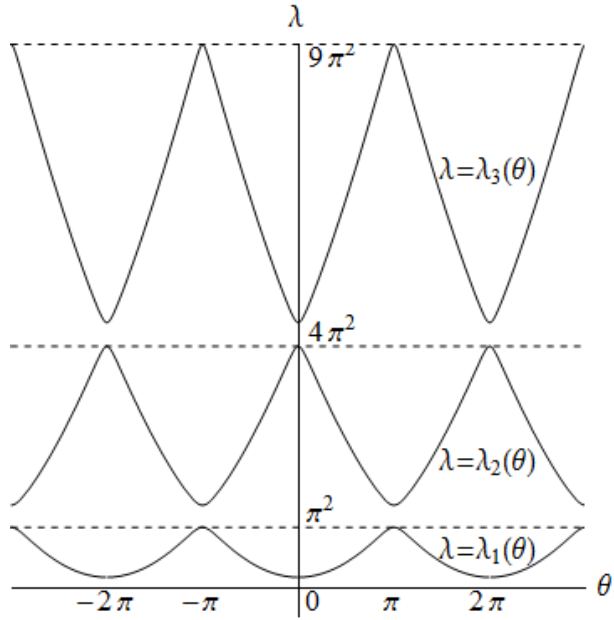}
 \end{center}
\caption{The graphs of the band functions 
$\lambda=\lambda_n(\theta)$ ($n=1,2,3$)
for $V>0$.
The range of $\lambda_n(\theta)$ is the $n$-th band $I_n$.
We find two inflection points of 
$\lambda=\lambda_n(\theta)$ near $(n\pi,(n\pi)^2)$,
for every $n$.
}
\label{figure_band_function}
\end{figure}

The physical implication of the result is as follows.
By definition, the parameter $s=([x]-[y])/t$ represents
the \textit{propagation velocity} of a quantum particle.
The wave packet with energy near $\lambda(\theta_0)$ has
the maximal group speed in the $n$-th band,
and the speed of the quantum diffusion is slowest in that band.
Thus such state has a bit longer life-span 
(in the sense of $L^\infty$-norm)
than the ordinary state has;
the state is in some sense a \textit{resonant state},
caused by the meeting of two stationary phase points $\theta_s$
as $s$ tends to $s_0=\lambda'(\theta_0)$.
It is well-known that the existence of resonant states
makes the decay of the solution 
with respect to $t$ slower 
(see Jensen--Kato \cite{JeKa} or Mizutani \cite{Miz}).

Since the estimate (\ref{main1}) is given bandwise,
it is natural to ask we can obtain 
the dispersive type estimate for 
the whole Schr\"odinger time evolution operator $e^{-itH}$,
like Cuccagna's result (\ref{cuccagna}).
However, it turns out to be difficult
in the present case, from the following reason.
A reasonable strategy to prove such estimate is as follows.
First, we divide the integral $K_{n,t}$ into two parts $\tilde{K}_{n,t}$
and the rest, where $\tilde{K}_{n,t}$ is given in (\ref{intro1})
with $J_n$ some open set including $\theta=0, \pm\pi$.
Next, we show the sum of $\tilde{K}_{n,t}$ converges and gives $O(t^{-1/3})$, 
and the sum of the rests also converges and gives $O(t^{-1/2})$.
However, for fixed $x$ and $y$, we find that our upper bound
for $\tilde{K}_{n,t}(x,y)$ is not better than $O(n^{-1}t^{-1/2})$,
and the sum of the upper bounds does not converge
(see the last part of Section 4).
One reason for this divergence is very strong singularity
of our potential, the sum of $\delta$-functions.
Because of this singularity,
the width of the band gap,
say $g_n$ ($n$ is the band number), 
does not decay at all in the high energy limit $n\to \infty$.%
\footnote{
Proposition \ref{proposition_puiseux_lambda}
implies $g_n \to 2|V|$ as $n\to \infty$.}
Then we cannot take the open set $J_n$ so small,%
\footnote{
We take $|J_n|=O(n^{-1/3})$ in the proof of Theorem \ref{theorem_main}.
}
and the sum of the lengths $|J_n|$ diverges;
if this sum converges,
we can use a simple bound $|\tilde{K}_{n,t}|\leq C|J_n|$
to control the sum.
Thus we do not succeed to obtain a bound for
the sum of $\tilde{K}_{n,t}(x,y)$ at present.

On the other hand, for the Schr\"odinger operator $H=-d^2/dx^2 + V$
on $\mathbf{R}^1$ with real-valued periodic potential $V$,
it is known that the decay rate of the width of the band gap $g_n$
reflects the smoothness of the potential $V$.
Hochstadt \cite{Ho} says $g_n = o(n^{-(m-1)})$ if $V$ is in $C^m$,
and Trubowitz \cite{Tr} says $g_n = O(e^{-c n})$ 
($c$ is some positive constant) if $V$ is real analytic.
So, if $V$ is sufficiently smooth,
it is expected that we can control the sum of $\tilde{K}_{n,t}$,
and obtain the dispersive type estimate for the whole operator $e^{-itH}$
(i.e.\ (\ref{main1}) without the projection $P_n$).
We hope to argue this matter elsewhere in the near future.

The paper is organized as follows.
In Section 2,
we review the Floquet--Bloch theory for our operator $H$
and give the explicit form of the integral kernel
of $e^{-itH}$.
In Section 3,
we give more concrete analysis for the band functions,
especially give some estimates for the derivatives.
In Section 4, we prove Theorem \ref{theorem_main},
and give some comment for the summability with respect to $n$
of the estimates (\ref{main1}).

\section{Floquet--Bloch theory}
In this section, we shall calculate the integral kernel
of the operator $e^{-itH}$ by using the Floquet-Bloch theory.
Most results in this section are already written in another literature
(e.g.\ Reed-Simon \cite[XIII.16]{ReSi4} and Albeverio et.\ al.\ \cite[III.2.3]{Alb2nd}),
but we shall give it here again for the completeness.

First we shall calculate the generalized eigenfunctions
for our model,
i.e., the solutions to the equations
\begin{eqnarray}
&& -\varphi''(x) = \lambda \varphi(x)\quad (x\in \mathbf{R}\setminus\mathbf{Z}),
\label{bloch01}\\
&&\varphi(j+)=\varphi(j-)\quad (j\in \mathbf{Z}),
\label{bloch02}\\
&&\varphi'(j+)-\varphi'(j-)=V \varphi(j)\quad (j\in \mathbf{Z}).
\label{bloch03}
\end{eqnarray}
The condition (\ref{bloch02}) comes from the requirement 
$\varphi\in H^1_{\rm loc}(\mathbf{R})$, and we use the abbreviation
$\varphi(j)=\varphi(j\pm)$ in (\ref{bloch03}).
\begin{proposition}
 \label{proposition_generalized_eigenfunction}
Let $\lambda\in \mathbf{C}$, $V\in \mathbf{R}$,
and take $k\in \mathbf{C}$ so that $\lambda=k^2$.
Then, the equations (\ref{bloch01})-(\ref{bloch03}) 
have a solution $\varphi(x)$ of the following form.
\begin{equation}
\label{bloch05}
 \varphi(x)= A_j \cos k(x-j) +  B_j k^{-1} \sin k(x-j)\quad
(j<x<j+1),
\end{equation}
where $A_j$ and $B_j$ are constants.
When $k=0$, we interpret $k^{-1}\sin k(x-j) = x-j$.
The coefficients $A_j$ and $B_j$ satisfy the following
recurrence relation.
\begin{equation}
\label{bloch06}
\begin{pmatrix}
 A_j\cr B_j
\end{pmatrix}
= T(k)
\begin{pmatrix}
 A_{j-1}\cr B_{j-1}
\end{pmatrix}
\quad (j\in \mathbf{Z}),
\end{equation}
\begin{displaymath}
 T(k)=
\begin{pmatrix}
\cos k & k^{-1}\sin k\cr
V \cos k - k \sin k & V k^{-1}\sin k+\cos k
\end{pmatrix}.
\end{displaymath}
The matrix $T(k)$ satisfies
$\det T(k)=1$ and the discriminant $D(k)=\tr T(k)$ is
\begin{eqnarray}
\label{bloch08} 
D(k)=2 \cos k + V k^{-1}\sin k.
\end{eqnarray}
\end{proposition}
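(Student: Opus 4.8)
The proof is a direct transfer-matrix computation: I would solve the ordinary differential equation on each open cell $(j,j+1)$ and then propagate the solution across the integer points using the two interface conditions (\ref{bloch02}) and (\ref{bloch03}). On $(j,j+1)$ the equation (\ref{bloch01}) reads $\varphi''=-k^2\varphi$, whose general solution is an arbitrary linear combination of $\cos k(x-j)$ and $\sin k(x-j)$; writing it in the normalized form (\ref{bloch05}) is merely a convenient choice of basis, and the degenerate case $k=0$ ($\varphi''=0$, affine solutions) is recovered by the stated convention $k^{-1}\sin k(x-j)=x-j$. The point of the factor $k^{-1}$ in (\ref{bloch05}) is that, with this normalization, the coefficients are exactly the Cauchy data at the left endpoint of the cell: evaluating (\ref{bloch05}) and its derivative at $x=j+$ gives $A_j=\varphi(j+)$ and $B_j=\varphi'(j+)$, which makes the subsequent bookkeeping transparent.

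To derive the recurrence (\ref{bloch06}), I would compute the left-hand limits at $x=j$ from the representation of $\varphi$ on the previous cell $(j-1,j)$, namely $\varphi(j-)=A_{j-1}\cos k+B_{j-1}k^{-1}\sin k$ and $\varphi'(j-)=-A_{j-1}k\sin k+B_{j-1}\cos k$. The continuity condition (\ref{bloch02}) then reads $A_j=\varphi(j-)$, which is precisely the first row of $T(k)$. For the second row I would use the jump condition (\ref{bloch03}) in the form $B_j=\varphi'(j+)=\varphi'(j-)+V\varphi(j)$, substitute the common value $\varphi(j)=A_j=A_{j-1}\cos k+B_{j-1}k^{-1}\sin k$ on the right, and collect the coefficients of $A_{j-1}$ and $B_{j-1}$. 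This reproduces the entries $V\cos k-k\sin k$ and $Vk^{-1}\sin k+\cos k$.

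The two remaining identities are then immediate algebraic checks on the resulting $2\times2$ matrix. Expanding $\det T(k)$, the two cross terms proportional to $V$ cancel and the Pythagorean identity leaves $\cos^2 k+\sin^2 k=1$, so $\det T(k)=1$; adding the diagonal entries gives $\tr T(k)=2\cos k+Vk^{-1}\sin k$, which is exactly $D(k)$ in (\ref{bloch08}).

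There is essentially no deep obstacle here, since the whole proposition is a single transfer-matrix computation. The only places where a careless argument could slip are (i) keeping track of which one-sided value enters each side of the jump condition, since one must consistently use the common value $\varphi(j)=A_j$ (equal to $\varphi(j\pm)$ by continuity) on the right-hand side; and (ii) checking that every entry of $T(k)$, and hence $D(k)$, extends across the removable singularity of $k^{-1}\sin k$ at $k=0$ to an entire function, so that the convention used in (\ref{bloch05}) for $k=0$ agrees with the limit of the generic formulas and the identities $\det T(k)=1$ and $\tr T(k)=D(k)$ persist at $k=0$.
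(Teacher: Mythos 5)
Your proposal is correct, and it is exactly the ``simple calculation'' the paper explicitly omits: reading off the Cauchy data $A_j=\varphi(j+)$, $B_j=\varphi'(j+)$, propagating across $x=j$ via the continuity and jump conditions to get $T(k)$, and verifying $\det T(k)=1$ and $\tr T(k)=D(k)$ by direct algebra. Your added remarks on using the common value $\varphi(j)=\varphi(j\pm)$ in the jump condition and on the removable singularity at $k=0$ (so that all entries, being even in $k$, are entire in $\lambda=k^2$) are sound and match the paper's conventions.
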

\noindent
The proof is a simple calculation, so we shall omit it.
Notice that $D(k)=D(\sqrt{\lambda})$ is an entire function
with respect to $\lambda$, since $D(k)$ is an even function.

Next  we shall calculate the \textit{Bloch waves},
the solution to (\ref{bloch01})-(\ref{bloch03})
with the quasi-periodic condition
\begin{equation}
\label{bloch04}
 \varphi(x+1) = e^{i\theta}\varphi(x)\quad (x\in \mathbf{R}\setminus\mathbf{Z})
\end{equation}
for some $\theta \in \mathbf{R}$.
\begin{proposition}
\label{proposition_blochwave}
\begin{enumerate}
 \item 
For $\theta\in \mathbf{R}$,
there exists a non-trivial solution $\varphi$ to 
(\ref{bloch01})-(\ref{bloch03}) 
satisfying the Bloch wave condition (\ref{bloch04})
if and only if
\begin{equation}
 \label{bloch09}
D(k)=2\cos \theta.
\end{equation}

 \item When (\ref{bloch09}) holds,
a solution $\varphi(x)$
to (\ref{bloch01})-(\ref{bloch03}) and (\ref{bloch04}) is given by
(\ref{bloch05}) with the coefficients 
\begin{equation}
 \label{bloch10}
\begin{pmatrix}
 A_0 \cr B_0
\end{pmatrix}
=
\begin{pmatrix}
 -k^{-1}\sin k \cr
\cos k -e^{i\theta}
\end{pmatrix},\quad
\begin{pmatrix}
 A_j \cr B_j
\end{pmatrix}
=
e^{ij\theta}
\begin{pmatrix}
 A_0 \cr B_0
\end{pmatrix}
\quad 
(j\in \mathbf{Z}).
\end{equation}
\end{enumerate}
\end{proposition}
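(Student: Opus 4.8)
The plan is to reduce the existence of a Bloch wave to a spectral condition on the transfer matrix $T(k)$ from Proposition \ref{proposition_generalized_eigenfunction}. By that proposition, any solution $\varphi$ of (\ref{bloch01})--(\ref{bloch03}) has the form (\ref{bloch05}) with coefficient vectors $(A_j,B_j)^T$ obeying the recurrence (\ref{bloch06}); the matching conditions (\ref{bloch02}) and (\ref{bloch03}) are already built into that recurrence. Hence the only thing left to do is to translate the quasi-periodicity (\ref{bloch04}) into a condition on the $(A_j,B_j)$.

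First I would compute $\varphi(x+1)$ for $j<x<j+1$ directly from (\ref{bloch05}): after shifting the index, $\varphi(x+1)=A_{j+1}\cos k(x-j)+B_{j+1}k^{-1}\sin k(x-j)$. Comparing this with $e^{i\theta}\varphi(x)$ and using the linear independence of $\cos k(x-j)$ and $k^{-1}\sin k(x-j)$ on the interval (for $k=0$ one uses $1$ and $x-j$ instead, via the stated interpretation), the Bloch condition (\ref{bloch04}) is equivalent to $(A_{j+1},B_{j+1})^T=e^{i\theta}(A_j,B_j)^T$ for every $j$. Combined with the recurrence (\ref{bloch06}), this says exactly that $(A_j,B_j)^T$ is an eigenvector of $T(k)$ with eigenvalue $e^{i\theta}$.

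A non-trivial Bloch wave therefore exists if and only if $e^{i\theta}$ is an eigenvalue of $T(k)$. Since $\det T(k)=1$ and $\tr T(k)=D(k)$, the characteristic polynomial is $\mu^2-D(k)\mu+1$, and $e^{i\theta}$ is a root precisely when $e^{i\theta}+e^{-i\theta}=D(k)$, i.e.\ $D(k)=2\cos\theta$; this proves part (i). For part (ii), assuming (\ref{bloch09}) the matrix $T(k)-e^{i\theta}I$ is singular, hence of rank at most one, so any vector annihilated by its first row lies in the kernel. The first row of $T(k)-e^{i\theta}I$ gives $(\cos k-e^{i\theta})A_0+k^{-1}\sin k\,B_0=0$, which is solved by $(A_0,B_0)^T=(-k^{-1}\sin k,\ \cos k-e^{i\theta})^T$ as in (\ref{bloch10}); propagating by $(A_j,B_j)^T=e^{ij\theta}(A_0,B_0)^T$ then yields the stated coefficients.

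The computations are routine, so the only points demanding care are bookkeeping ones. I would track the $k=0$ interpretation throughout, and flag the degenerate case $\sin k=0$ with $\cos k=e^{i\theta}$ (which forces $\theta\in\{0,\pi\}$ and makes $T(k)$ a genuine Jordan block): there the vector (\ref{bloch10}) collapses to zero and the eigenvector must instead be read off from the second row. Since this exceptional set plays no role in the subsequent high-energy band analysis, I expect no real obstacle; the essential content is simply the identification of the Floquet multiplier $e^{i\theta}$ with an eigenvalue of the transfer matrix $T(k)$.
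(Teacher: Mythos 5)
Your proof is correct and takes essentially the same route as the paper's: both reduce the Bloch condition (\ref{bloch04}) to the statement that $e^{i\theta}$ is an eigenvalue of the transfer matrix $T(k)$, derive (\ref{bloch09}) from $\det T(k)=1$ and $\tr T(k)=D(k)$, and verify that the vector in (\ref{bloch10}) is a corresponding eigenvector propagated via (\ref{bloch06}). You in fact fill in details the paper leaves implicit (``it is easy to see\dots''), including the degenerate band-edge case $\sin k=0$, $\cos k=e^{i\theta}$ where (\ref{bloch10}) collapses to the zero vector, which the paper's proof passes over silently.
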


\begin{proof}
(i)
It is easy to see a non-trivial solution to 
(\ref{bloch01})-(\ref{bloch03}) and (\ref{bloch04})
exists if and only if $T(k)$ has an eigenvalue $e^{i\theta}$,
and the latter condition is equivalent to 
(\ref{bloch09}), since $\det T(k)=1$ and $\tr T(k)=D(k)$.
(ii) When (\ref{bloch09}) holds,
the vector ${}^t\!(A_0\ B_0)$ given in (\ref{bloch10})
is an eigenvector of $T(k)$ with the eigenvalue $e^{i\theta}$.
Thus the second equation in (\ref{bloch10})
follows from (\ref{bloch06}).
\end{proof}

Proposition \ref{proposition_blochwave}
and the Bloch theorem imply $\lambda\in \sigma(H)$
if and only if (\ref{bloch09}) holds
for some $\theta \in \mathbf{R}$, that is,
\begin{equation}
 \label{bloch11}
 -2 \leq D(\sqrt{\lambda}) \leq 2,
\end{equation}
as already stated in (\ref{kronig-penney3}).
For $\lambda<0$, we have
\begin{equation}
\label{bloch12}
 D(\sqrt{\lambda})
= 2 \cosh \sqrt{-\lambda} + V (\sqrt{-\lambda})^{-1}\sinh \sqrt{-\lambda}.
\end{equation}
If $V>0$,
the right hand side of (\ref{bloch12}) is larger than $2$
and (\ref{bloch11}) does not hold for $\lambda<0$.
Thus, there is no negative part in $\sigma(H)$.
If $V<0$, then some negative value $\lambda$ belongs to $\sigma(H)$,
and the corresponding $k$ is pure imaginary.
However, 
we concentrate on the high energy limit
in the present paper,
and the existence of the negative spectrum does not affect 
our argument.
So we sometimes assume $V>0$ in the sequel,
in order to simplify the notation.
In this case, the results for $V<0$ will be stated in the remark.

By an elementary inspection of the graph of $y=D(k)$,
we find the following properties.
\begin{proposition}
\label{proposition_discriminant}
Assume $V>0$. Then,
\begin{enumerate}
 \item $D(0)= 2 + V$ and $D(n \pi) = 2\cdot (-1)^n$ for $n=1,2,\ldots$.

 \item  The equation $D(k)=2 \cdot (-1)^n$ has a unique solution $k=k_n$ in 
the open interval $(n\pi,(n+1)\pi)$ for $n=0,1,2,\ldots$.

 \item The equation $D'(k)=0$ has a unique solution $k=l_n$ in
the open interval $(n\pi,(n+1)\pi)$ for $n=1,2,\ldots$,
and $n\pi<l_n<k_n$.

 \item For convenience, we put $l_0=0$.
Then, $D(k)$ is monotone decreasing on $[l_n,l_{n+1}]$ for even $n$,
and  monotone increasing on $[l_n,l_{n+1}]$ for odd $n$.
\end{enumerate}
\end{proposition}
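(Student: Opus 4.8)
The plan is to treat this as a careful calculus analysis of $D(k) = 2\cos k + Vk^{-1}\sin k$, exploiting that for large $k$ the function is a small perturbation of $2\cos k$. Part (i) is immediate: at $k=0$ one uses $\lim_{k\to 0}k^{-1}\sin k = 1$ to get $D(0)=2+V$, and at $k=n\pi$ one uses $\sin(n\pi)=0$, $\cos(n\pi)=(-1)^n$ to get $D(n\pi)=2(-1)^n$. The real work is in (iii); parts (ii) and (iv) will then follow by elementary sign-chasing.

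For (iii), first I would compute
\[
D'(k) = -\sin k\left(2 + \frac{V}{k^2}\right) + \frac{V}{k}\cos k,
\]
and record the boundary signs $D'(n\pi)=\frac{V(-1)^n}{n\pi}$, $D'((n+\tfrac12)\pi)=-(-1)^n\left(2+\frac{V}{((n+1/2)\pi)^2}\right)$, and $D'((n+1)\pi)=\frac{V(-1)^{n+1}}{(n+1)\pi}$. Away from the zeros of $\cos k$ the equation $D'(k)=0$ is equivalent to
\[
\tan k = g(k), \qquad g(k) := \frac{Vk}{2k^2+V},
\]
and since $\cos((n+\tfrac12)\pi)=0$ while $D'((n+\tfrac12)\pi)\neq 0$, no critical point is lost by this division. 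The key elementary estimate is that $g(k)>0$ for $k>0$ and $g'(k)<1$ for all $k>0$; the latter reduces, after clearing the positive denominator, to the obvious inequality $0<4k^4+6Vk^2$. On the half-interval $(n\pi,(n+\tfrac12)\pi)$ the function $\tan k$ increases from $0$ to $+\infty$ with $\sec^2 k\geq 1$, so
\[
\frac{d}{dk}\left(\tan k - g(k)\right) = \sec^2 k - g'(k) > 0;
\]
hence $\tan k - g(k)$ is strictly increasing from a negative value to $+\infty$ and vanishes at exactly one point, which I call $l_n$. On the complementary half-interval $((n+\tfrac12)\pi,(n+1)\pi)$ one has $\tan k<0<g(k)$, so $D'$ has no zero there. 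This gives the unique critical point $l_n$ and locates it in $(n\pi,(n+\tfrac12)\pi)$.

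Parts (ii) and (iv) then follow by sign-chasing. Since $l_n$ is the only zero of $D'$ in $(n\pi,(n+1)\pi)$ and $D'(n\pi)$ has sign $(-1)^n$, $D'$ keeps the sign $(-1)^n$ on $(n\pi,l_n)$ and the sign $-(-1)^n$ on $(l_n,(n+1)\pi)$; thus $l_n$ is a strict extremum with $|D(l_n)|>2$ because $D$ has moved strictly away from $D(n\pi)=2(-1)^n$. On $(l_n,(n+1)\pi)$ the function $D$ is strictly monotone from $D(l_n)$ to $D((n+1)\pi)=-2(-1)^n$, so by the intermediate value theorem it meets the level $2(-1)^n$ at exactly one point $k_n$, giving both the existence-uniqueness of $k_n$ in the full interval (no solution occurs on $(n\pi,l_n)$, where $D$ stays strictly on one side of $2(-1)^n$) and the ordering $n\pi<l_n<k_n$. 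For (iv), on the open interval $(l_n,l_{n+1})$ there is no zero of $D'$ — the only candidates $l_n,l_{n+1}$ are its endpoints, and $(n+1)\pi$ is not critical since $D'((n+1)\pi)\neq 0$ — so $D$ is strictly monotone there, with sign of $D'$ equal to $-(-1)^n$, which is negative for even $n$ and positive for odd $n$, yielding the stated monotonicity.

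The step I expect to be the main obstacle is the uniqueness of the critical point $l_n$ in (iii), \emph{uniformly} in the coupling $V$. A naive approach via monotonicity of $g$ fails for large $V$, since $g$ increases on $(0,\sqrt{V/2})$; the clean fix is the bound $g'(k)<1\leq\sec^2 k$, which forces $\tan k-g(k)$ to be strictly increasing on each ascending branch of $\tan$ regardless of the size of $V$. Everything else is routine once this monotonicity is in hand.
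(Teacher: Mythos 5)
Your proof is correct in substance, and it is worth saying up front that the paper offers no proof at all to compare against: Proposition \ref{proposition_discriminant} is dispatched there with the single sentence ``By an elementary inspection of the graph of $y=D(k)$, we find the following properties.'' What you supply is precisely the rigorization that sentence hides. Your key device --- factoring $D'(k)=\cos k\,\bigl(2+Vk^{-2}\bigr)\bigl(g(k)-\tan k\bigr)$ away from the zeros of $\cos k$, checking that no critical point sits at $k=(n+\tfrac12)\pi$, and using $g'(k)<1\le\sec^2 k$ (equivalent to $0<4k^4+6Vk^2$) to make $\tan k-g(k)$ strictly increasing on each ascending branch --- settles the one point where ``inspection of the graph'' is genuinely not an argument, namely uniqueness of $l_n$ \emph{uniformly} in the coupling $V>0$; as you note, naive monotonicity of $g$ fails for large $V$ since $g$ increases on $(0,\sqrt{V/2})$. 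Your equation $\tan k=Vk/(2k^2+V)$ is moreover exactly the reformulation the paper itself uses later, in the proof of Proposition \ref{proposition_knln}, where $l_n$ is characterized by the fixed point of $h\mapsto\arctan\bigl(Vk/(2k^2+V)\bigr)$; so your uniqueness argument dovetails with the asymptotics the paper derives from the same identity. The sign-chasing for (ii) and (iv), the bound $|D(l_n)|>2$, and the ordering $n\pi<l_n<k_n$ are all carried out correctly for $n\ge1$.

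One repair is needed at $n=0$. Part (ii) includes $n=0$ and part (iv) uses $l_0=0$, but your sign-chase is anchored at ``$D'(n\pi)$ has sign $(-1)^n$,'' which fails there: $D(0)=2+V\ne 2$, and $D'(0+)=0$, since $D(k)=(2+V)-\bigl(1+\tfrac{V}{6}\bigr)k^2+O(k^4)$ gives $D'(k)=-\bigl(2+\tfrac{V}{3}\bigr)k+O(k^3)$. Worse, on $(0,\pi/2)$ the difference $\tan k-g(k)$ starts at the value $0$, not at a negative value, so your phrase ``strictly increasing from a negative value'' would, applied literally at $n=0$, predict a spurious critical point in $(0,\pi/2)$. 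The fix uses only your own lemma: since $\tan k-g(k)$ vanishes at $k=0$ and is strictly increasing on $(0,\pi/2)$, it is positive there, whence $D'<0$ on $(0,\pi/2)$; on $[\pi/2,\pi)$ one has $\tan k\le 0<g(k)$ and $\cos k\le 0$ (with $D'(\pi/2)<0$ computed directly), so again $D'<0$. Hence $D$ decreases strictly from $2+V$ to $-2$ on $[0,\pi]$, crosses the level $2$ exactly once (settling (ii) for $n=0$), and has no critical point in $(l_0,l_1)=(0,l_1)$ (settling the $n=0$ case of (iv)). With this two-line addendum your proof is complete.
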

{\bf Remark.}
When $V<0$, 
we denote the solution to $D(k)=2\cdot (-1)^n$ in $((n-1)\pi,n\pi)$ by $k_n$,
and the solution to $D'(k)=0$ in $((n-1)\pi,n\pi)$ by $l_n$,
for $n=2,3,\ldots$.
\begin{figure}[htbp]
 \begin{center}
\begin{tabular}{cc}
\begin{minipage}[c]{7cm}
  \includegraphics[width=7cm]{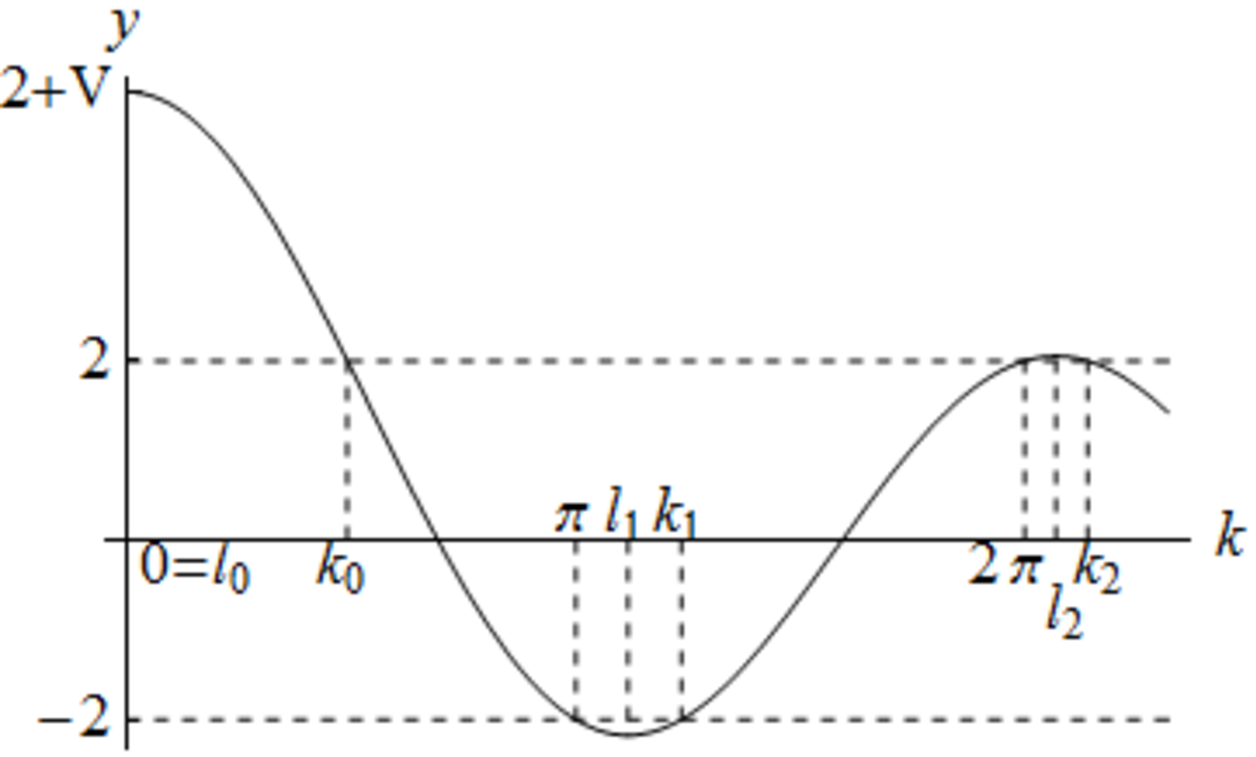}
 \caption{The graph of $y=D(k)$ when $V$ is positive.}
\end{minipage} 
& 
\begin{minipage}[c]{7cm}
  \includegraphics[width=7cm]{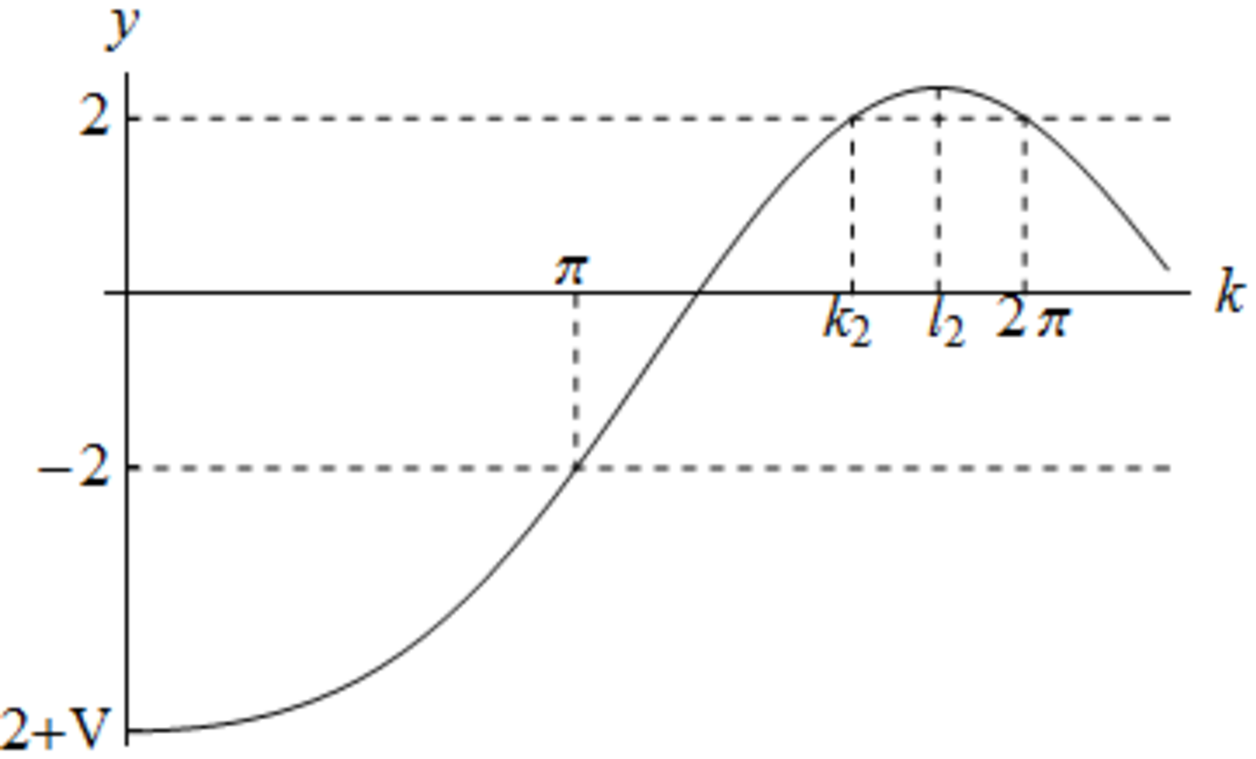}
 \caption{The graph of $y=D(k)$ when $V$ is negative.}
\end{minipage} 
\end{tabular}
 \end{center} 
\end{figure}

When $V>0$,
the spectrum of $H$ is given as
\begin{equation}
\label{bloch13}
 \sigma(H)=\bigcup_{n=1}I_n,\quad I_n=[k_{n-1}^2,(n\pi)^2],
\end{equation}
by (\ref{bloch11}) and Proposition \ref{proposition_discriminant}.
The closed interval $I_n$ is called the \textit{$n$-th band}.
By the expression (\ref{bloch13}),
the band gap $((n\pi)^2,k_n^2)$ is non-empty for every $n=1,2,\ldots$.
Proposition \ref{proposition_discriminant} also implies
the function $y=D(k)$ ($l_{n-1}\leq k \leq l_n$)
has the unique inverse function $k=D^{-1}(y)$.
Then the \textit{band function} $\lambda_n(\theta)$
is defined by
\begin{equation}
\label{band_function}
\lambda_n(\theta) =(k(\theta))^2,\quad
k(\theta)=D^{-1}(2\cos\theta)\quad
(k_{n-1}\leq k(\theta) \leq n\pi).
\end{equation}
By definition,
the band function $\lambda_n(\theta)$ is 
a real-analytic, periodic, and even function with respect to 
$\theta\in \mathbf{R}$.
The $n$-th band $I_n$ is the range of the band function $\lambda_n$.

Let $P_n$ be the spectral projection for the self-adjoint 
operator $H$ corresponding to the $n$-th band $I_n$.
The spectral theorem implies
\begin{displaymath}
 e^{-itH} = \textrm{s-}\lim_{N\to \infty} \sum_{n=1}^N P_n e^{-itH},
\end{displaymath}
where $\textrm{s-}\lim$ means the strong limit
in $L^2(\mathbf{R})$.
Let $K_{n,t}(x,y)$ be the integral kernel of the operator $P_n e^{-itH}$,
that is,
\begin{displaymath}
 P_n e^{-itH}u(x) = \textrm{s-}\lim_{N\to\infty}
\int_{-N}^N K_{n,t}(x,y) u(y) dy
\end{displaymath}
for $u\in L^2(\mathbf{R})$.
Let us calculate $K_{n,t}(x,y)$ more explicitly.
\begin{proposition}
\label{proposition_integral_kernel}
Assume $V>0$.
For $n=1,2,\ldots$ and $\theta\in \mathbf{R}$,
let $k=k(\theta)$ given by (\ref{band_function}).
Let $\varphi_{n,\theta}$ be the Bloch wave function
defined by (\ref{bloch05}) and (\ref{bloch10}) 
with $k=k(\theta)$.
Then, for any $x,y\in (0,1)$ and $j,m\in \mathbf{Z}$,
we have
\begin{equation}
 K_{n,t}(x+j,y+m)
=
\frac{1}{2\pi}
\int_{-\pi}^{\pi}
e^{-it k^2+ i (j-m) \theta}
\Phi_n(\theta,x,y)\cdot \frac{dk}{d\theta}
d\theta,\label{bloch16}
\end{equation}
\begin{eqnarray*}
\Phi_n(\theta,x,y)
&=&
\frac{\sin k}{\sin \theta}
\left(
\cos kx + \frac{V}{2k}\sin kx
\right)
\left(
\cos ky + \frac{V}{2k}\sin ky
\right)\\
&&\hspace{.5cm}+ i \sin k(x-y)
+
\frac{\sin \theta}{\sin k}\sin kx \sin ky,
\end{eqnarray*}
\begin{eqnarray}
 \frac{dk}{d\theta}
= \frac{-2\sin\theta}{D'(k)}
=
\frac{-2\sin\theta}{-2\sin k + V (k^{-1}\cos k - k^{-2}\sin k)}.
\label{bloch17.5}
\end{eqnarray}
\end{proposition}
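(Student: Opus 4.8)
The plan is to diagonalise $H$ with the Floquet--Bloch transform and then read off the kernel of $P_n e^{-itH}$ fibre by fibre. First I would introduce the unitary map $U:L^2(\mathbf{R})\to\int^\oplus_{[-\pi,\pi]}L^2(0,1)\,\frac{d\theta}{2\pi}$ defined by $(Uu)(\theta,x)=\sum_{j\in\mathbf{Z}}e^{-ij\theta}u(x+j)$, which carries $H$ into $\int^\oplus H(\theta)\,\frac{d\theta}{2\pi}$, where $H(\theta)$ is $-d^2/dx^2$ on the cell $(0,1)$ with the point interaction at the endpoint and the quasi-periodic condition (\ref{bloch04}). By Proposition \ref{proposition_blochwave}, $H(\theta)$ has the simple eigenvalue $\lambda_n(\theta)=k(\theta)^2$ with eigenfunction the restriction of $\varphi_{n,\theta}$ to $(0,1)$. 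Hence in the Bloch representation $P_n$ is the rank-one projection onto the normalised eigenfunction $\psi_{n,\theta}=\varphi_{n,\theta}/\|\varphi_{n,\theta}\|_{L^2(0,1)}$, while $e^{-itH}$ acts as multiplication by $e^{-it\lambda_n(\theta)}=e^{-itk^2}$ in each fibre. (As an alternative I could route the same computation through Stone's formula, computing the jump of the explicit resolvent kernel across $I_n$.)

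Transporting this back through $U^*$, for $x,y\in(0,1)$ the kernel becomes
\[
K_{n,t}(x,y)=\frac{1}{2\pi}\int_{-\pi}^\pi e^{-itk^2}\,\frac{\varphi_{n,\theta}(x)\overline{\varphi_{n,\theta}(y)}}{N(\theta)}\,d\theta,\qquad N(\theta)=\int_0^1|\varphi_{n,\theta}(x)|^2\,dx,
\]
and the quasi-periodicity $\varphi_{n,\theta}(x+j)=e^{ij\theta}\varphi_{n,\theta}(x)$ inserts the factor $e^{i(j-m)\theta}$ into $K_{n,t}(x+j,y+m)$, matching (\ref{bloch16}). It then remains to establish the fibre identity $\varphi_{n,\theta}(x)\overline{\varphi_{n,\theta}(y)}/N(\theta)=\Phi_n(\theta,x,y)\,dk/d\theta$. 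The Jacobian is immediate: differentiating $D(k)=2\cos\theta$ gives $D'(k)\,dk/d\theta=-2\sin\theta$, which is exactly (\ref{bloch17.5}).

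For the fibre identity I would insert (\ref{bloch05}) and (\ref{bloch10}), which reduce to $\varphi_{n,\theta}(x)=k^{-1}\bigl(\sin k(x-1)-e^{i\theta}\sin kx\bigr)$. Forming the product and using $2\cos\theta=D(k)$ splits $\varphi_{n,\theta}(x)\overline{\varphi_{n,\theta}(y)}$ into a real symmetric part and an imaginary antisymmetric part; a product-to-sum computation collapses the latter to $k^{-2}\sin\theta\,\sin k\,\sin k(x-y)$. Integrating $|\varphi_{n,\theta}|^2$ over $(0,1)$ and substituting $\cos\theta=\cos k+\tfrac{V}{2k}\sin k$ yields the closed form $N(\theta)=-\tfrac12 k^{-2}\sin k\,D'(k)$. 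Dividing, the antisymmetric part becomes precisely $i\sin k(x-y)\,dk/d\theta$, the middle term of $\Phi_n\,dk/d\theta$. For the symmetric part I substitute $\tfrac{V}{2k}=(\cos\theta-\cos k)/\sin k$ and verify the trigonometric identity $\sin^2 k\,P(x)P(y)+\sin^2\theta\,\sin kx\sin ky=k^2\,\Re\bigl(\varphi_{n,\theta}(x)\overline{\varphi_{n,\theta}(y)}\bigr)$ with $P(x)=\cos kx+\tfrac{V}{2k}\sin kx$; comparing the coefficients of $\cos kx\cos ky$, of $\cos kx\sin ky+\sin kx\cos ky$, and of $\sin kx\sin ky$ shows both sides agree, producing the first and third terms of $\Phi_n\,dk/d\theta$.

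The genuinely delicate step is the first one: justifying the direct-integral diagonalisation with the correct measure $\tfrac{d\theta}{2\pi}$ and, above all, identifying the normalisation factor $1/N(\theta)$, since the closed form $N(\theta)=-\tfrac12 k^{-2}\sin k\,D'(k)$ is what ultimately generates the Jacobian $dk/d\theta$ in the statement. The remainder is an explicit but lengthy trigonometric reduction; the only subtlety there is that the factors $1/\sin\theta$ and $1/\sin k$ visible in $\Phi_n$ are removable, which I would dispatch by noting that both $\Phi_n\,dk/d\theta$ and $N(\theta)$ extend analytically across the band edges $\sin\theta=0$ and across $k=n\pi$ where $\sin k=0$.
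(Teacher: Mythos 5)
Your proposal is correct, and it reaches (\ref{bloch16}) by a route that differs from the paper's at the decisive step. The shared skeleton---the direct-integral decomposition giving $K_{n,t}(x,y)=\frac{1}{2\pi}\int_{-\pi}^\pi e^{-it\lambda_n(\theta)}\varphi_{n,\theta}(x)\overline{\varphi_{n,\theta}(y)}\,N(\theta)^{-1}d\theta$, the quasi-periodicity producing the factor $e^{i(j-m)\theta}$, and the reduced form $\varphi_{n,\theta}(x)=k^{-1}(\sin k(x-1)-e^{i\theta}\sin kx)$---is exactly the paper's (\ref{bloch14}) and (\ref{bloch21}). Where you genuinely diverge is in how the Jacobian $dk/d\theta$ is generated: the paper never computes the normalization integral $N(\theta)$ at all, but instead proves the Feynman--Hellmann-type identity $k\,dk/d\theta=\frac{i}{2}W(u_{n,\theta},u_{n,-\theta})$ (Lemma \ref{lemma_wronskian}), so that in the ratio $\varphi_{n,\theta}(x)\varphi_{n,-\theta}(y)/W(\varphi_{n,\theta},\varphi_{n,-\theta})$ the normalization cancels and only the constant Wronskian, evaluated from the coefficients in (\ref{bloch20}) as $-2ik^{-1}\sin k\sin\theta$, is needed. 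You instead evaluate $N(\theta)=\int_0^1|\varphi_{n,\theta}|^2dx$ in closed form; your claimed value $N(\theta)=-\frac{1}{2}k^{-2}\sin k\,D'(k)$ is correct (direct computation using $\sin^2\theta=1-(\cos k+\frac{V}{2k}\sin k)^2$ gives $k^2N(\theta)=\sin^2k-\frac{V\sin k\cos k}{2k}+\frac{V\sin^2k}{2k^2}$, which is $-\frac{1}{2}\sin k\,D'(k)$ by (\ref{d1k})), and it is precisely equivalent to the paper's (\ref{wronskian}) combined with (\ref{bloch17.5}); once it is in hand, your symmetric/antisymmetric split of $\varphi_{n,\theta}(x)\overline{\varphi_{n,\theta}(y)}$ yields the three terms of $\Phi_n\,dk/d\theta$ with almost no further work. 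Your route is more elementary---it avoids differentiating the eigenvalue problem in $\theta$, checking that $\partial_\theta v_{n,\theta}$ still satisfies the interface conditions (\ref{wr02})--(\ref{wr04}), and the integration by parts across the $\delta$ points---but it succeeds only because this model is explicitly solvable; the paper's Wronskian argument is structurally cleaner and would survive in settings where the normalization integral admits no closed form. Your closing points are also sound: the measure $d\theta/(2\pi)$ is the standard normalization of the Floquet transform (the paper likewise delegates this to Reed--Simon, and mentions your Stone-formula alternative in the remark following its proof), and the apparent singularities of $\Phi_n$ at $\sin\theta=0$ and $\sin k=0$ are removable in the product $\Phi_n\,dk/d\theta$, as the paper's amplitude formula (\ref{amp01}) makes explicit.
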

{\bf Remark.}
When $V<0$, the same result holds for $n=2,3,\ldots$,
but the range of the function $k=k(\theta)$ is $[(n-1)\pi, k_n]$.

Before the proof, we prepare a lemma
about the Wronskian of the Bloch waves.
The Wronskian of two functions $\varphi$ and $\psi$ is defined as
\begin{displaymath}
 W(\varphi,\psi)=\varphi \psi'-\varphi'\psi.
\end{displaymath}

\begin{lemma}
\label{lemma_wronskian}
\begin{enumerate}
 \item 
For any two solutions $\varphi$ and $\psi$
to (\ref{bloch01})-(\ref{bloch03}),
the Wronskian $W(\varphi,\psi)$
is a constant function on $\mathbf{R}\setminus\mathbf{Z}$.

 \item 
Let $k=k(\theta)$ and $\varphi_{n,\theta}$
given in Proposition \ref{proposition_integral_kernel}.
Let $u_{n,\theta}$ be the normalized Bloch wave function
defined by
\begin{displaymath}
 u_{n,\theta}= \varphi_{n,\theta}/C_{n,\theta},\quad
C_{n,\theta} = \left(\int_0^1 |\varphi_{n,\theta}(x)|^2\,dx \right)^{1/2}.
\end{displaymath}
Then we have
\begin{equation}
\label{conjugate_u}
 \overline{u_{n,\theta}(x)}=u_{n,-\theta}(x),
\end{equation}
\begin{equation}
\label{wronskian}
 k \frac{d k}{d \theta}
= 
\frac{i}{2} W(u_{n,\theta}, u_{n,-\theta}),
\end{equation}
\end{enumerate}
\end{lemma}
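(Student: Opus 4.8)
The plan is to prove the three assertions in order, since part (i) is used in the proof of (\ref{wronskian}). For part (i) I would first show $W(\varphi,\psi)$ is locally constant and then that it has no jumps. On each open interval $(j,j+1)$ both functions solve $-u''=\lambda u$, so
$$\frac{d}{dx}W(\varphi,\psi)=\varphi\psi''-\varphi''\psi=-\lambda\varphi\psi+\lambda\varphi\psi=0,$$
and $W(\varphi,\psi)$ is constant on each such interval. At an integer $j$ the continuity condition (\ref{bloch02}) gives $\varphi(j\pm)=\varphi(j)$ and $\psi(j\pm)=\psi(j)$, while (\ref{bloch03}) gives $\varphi'(j+)-\varphi'(j-)=V\varphi(j)$ and the analogue for $\psi$. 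Substituting these into $W(\varphi,\psi)(j+)-W(\varphi,\psi)(j-)$, the two $V$-terms cancel, so $W(\varphi,\psi)$ is continuous across every integer and hence constant on $\mathbf{R}\setminus\mathbf{Z}$.

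For (\ref{conjugate_u}) the key observation is that $k=k(\theta)=D^{-1}(2\cos\theta)$ is real and even in $\theta$, so $k(-\theta)=k(\theta)$. Conjugating the explicit coefficients (\ref{bloch10}) only sends $e^{i\theta}\mapsto e^{-i\theta}$, which is exactly the replacement $\theta\mapsto-\theta$; reading off (\ref{bloch05}) interval by interval (the prefactor $e^{ij\theta}$ becomes $e^{-ij\theta}$) yields $\overline{\varphi_{n,\theta}(x)}=\varphi_{n,-\theta}(x)$ for all $x$. Taking absolute values shows $C_{n,-\theta}=C_{n,\theta}$ is real, and dividing gives (\ref{conjugate_u}).

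The substance is in (\ref{wronskian}), which I would obtain by a Feynman--Hellmann / Green's-identity computation rather than by evaluating the Wronskian and the normalization constant separately. Write $\lambda=\lambda_n(\theta)$, $u=u_{n,\theta}$, and let a dot denote $\partial_\theta$. Differentiating $-u''=\lambda u$ in $\theta$ gives $-\dot u''=\dot\lambda\,u+\lambda\dot u$, and differentiating (\ref{bloch02})--(\ref{bloch03}) in $\theta$ shows $\dot u$ inherits the same continuity and jump conditions. Pairing against $\bar u=u_{n,-\theta}$, which solves the homogeneous problem with the same $\lambda$, and applying Green's identity on $(0,1)$, the interior terms collapse by the normalization $\int_0^1|u|^2\,dx=1$ to
$$-\dot\lambda=\bigl[W(\bar u,\dot u)\bigr]_0^1.$$
To evaluate the right-hand side I would differentiate the quasi-periodicity $u(x+1)=e^{i\theta}u(x)$ in both $x$ and $\theta$; the $\theta$-derivative of the phase produces the extra term $W(\bar u,\dot u)(x+1)-W(\bar u,\dot u)(x)=iW(\bar u,u)(x)$. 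Since $\bar u$ and $u$ both solve the homogeneous problem, part (i) makes $W(\bar u,u)$ constant, so $[W(\bar u,\dot u)]_0^1=iW(\bar u,u)=-iW(u_{n,\theta},u_{n,-\theta})$. Combining gives $\dot\lambda=iW(u_{n,\theta},u_{n,-\theta})$, and since $\dot\lambda=\frac{d}{d\theta}k^2=2k\,dk/d\theta$, dividing by $2$ yields (\ref{wronskian}).

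The delicate step is the boundary-term bookkeeping. One must verify that $\dot u$ genuinely satisfies the differentiated interface conditions, so that both $W(\bar u,u)$ and $W(\bar u,\dot u)$ are continuous across integers and part (i) applies at $x=0$; and one must track the factor of $i$ coming from $\partial_\theta e^{i\theta}$, which is exactly what converts the group velocity $\dot\lambda$ into the Wronskian current. I would also note that the computation needs $\theta$ with $\sin\theta\neq0$ and $\sin k\neq0$ for smoothness of $u_{n,\theta}$ in $\theta$, with the band edges handled by continuity if needed. As a consistency check (and an alternative route), evaluating $W(\varphi_{n,\theta},\varphi_{n,-\theta})$ directly at $x\to0+$ from (\ref{bloch05})--(\ref{bloch10}) gives $-2ik^{-1}\sin k\,\sin\theta$, so (\ref{wronskian}) is equivalent to the explicit normalization identity $C_{n,\theta}^2=-\sin k\,D'(k)/(2k^2)$; the Green's-identity argument is preferable precisely because it avoids computing this integral.
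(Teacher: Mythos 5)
Your proof is correct, and for the key identity (\ref{wronskian}) it takes a genuinely different route from the paper's, so a comparison is worthwhile. Both arguments are Feynman--Hellmann-type computations in $\theta$, but they place the $\theta$-dependence differently. The paper gauges it into the operator: writing $u_{n,\theta}=e^{ix\theta}v_{n,\theta}$, it works with the \emph{periodic} function $v_{n,\theta}$ and the operator $D(\theta)=\frac{1}{i}\frac{d}{dx}+\theta$, differentiates the quadratic forms $(v,v)=1$ and $(v,D(\theta)^2v)=k^2$ in $\theta$, and---since $\partial_\theta v_{n,\theta}$ satisfies the same periodic and interface conditions---integrates by parts with no boundary terms at all, arriving at $k\,\partial_\theta k=(v,D(\theta)v)$, which is then symmetrized and identified with $\frac{i}{2}W(u_{n,\theta},u_{n,-\theta})$ using part (i). You instead keep the quasi-periodic $u_{n,\theta}$, differentiate the eigenvalue equation itself, and pair with $\bar u=u_{n,-\theta}$ via Green's identity on $(0,1)$; the factor of $i$ then emerges from $\partial_\theta e^{i\theta}$ in the boundary term, via $W(\bar u,\dot u)(x+1)-W(\bar u,\dot u)(x)=iW(\bar u,u)(x)$, which is exactly right, and the antisymmetry $W(\bar u,u)=-W(u_{n,\theta},u_{n,-\theta})$ produces the correct sign $\dot\lambda=iW(u_{n,\theta},u_{n,-\theta})$. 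Your bookkeeping checks out: the $\lambda\bar u\dot u$ terms cancel pointwise, so the normalization alone yields $-\dot\lambda$ on the left; one small precision is that part (i) as stated does not apply to $W(\bar u,\dot u)$, since $\dot u$ solves an \emph{inhomogeneous} equation, but the determinant computation inside the proof of (i) uses only the interface conditions (which $\dot u$ inherits by differentiating (\ref{bloch02})--(\ref{bloch03})), and that is all you need for continuity of $W(\bar u,\dot u)$ across $x=0$---a subtlety you correctly flagged. What each approach buys: the paper's gauge transformation eliminates boundary terms and makes the integration by parts trivially justifiable, at the cost of introducing the auxiliary operator $D(\theta)$; your version is more direct and exhibits the group velocity as a Wronskian current through the cell boundary, at the cost of the delicate boundary bookkeeping. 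Your closing consistency check, that (\ref{wronskian}) is equivalent to $C_{n,\theta}^2=-\sin k\,D'(k)/(2k^2)$, agrees with the paper's explicit computation (\ref{bloch20}) and is a nice independent confirmation; parts (i) and (\ref{conjugate_u}) are proved exactly as in the paper.
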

\begin{proof}
(i) It is well-known that $W(\varphi,\psi)$ is constant on $(j,j+1)$
for every $j\in \mathbf{Z}$,
since $\varphi$ and $\psi$ are solutions to (\ref{bloch01}).
Moreover, (\ref{bloch02}) and (\ref{bloch03}) imply
\begin{eqnarray*}
 W(\varphi,\psi)(j+)
&=&\det
\begin{pmatrix}
 \varphi(j+) & \psi(j+)\cr
 \varphi'(j+) & \psi'(j+)\cr
\end{pmatrix}\\
&=&
\det
\begin{pmatrix}
 1& 0\cr
 V& 1
\end{pmatrix}
\det
\begin{pmatrix}
 \varphi(j-) & \psi(j-)\cr
 \varphi'(j-) & \psi'(j-)\cr
\end{pmatrix}
= W(\varphi,\psi)(j-)
\end{eqnarray*}
for every $j\in \mathbf{Z}$.
Thus $W(\varphi,\psi)$ is constant on $\mathbf{R}\setminus\mathbf{Z}$.

(ii)
The first statement (\ref{conjugate_u}) follows immediately
from the definition (\ref{bloch05}) and (\ref{bloch10}).
We introduce an auxiliary function
$v_{n,\theta}(x)$
by 
$u_{n,\theta}(x)=e^{ix\theta}v_{n,\theta}(x)$.
Then we have
\begin{equation}
 D(0)u_{n,\theta}= e^{ix\theta}D(\theta)v_{n,\theta},
\quad
D(0)= \frac{1}{i}\frac{d}{dx},
\quad
D(\theta) = \frac{1}{i}\frac{d}{dx}+ \theta.
\label{wr00}
\end{equation}
Since $\varphi = \varphi_{n,\theta}$ satisfies 
(\ref{bloch01})-(\ref{bloch03}) and (\ref{bloch04}),
it is easy to check
\begin{eqnarray}
&& 
D(\theta)^2 v_{n,\theta}(x) = 
k^2  v_{n,\theta}(x)\quad (x \in \mathbf{R}\setminus \mathbf{Z}),
\label{wr01}
\\
&& v_{n,\theta}(j+)=v_{n,\theta}(j-)\quad (j \in \mathbf{Z}),\label{wr02}\\
&& v_{n,\theta}'(j+)-v_{n,\theta}'(j-)=V v_{n,\theta}(j)\quad (j \in\mathbf{Z}),\label{wr03}\\
&& v_{n,\theta}(x+1)=v_{n,\theta}(x)
\quad (x \in \mathbf{R}\setminus \mathbf{Z}).\label{wr04}
\end{eqnarray}
In this proof, we denote $(\cdot,\cdot)$ the inner product in $L^2((0,1))$,
that is, $(u,v)=\int_0^1 \overline{u(x)}v(x)dx$.
Then we have from (\ref{wr01})
\begin{eqnarray}
(v_{n,\theta},v_{n,\theta})=1,\label{wr05}
\\
(v_{n,\theta}, D(\theta)^2 v_{n,\theta})=k^2.
\label{wr06}
\end{eqnarray}
By differentiating both sides of (\ref{wr05})
with respect to $\theta$, we have
\begin{equation}
(\partial_\theta v_{n,\theta}, v_{n,\theta})
+
(v_{n,\theta},\partial_\theta v_{n,\theta})
=0,
\label{wr07}
\end{equation}
where $\partial_\theta=\partial/\partial\theta$.
By differentiating both sides of (\ref{wr06})
with respect to $\theta$, we have
 \begin{equation}
(\partial_\theta v_{n,\theta}, D(\theta)^2 v_{n,\theta})
+
2
(v_{n,\theta}, D(\theta) v_{n,\theta})
+ 
(v_{n,\theta}, D(\theta)^2 \partial_\theta v_{n,\theta})
= 2k \partial_\theta k.
\label{wr08}
\end{equation}
By differentiating (\ref{wr02})-(\ref{wr04}) 
with respect to $\theta$,
we see that the derivative $\partial_\theta v_{n,\theta}$
also satisfies the same relations (\ref{wr02})-(\ref{wr04}). 
Then we have by integration by parts
\begin{equation}
\label{wr09}
(v_{n,\theta}, D(\theta)^2 \partial_\theta v_{n,\theta})
=
(D(\theta)^2 v_{n,\theta},  \partial_\theta v_{n,\theta}).
\end{equation}
By (\ref{wr01}), (\ref{wr07}) and (\ref{wr09}),
the first term and the third in the left hand side of (\ref{wr08})
cancel with each other. Thus we have
\begin{eqnarray*}
 k \partial_\theta k 
&=& (v_{n,\theta}, D(\theta) v_{n,\theta})\\
&=& \frac{1}{2}\left(
(v_{n,\theta}, D(\theta)v_{n,\theta})
+
(D(\theta) v_{n,\theta}, v_{n,\theta})\right)\\
&=& \frac{1}{2}\left(
(u_{n,\theta}, D(0)u_{n,\theta})
+
(D(0) u_{n,\theta}, u_{n,\theta})\right)\\
&=&
\frac{i}{2}
\int_0^1
\left(-
\overline{u_{n,\theta}(x)}u_{n,\theta}'(x)
+
\overline{u_{n,\theta}'(x)}u_{n,\theta}(x)
\right)dx\\
&=& \frac{i}{2}W(u_{n,\theta},u_{n,-\theta}).
\end{eqnarray*}
Here we use 
(\ref{wr02})-(\ref{wr04}) in the second equality,
(\ref{wr00}) in the third,
and $\overline{u_{n,\theta}}=u_{n,-\theta}$ in the last.
\end{proof}

\begin{proof}[Proof of Proposition \ref{proposition_integral_kernel}]
First, the Floquet-Bloch theory tells us
\begin{eqnarray}
K_{n,t}(x,y)
=\frac{1}{2\pi}
\int_{-\pi}^\pi e^{-it\lambda_n(\theta)}u_{n,\theta}(x)
\overline{u_{n,\theta}(y)}\,d\theta
\label{bloch14}
\end{eqnarray}
for any $x,y\in \mathbf{R}$.
Actually, $u_{n,\theta}$ is the normalized eigenfunction
of $H_\theta=H|_{\mathscr{H}_\theta}$, where 
$\mathscr{H}_\theta$ is the Hilbert space defined by
\begin{equation}
\mathscr{H}_\theta = \{u\in L^2_{\rm loc}(\mathbf{R})
:
u(x+1)=e^{i\theta}u(x)
\},
\quad
\|u\|_{\mathscr{H}_\theta}^2=\int_0^1|u(x)|^2\,dx.
\end{equation}
Since the whole operator $H$ has the direct integral decomposition
$H = \int_{[-\pi,\pi)}^\oplus H_\theta\, d\theta/(2\pi)$,
the formula (\ref{bloch14}) follows from the eigenfunction expansion
for $H_\theta$
(for the detail, see e.g.\ Reed--Simon \cite{ReSi4}).

Let $x,y\in (0,1)$ and $j,m\in \mathbf{Z}$.
Since $\lambda_n(\theta)=k(\theta)^2$,
$u_{n,\theta}(x+j)=e^{ij\theta}u_{n,\theta}(x)$,
and $\overline{u_{n,\theta}(y)}=u_{n,-\theta}(y)$,
we have from (\ref{bloch14})
\begin{eqnarray}
 K_{n,t}(x+j,y+m)
= 
\frac{1}{2\pi}
\int_{-\pi}^{\pi}
e^{-itk^2+i(j-m)\theta}
u_{n,\theta}(x)u_{n,-\theta}(y)
\left(\frac{dk}{d\theta}\right)^{-1}
\cdot\frac{dk}{d\theta}
d\theta.
\label{bloch18}
\end{eqnarray}
Moreover, we have by (\ref{wronskian}) 
\begin{eqnarray}
 \left(\frac{dk}{d\theta}\right)^{-1}
u_{n,\theta}(x)u_{n,-\theta}(y)
=\frac{2k}{i}\frac{u_{n,\theta}(x)u_{n,-\theta}(y)}{W(u_{n,\theta},u_{n,-\theta})}
=\frac{2k}{i}\frac{\varphi_{n,\theta}(x)\varphi_{n,-\theta}(y)}{W(\varphi_{n,\theta},\varphi_{n,-\theta})}.\label{bloch19}
\end{eqnarray}
The Wronskian 
$W(\varphi_{n,\theta},\varphi_{n,-\theta})$
on the interval $(0,1)$ is calculated as follows.
\begin{eqnarray}
&& W(\varphi_{n,\theta},\varphi_{n,-\theta})\nonumber\\
&=&
\det
\begin{pmatrix}
 A_0(\theta)\cos k x +  B_0(\theta)k^{-1}\sin k x &
 A_0(-\theta)\cos k x +  B_0(-\theta)k^{-1}\sin k x \\
 -k A_0(\theta)\sin k x + B_0(\theta) \cos k x &
 -k A_0(-\theta)\sin k x +  B_0(-\theta)\cos k x 
\end{pmatrix}\nonumber\\
&=&
\det
\begin{pmatrix}
 \cos kx & k^{-1}\sin kx\cr
 -k\sin kx & \cos kx
\end{pmatrix}
\det
\begin{pmatrix}
 A_0(\theta) & A_0(-\theta)\cr
B_0(\theta) & B_0(-\theta)
\end{pmatrix}
\nonumber\\
&=&
A_0(\theta)B_0(-\theta)
-
A_0(-\theta) B_0(\theta)\nonumber\\
&=&
-k^{-1}\sin k (\cos k - e^{-i\theta})
+k^{-1}\sin k (\cos k - e^{i\theta})\nonumber\\
&=&- 2i k^{-1}\sin k \sin\theta.
\label{bloch20}
\end{eqnarray}
By (\ref{bloch08}) and (\ref{bloch09}), we have
\begin{displaymath}
e^{i\theta} = \cos k + \frac{V}{2}k^{-1}\sin k + i \sin\theta.
\end{displaymath}
This equality and (\ref{bloch10}) implies
for $0<x<1$
\begin{eqnarray}
\varphi_{n,\theta}(x)
&=& A_0(\theta)\cos kx + B_0(\theta)\sin kx\nonumber\\
&=& -k^{-1}\sin k \left(\cos kx + \frac{V}{2k}\sin kx\right)
- ik^{-1} \sin\theta \sin kx.
\label{bloch21}
\end{eqnarray}
Substituting (\ref{bloch20}) and (\ref{bloch21}) into (\ref{bloch19}), we have
\begin{eqnarray*}
&& \left(\frac{dk}{d\theta}\right)^{-1}
u_{n,\theta}(x)u_{n,-\theta}(y)\\
&=&
\frac{k^2}{\sin k\sin\theta} \varphi_{n,\theta}(x)\varphi_{n,-\theta}(y)\\
&=&
\frac{1}{\sin k\sin\theta} 
\left(
\sin k \left(\cos kx + \frac{V}{2k}\sin kx\right)
+i \sin\theta \sin kx
\right)\\
&&\hspace{2cm}\cdot
\left(
\sin k \left(\cos ky + \frac{V}{2k}\sin ky\right)
-i \sin\theta \sin ky
\right)\\
&=& \Phi_n(\theta,x,y).
 \end{eqnarray*}
Substituting this equality into (\ref{bloch18}),
we have (\ref{bloch16}).
The derivative (\ref{bloch17.5}) is obtained
by differentiating $2\cos\theta=D(k)=2\cos k+ Vk^{-1}\sin k$.
\end{proof}
{\bf Remark.}
Note that 
$dk/d\theta$ is positive for $0< \theta <\pi$ for odd $n$,
and negative for even $n$.
Substituting
(\ref{bloch19}) into (\ref{bloch18})
and making the change of variable $\lambda=(k(\theta))^2$,
we have for $V>0$
\begin{eqnarray}
 K_{n,t}(x,y)&=&\frac{1}{i\pi}
\int_{-\pi}^\pi 
e^{-it k^2}
\frac{\varphi_{n,\theta}(x)\varphi_{n,-\theta}(y)}
{W(\varphi_{n,\theta},\varphi_{n,-\theta})}
k
\cdot\frac{dk}{d\theta}
d\theta\nonumber\\
&=&
\frac{(-1)^{n-1}}{\pi}
\int_{k_{n-1}^2}^{(n\pi)^2}
e^{-it \lambda}
\Im\left[
\frac{\varphi_{n,\theta}(x)\varphi_{n,-\theta}(y)}
{W(\varphi_{n,\theta},\varphi_{n,-\theta})}
\right]
d\lambda.
\label{another_representation}
\end{eqnarray}
The formula (\ref{another_representation}) can be deduced in another way.
According to the Stone formula (\cite[Theorem VII.13]{ReSi1}),
the spectral measure $E_H$ for the operator $H$ 
is represented as
\begin{equation}
\label{bloch22}
 \frac{d E_H}{d\lambda} = \frac{1}{\pi}\Im R_H(\lambda+ i 0),
\end{equation}
where 
$R_H(\lambda+0)=\lim_{\epsilon\to +0}R(\lambda+i\epsilon)$ 
is the boundary value of the resolvent
$R(\lambda)=(H-\lambda)^{-1}$.
The integral kernel of the resolvent operator $R(\lambda)$ 
for $\lambda\in \rho(H)$ (the resolvent set of $H$) is given as
\begin{equation}
\label{bloch23}
 R(\lambda)(x,y)
=
-\frac{\varphi_+^\lambda(x \vee y)\varphi_-^\lambda(x\wedge y)}{W(\varphi_+^\lambda,\varphi_-^\lambda)},
\end{equation}
where 
$x\vee y=\max(x,y)$,
$x\wedge y=\min(x,y)$,
and $\varphi_\pm^\lambda$ is the solution to
(\ref{bloch01})-(\ref{bloch03}) decaying exponentially
as $x\to \pm \infty$.
The two functions $\varphi_\pm^\lambda$ are given by (\ref{bloch05})
and (\ref{bloch10}) with $e^{i\theta}$ replaced by
the solutions to 
\begin{displaymath}
 z^2 - D(\sqrt{\lambda})z + 1=0,
\end{displaymath}
so that $|z|^{\pm 1}<1$.
By choosing the solution $z$ appropriately,
(\ref{bloch22}) and (\ref{bloch23}) give 
the formula (\ref{another_representation}).

\section{Analysis of band functions}
In this section, 
we analyze the band function $\lambda_n(\theta)$, 
which is explicitly given by
\begin{equation}
\label{ba01}
 \lambda_n(\theta) = (k(\theta))^2, \quad 
k(\theta)=D^{-1}(2\cos \theta),
\end{equation}
especially its asymptotics as $n\to \infty$.
Here $k=D^{-1}(y)$ is the inverse function 
of 
\begin{displaymath}
 y=D(k)=2\cos k + V\frac{\sin k}{k}:
[l_{n-1},l_n]\to [y_{n-1},y_n],
\end{displaymath}
$l_n$ is given in Proposition \ref{proposition_discriminant}
and its remark,
and we put $y_n=D(l_n)$.
By the formula (\ref{ba01}),
we can draw the graphs of $\lambda_n(\theta)$,
\footnote{
The inverse correspondence $\theta = \arccos(D(k)/2)$
is useful for the numerical calculation.
}
which are illustrated in Figure \ref{figure_band_function} in the introduction.
From Figure \ref{figure_band_function},
we notice that
$\lambda=\lambda_n(\theta)$ 
is similar to the parabola $\lambda=\theta^2$
on the interval $[(n-1)\pi, n\pi]$,
except near the edge points.
From this reason, we 
mainly consider the function $\lambda_n(\theta)$ 
on the interval $[(n-1)\pi, n\pi]$,
thereby we can simplify some formulas given below.
Figure \ref{figure_band_function} also suggests us
it is better to analyze $\lambda_n(\theta)$ when its value is near 
the band edge,
and near the band center, separately.

\subsection{Explicit formulas for derivatives}

Our goal is to give an asymptotic bound for the oscillatory integral 
(\ref{bloch16}),
 by using Lemma \ref{lemma_stein} in Section 4.
Then we need lower bounds for the derivatives 
of $\lambda_n(\theta)$ up to the third order,
which are calculated explicitly as follows.
\begin{eqnarray}
 k'(\theta) &=& -D'(k)^{-1}\cdot 2\sin \theta,\label{der1}\\
k''(\theta) &=&
 (D^{-1})''(2\cos\theta)\cdot (2\sin\theta)^2
-
 (D^{-1})'(2\cos\theta)\cdot 2\cos\theta\nonumber\\
&=& - D'(k)^{-3}D''(k)(4-D(k)^2)
-
D'(k)^{-1}D(k),\label{der2}\\
k^{(3)}(\theta)&=&
-(D^{-1})^{(3)}(2\cos\theta)\cdot (2\sin\theta)^3
+ 3\cdot (D^{-1})^{''}(2\cos\theta)\cdot 2\sin\theta \cdot 2\cos\theta
\nonumber\\
&&+(D^{-1})^{'}(2\cos\theta)\cdot 2\sin\theta,\nonumber\\
&=&
\Bigl(
\left(
-3D'(k)^{-5}D''(k)^2 + D'(k)^{-4}D^{(3)}(k)
\right)(4-D(k)^2)\nonumber\\
&&\quad -3D'(k)^{-3}D(k)D''(k) + D'(k)^{-1}
\Bigr)
\cdot 2\sin \theta,
\label{der3}
\\
\lambda_n'(\theta)&=& 
2k(\theta) k'(\theta),\label{der4}\\
\lambda_n''(\theta)&=&
2k'(\theta)^2 + 2k(\theta) k''(\theta),\label{der5}\\
\lambda_n^{(3)}(\theta)&=&
6k'(\theta)k''(\theta) + 2k(\theta) k^{(3)}(\theta),\label{der6}
\end{eqnarray}
where we used the formulas
\begin{eqnarray}
\label{dkformula1}
&& 2\cos\theta=D(k),\ (2\sin\theta)^2=4-D(k)^2,\\
&& 
\label{dkformula2}
(D^{-1})'(D(k))=D'(k)^{-1},\ 
(D^{-1})''(D(k))=-D'(k)^{-3}D''(k),\\
&&(D^{-1})^{(3)}(D(k))=3D'(k)^{-5}D''(k)^2-D'(k)^{-4}D^{(3)}(k).
\nonumber
\end{eqnarray}
The derivatives of $D(k)$ are given as follows.
\begin{eqnarray}
 D(k)&=& 2\cos k + V k^{-1}\sin k,\label{d0k}\\
 D'(k)&=& -2\sin k + V(k^{-1}\cos k-k^{-2}\sin k),
\label{d1k}\\
 D''(k)&=& -2\cos k + V(-k^{-1}\sin k-2k^{-2}\cos k + 2k^{-3}\sin k),
\label{d2k}\\
 D^{(3)}(k)&=&  2\sin k + V(-k^{-1}\cos k+3k^{-2}\sin k + 6k^{-3}\cos k - 6k^{-4}\sin k),
\label{d3k}\\
 D^{(4)}(k)&=&  2\cos k + V(k^{-1}\sin k+4k^{-2}\cos k -12k^{-3}\sin k - 24k^{-4}\cos k
\nonumber\\
&&
+ 24 k^{-5} \sin k),\label{d4k}\\
 D^{(5)}(k)&=&  -2\sin k + V(k^{-1}\cos k-5k^{-2}\sin k -20k^{-3}\cos k + 60k^{-4}\sin k\nonumber\\
&&
+ 120 k^{-5} \cos k-120k^{-6}\sin k),
\label{d5k}
\\
 D^{(6)}(k)&=&  -2\cos k + V(-k^{-1}\sin k-6k^{-2}\cos k +30k^{-3}\sin k + 120k^{-4}\cos k\nonumber\\
&&
- 360 k^{-5} \sin k -720k^{-6}\cos k + 720 k^{-7}\sin k).
\label{d6k}
\end{eqnarray}

By the formulas (\ref{der1})-(\ref{der6}),
we can write $\lambda_n^{(j)}(\theta)$ ($j=1,2,3$)
as functions of $k$, which is useful for numerical calculation.
The graphs of $\lambda_n'(\theta)$ and $\lambda_n''(\theta)$
on the interval $(n-1)\pi\leq \theta\leq n \pi$ 
($n=1,2,3,4,5$)
are illustrated 
in Figure \ref{figure_lambda1} and 
Figure \ref{figure_lambda2},
respectively.
From Figure \ref{figure_lambda2},
we see that the solution $\theta_0$ to $\lambda_n''(\theta)=0$ exists
nearby $\theta=n\pi$.
Later we give the asymptotics of $\theta_0$ as $n\to \infty$ in 
Proposition \ref{proposition_inflection}.
\begin{figure}[htbp]
 \begin{center}
  \begin{tabular}{cc}
\begin{minipage}[c]{7cm}
 \includegraphics[width=7cm]{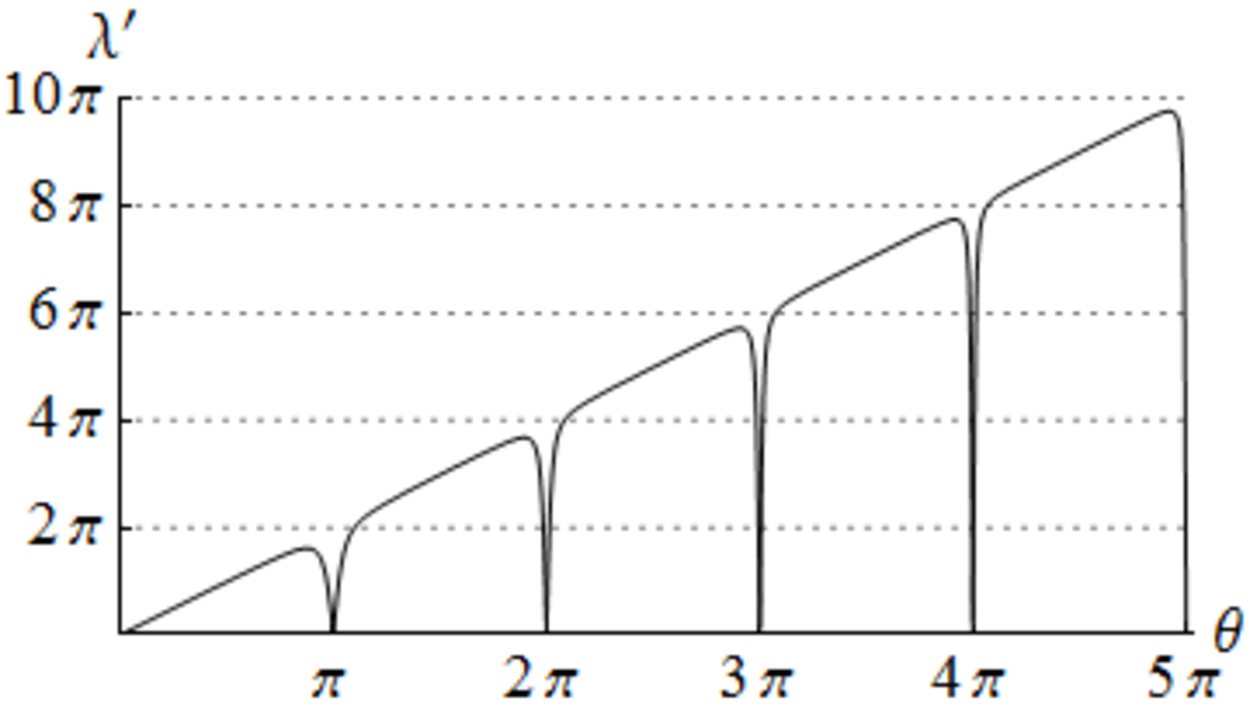}
\caption{The graphs of $\lambda_n'(\theta)$ 
on $[(n-1)\pi,  n\pi]$
($n=1,2,3,4,5$) for $V>0$.}
\label{figure_lambda1}
\end{minipage}
   & 
\begin{minipage}[c]{7cm}
 \includegraphics[width=7cm]{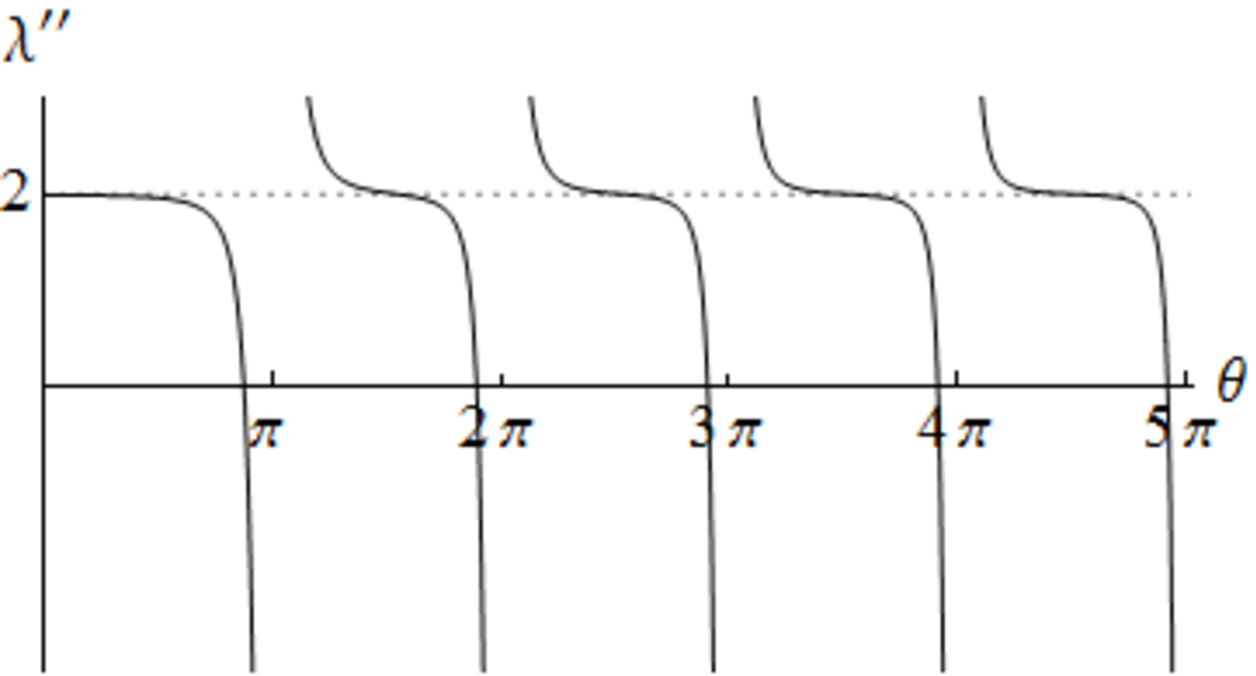}
\caption{The graphs of $\lambda_n''(\theta)$ 
on $[(n-1)\pi,  n\pi]$
($n=1,2,3,4,5$) for $V>0$.}
\label{figure_lambda2}
\end{minipage}
  \end{tabular}
 \end{center}
\end{figure}

Though the formulas (\ref{der1})-(\ref{der6}) are explicit,
it is still not easy to obtain the precise lower bound 
for the derivatives of $\lambda_n(\theta)$,
especially when $\lambda_n(\theta)$ is near the band edge.
For this reason, we employ the \textit{Puiseux expansion}
of the inverse function $D^{-1}(y)$, 
which makes our analysis clear.
This kind of expansion is studied in the classical work by 
Kohn \cite{Kohn}.

\subsection{Asymptotics of $k_n$ and $l_n$}
First let us analyze the asymptotics of $k_n$ and $l_n$
given in Proposition \ref{proposition_discriminant}.
A related result is written in 
Albeverio et.\ al.\ \cite[Theorem 2.3.3]{Alb2nd}.
\begin{proposition}
 \label{proposition_knln}
Let $k_n$ and $l_n$
as in Proposition \ref{proposition_discriminant}
and its remark.
Then,
\begin{equation}
 \label{asymptotic_kn}
k_n = 
n\pi 
+ V(n\pi)^{-1}
-\left(V^2+\frac{V^3}{12}\right)(n\pi)^{-3}
+ O (n^{-5})\qquad (n\to \infty),
\end{equation}
\begin{equation}
 \label{asymptotic_ln}
l_n = 
n\pi 
+ \frac{V}{2}(n\pi)^{-1}
-\left(\frac{V^2}{2}+\frac{V^3}{24}\right)(n\pi)^{-3}
+ O (n^{-5})\qquad (n\to \infty).
\end{equation}
\end{proposition}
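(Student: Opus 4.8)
The plan is to reduce each defining equation to a fixed-point problem in a small parameter and apply the analytic implicit function theorem. Since $\cos(n\pi+\delta)=(-1)^n\cos\delta$ and $\sin(n\pi+\delta)=(-1)^n\sin\delta$, putting $k=n\pi+\delta$ into $D(k)=2(-1)^n$ and cancelling the common factor $(-1)^n$ turns the equation defining $k_n$ into
\[
2\cos\delta+V\frac{\sin\delta}{n\pi+\delta}=2,
\]
which is the same for every $n$ regardless of parity; the analogous substitution into $D'(k)=0$ gives
\[
-2\sin\delta+V\frac{\cos\delta}{n\pi+\delta}-V\frac{\sin\delta}{(n\pi+\delta)^2}=0
\]
for $l_n$. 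By Proposition~\ref{proposition_discriminant} and its remark each equation has exactly one root $\delta=\delta_n$ with $n\pi+\delta_n$ in the prescribed interval, so it suffices to locate that root asymptotically and to show $\delta_n\to0$.

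First I would set $a=(n\pi)^{-1}$ and search for the root in the scaled form $\delta=a\psi$, anticipating $\delta_n=O(a)$. For $k_n$, multiplying the first equation by $(n\pi+\delta)/\delta$ and using that $z\mapsto\sin z/z$ and $z\mapsto(\cos z-1)/z^2$ are entire, the equation becomes $\Phi(\psi,a)=0$ with
\[
\Phi(\psi,a)=\frac{2\cos(a\psi)-2}{a^2\psi}+\bigl(2\cos(a\psi)-2\bigr)+V\frac{\sin(a\psi)}{a\psi},
\]
which extends to a function that is real-analytic (indeed entire) in $(\psi,a)$ near $(V,0)$ and satisfies $\Phi(\psi,0)=V-\psi$. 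The same reduction for $l_n$ yields an entire $\Phi$ with $\Phi(\psi,0)=V-2\psi$. In both cases $\partial_\psi\Phi\neq0$ at the base point, so the analytic implicit function theorem produces a unique analytic solution $\psi=\psi(a)$ with $\psi(0)=V$ (resp.\ $V/2$). Since $\psi(0)>0$ and $a>0$, the corresponding root $\delta_n=a\psi(a)$ is small and positive for large $n$, hence lies in the required interval; by the uniqueness in Proposition~\ref{proposition_discriminant} it must equal $k_n-n\pi$ (resp.\ $l_n-n\pi$).

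The structural point that gives the precise remainder is that each $\Phi(\psi,a)$ is \emph{even} in $a$: every occurrence of $a$ is tied to $\delta=a\psi$ through the even functions $\cos$, $\sin z/z$, and $(\cos z-1)/z^2$. Consequently $\psi(a)$ is a function of $a^2$, so $\psi(a)=\psi_0+\psi_2a^2+O(a^4)$ and
\[
k_n=n\pi+\psi_0\,a+\psi_2\,a^3+O(a^5),
\]
which already accounts for the missing $a^2$ term and the $O(n^{-5})$ error. It then remains to read off the coefficients: expanding the defining equations to third order in $\delta$ and solving order by order gives $\psi_0=V,\ \psi_2=-(V^2+V^3/12)$ for $k_n$ and $\psi_0=V/2,\ \psi_2=-(V^2/2+V^3/24)$ for $l_n$, which are precisely (\ref{asymptotic_kn}) and (\ref{asymptotic_ln}).

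I expect the only delicate steps to be (i) checking that the scaled equation truly defines a function analytic up to $a=0$, which rests on correctly dividing out the removable singularities $\sin z/z$ and $(\cos z-1)/z^2$, and (ii) making the even-in-$a$ symmetry rigorous, since it is exactly what upgrades the naive $O(n^{-4})$ bound to $O(n^{-5})$ and removes the $a^2$ term. The coefficient extraction in the last step is routine but bookkeeping-heavy. Finally, for $V<0$ the identical argument applies with $\psi(0)<0$, so $\delta_n<0$ and $k_n,l_n$ sit just to the left of $n\pi$, matching the remark following Proposition~\ref{proposition_discriminant}.
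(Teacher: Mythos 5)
Your proposal is correct, but it takes a genuinely different route from the paper's. The paper makes the same initial substitution $k=n\pi+h$ and parity cancellation, but then recasts each equation as an explicit fixed-point problem: $h=f(h)$ with $f(h)=\arcsin\bigl(\frac{1+\cos h}{2}\,\frac{V}{n\pi+h}\bigr)$ for $k_n$, and $h=g(h)$ with $g(h)=\arctan\bigl(\frac{Vk}{2k^2+V}\bigr)$ for $l_n$; existence, uniqueness and locality of the root come from the contraction mapping theorem on $I=[0,2V/(n\pi)]$, and the coefficients are extracted by substituting the expansion of $f$ (resp.\ $g$) into the iteration $h_j=f(h_{j-1})$ twice. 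In that computation the absence of the $(n\pi)^{-2}$ and $(n\pi)^{-4}$ terms emerges only a posteriori, after the iterations are carried out against an expansion of $f$ that does contain an $h^2$ term. Your scaled analytic implicit function theorem replaces the contraction-mapping step, and your evenness observation --- that $\Phi(\psi,a)=\tilde\Phi(\psi,a^2)$ because every occurrence of $a$ enters through even entire functions such as $\frac{2\cos z-2}{z^2}$ and $\frac{\sin z}{z}$ at $z=a\psi$ (plus, for $l_n$, the even combinations $a\sin(a\psi)$ and $1+a^2\psi$) --- explains \emph{structurally} why only odd powers of $(n\pi)^{-1}$ occur, and in fact delivers a full expansion in odd powers with analytic dependence on $a^2$, which is more than the paper's argument establishes. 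The two delicate points you flag are real and you handle both correctly: dividing out the trivial root $\delta=0$ (the band edge $k=n\pi$, where $D(n\pi)=2(-1)^n$) before applying the implicit function theorem, and using the uniqueness statement of Proposition \ref{proposition_discriminant} (and its remark, for $V<0$) to identify the implicit-function-theorem branch with $k_n$ and $l_n$. I verified the order-by-order extraction in your last step: it yields $\psi_0=V$, $\psi_2=-(V^2+V^3/12)$ and $\psi_0=V/2$, $\psi_2=-(V^2/2+V^3/24)$, in agreement with (\ref{asymptotic_kn}) and (\ref{asymptotic_ln}).
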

\begin{proof}
We assume $V>0$ for simplicity
(in the case $V<0$, we only need to change the sign of $h$ given below).
First we prove (\ref{asymptotic_kn}).
The number $k_n$ is the solution to 
\begin{equation}
\label{ba02}
 D(k)=2\cos k + V\frac{\sin k}{k}=2(-1)^n\quad
(n\pi<k<(n+1)\pi).
\end{equation}
Put $k = n\pi + h$. Then (\ref{ba02}) is equivalent to
\begin{equation}
\label{ba03}
h = f(h),\quad f(h)=\arcsin\left(\frac{1+\cos h}{2} \frac{V}{n\pi + h} \right).
\end{equation}
Then, for sufficiently large $n$,
it is easy to see that the solution to (\ref{ba03})
is the limit of the sequence $\{h_j\}_{j=0}$ given by%
\footnote{For the rigorous proof, we apply the contraction mapping theorem
in the following form:
`Let $I=[a,b]$, $0\in I$, $f\in C^1(I)$ and assume 
$f(I)\subset I$ and $\|f'\|_{L^\infty(I)}<1$.
Then $f$ has the unique fixed point in $I$ 
which is the limit of the sequence (\ref{ba04})'.
If we take $f$ as in (\ref{ba03}) and $I=[0,2V/(n\pi)]$,
we can apply the contraction mapping theorem for sufficiently large $n$.
}
\begin{equation}
\label{ba04}
 h_0=0,\quad h_j=f(h_{j-1})\quad (j=1,2,\ldots).
\end{equation}
By a simple calculation, we have
\begin{equation} 
\label{ba05}
f(h)
= 
\frac{V}{n\pi}-\frac{V}{(n\pi)^2}h
-\frac{V}{4n\pi}h^2 + \frac{1}{6}\frac{V^3}{(n\pi)^3}+ O(n^{-5})\quad
\mbox{for } h=O(n^{-1}).
\end{equation}
Then twice substitution of (\ref{ba05}) into (\ref{ba04})
gives the formula (\ref{asymptotic_kn})
(three times substitution gives the same formula).

Next we prove (\ref{asymptotic_ln}).
Put $k=n\pi + h$. Then the defining equation of $l_n$
\begin{eqnarray*}
 D'(k)=-2\sin k + V(k^{-1}\cos k - k^{-2}\sin k)=0
\end{eqnarray*}
is equivalent to
\begin{eqnarray*}
 h=g(h), \quad g(h)=\arctan\left(\frac{Vk}{2k^2+V}\right),\quad
k=n\pi + h.
\end{eqnarray*}
Then we can obtain the formula
(\ref{asymptotic_ln}) by using the following expansion recursively.
\begin{displaymath}
 g(h)=
\frac{V}{2n\pi}
-
\frac{V^2}{4(n\pi)^3}
-\frac{V}{2(n\pi)^2}h
-\frac{1}{24}\frac{V^3}{(n\pi)^3}
+ O(n^{-5})
\quad
\mbox{for } h=O(n^{-1}).
\end{displaymath}
\end{proof}

\subsection{Analysis on $\lambda_n(\theta)$ near the band edge} 
In order to calculate the Puiseux expansion of $k=D^{-1}(y)$,
first we calculate the Taylor expansion of $y=D(k)$.

\begin{proposition}
 \label{proopsition_taylor_dk}
The Taylor expansion of $y=D(k)$
near $k=l_n$ is given as follows.
\begin{equation}
 \label{taylor_dk}
y=D(k) = y_n + \sum_{m=2}^\infty d_m (k-l_n)^m,\qquad
y_n = D(l_n),\quad
d_m = \frac{D^{(m)}(l_n)}{m!},
\end{equation}
\vspace{-4mm}
\begin{eqnarray}
\label{yn}
y_n &=& (-1)^n\left[2 + \frac{V^2}{4}(n\pi)^{-2}+O(n^{-4})\right],\\
\label{d2}
d_2 &=& (-1)^n\left[-1 - \left(V+\frac{V^2}{8}\right)(n\pi)^{-2}+O(n^{-4})\right],\\
\label{d3}
d_3 &=& (-1)^n\left[ \left(V+\frac{V^2}{6}\right)(n\pi)^{-3}+O(n^{-5})\right],\\
\label{d4}
d_4 &=& (-1)^n\left[\frac{1}{12}+ \left(\frac{V}{6}+\frac{V^2}{96}\right)(n\pi)^{-2}+O(n^{-4})\right],\\
\label{d5}
d_5 &=& (-1)^n\left[ -\left(\frac{V}{6}+\frac{V^2}{60}\right)(n\pi)^{-3}+O(n^{-5})\right],
\\
\label{d6}
d_6 &=& (-1)^n\left[-\frac{1}{360}- \left(\frac{V}{120}+\frac{V^2}{2880}\right)(n\pi)^{-2}+O(n^{-4})\right].
\end{eqnarray}
\end{proposition}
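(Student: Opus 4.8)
The plan is to read off the coefficients directly from their definitions $y_n = D(l_n)$ and $d_m = D^{(m)}(l_n)/m!$, inserting the asymptotics of $l_n$ supplied by Proposition \ref{proposition_knln}. First I would note that the expansion (\ref{taylor_dk}) has no $m=1$ term precisely because $l_n$ is defined by $D'(l_n)=0$ (Proposition \ref{proposition_discriminant}); this is why the sum starts at $m=2$, and it is the only structural input needed. Everything else is the substitution of (\ref{asymptotic_ln}) into the closed forms (\ref{d0k})--(\ref{d6k}) for the derivatives of $D$, followed by expansion in powers of $(n\pi)^{-1}$.

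Concretely, I would write $l_n = n\pi + h_n$ with
\begin{displaymath}
h_n = \frac{V}{2}(n\pi)^{-1} - \left(\frac{V^2}{2}+\frac{V^3}{24}\right)(n\pi)^{-3}+O(n^{-5})
\end{displaymath}
from (\ref{asymptotic_ln}). Since $\cos l_n = (-1)^n\cos h_n$ and $\sin l_n = (-1)^n\sin h_n$, the factor $(-1)^n$ common to all the coefficients factors out immediately, and the remaining task is to expand $\cos h_n$ and $\sin h_n$ as power series in $h_n$ (hence in $(n\pi)^{-1}$), together with the negative powers $l_n^{-j} = (n\pi)^{-j}(1+h_n/(n\pi))^{-j}$ as geometric series. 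Multiplying these expansions term by term inside each of (\ref{d0k})--(\ref{d6k}) and collecting like powers of $(n\pi)^{-1}$ produces (\ref{yn})--(\ref{d6}).

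A convenient built-in check is that the leading constants $\frac{1}{12}$ in $d_4$ and $-\frac{1}{360}$ in $d_6$ are exactly the Taylor coefficients $2/4!$ and $-2/6!$ of $2\cos k$ at $k=n\pi$: in the high-energy limit the term $V\sin k/k$ contributes only lower-order corrections to the even coefficients. Likewise the odd coefficients $d_3$ and $d_5$ carry no constant term because all odd derivatives of $\cos$ vanish at $n\pi$; their leading $(n\pi)^{-3}$ behaviour is generated jointly by the potential term and by the shift $h_n = l_n - n\pi$. Watching these features survive the computation confirms the bookkeeping.

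The main obstacle is exactly this bookkeeping, most acutely for $d_3$ and $d_5$, which must be obtained to order $(n\pi)^{-3}$ with error $O(n^{-5})$. Contributions of that order arise from several sources simultaneously---the explicit $(n\pi)^{-3}$ term of $h_n$, the cubic term in the expansion of $\sin h_n$, and cross-products of lower-order pieces with the $V$-dependent factors in (\ref{d3k}) and (\ref{d5k})---so one must expand every summand consistently to the same order and verify that the discarded remainders genuinely lie in $O(n^{-5})$ (respectively $O(n^{-4})$ for the even coefficients). No new idea beyond careful Taylor expansion is required; the difficulty is purely in organizing the algebra so that no term of the retained order is lost.
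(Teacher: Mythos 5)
Your proposal is correct and takes essentially the same route as the paper's proof: both observe that the $m=1$ term is absent because $d_1=D'(l_n)=0$, and then substitute the asymptotics (\ref{asymptotic_ln}) of $l_n$ into the explicit derivative formulas (\ref{d0k})--(\ref{d6k}), expanding $\cos l_n=(-1)^n\cos h_n$, $\sin l_n=(-1)^n\sin h_n$, and $l_n^{-j}$ in powers of $(n\pi)^{-1}$ and collecting terms. The paper records precisely the intermediate expansions you describe; your version merely adds sensible consistency checks (the leading constants $1/12$ and $-1/360$ from $2\cos k$, and the vanishing constant terms in $d_3$, $d_5$).
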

\begin{proof}
Notice that $d_1=D'(l_n)=0$ by definition.
The results (\ref{yn})-(\ref{d5}) 
can be obtained by substituting the following formulas
into (\ref{d0k})-(\ref{d6k}).
From (\ref{asymptotic_ln}), we have 
\begin{eqnarray*}
(-1)^n \cos l_n 
&=& 1 -\frac{V^2}{8}(n\pi)^{-2}
+ O(n^{-4}),\\
(-1)^n \sin l_n 
&=& 
\frac{V}{2}(n\pi)^{-1}
-\left(\frac{V^2}{2}+\frac{V^3}{16}\right)(n\pi)^{-3}
+ O(n^{-5}),\\
l_n^{-1}
&=& (n\pi)^{-1}- \frac{V}{2}(n\pi)^{-3} + O(n^{-5}),\\
l_n^{-j}
&=& (n\pi)^{-j} + O(n^{-(j+2)})\quad (j \geq 2).
\end{eqnarray*}
\end{proof}

Next we shall calculate the Puiseux expansion of $D^{-1}(y)$ near
$y=y_n$.
Since $k=k_n$ is the zero of order 2 of $D(k)$, 
$D^{-1}(y)$ is a double-valued function 
with respect to the complex variable $y$.
\begin{proposition}
\label{proposition_puiseux_k}
The Puiseux expansion of $k=D^{-1}(y)$ near $y=y_n$
is written as follows.
\begin{eqnarray}
\label{puiseux_k} 
k &=& l_n + \sum_{p=1}^\infty e_p 
h^{p/2},\quad
h = (-1)^n(y_n-y),\\
\label{e1}
 e_1 &=& 1 - \left(\frac{V}{2} + \frac{V^2}{16}\right)(n\pi)^{-2}+O(n^{-4}),\\
\label{e2}
e_2 &=& \left(\frac{V}{2} + \frac{V^2}{12} \right)(n\pi)^{-3}+O(n^{-5}),\\
\label{e3}
e_3 &=& \frac{1}{24} 
- \left(\frac{V}{48} + \frac{V^2}{128}\right)(n\pi)^{-2}+O(n^{-4}),\\
\label{e4}
e_4 &=& \left(\frac{V}{24}+ \frac{V^2}{80}\right)(n\pi)^{-3}+O(n^{-5}),\\
\label{e5}
e_5 &=& \frac{3}{640} -
\left(\frac{3V}{1280}+\frac{3V^2}{2048}\right)(n\pi)^{-2}+
O(n^{-4}).
\end{eqnarray}
\end{proposition}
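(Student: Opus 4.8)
The plan is to invert the Taylor series of Proposition \ref{proopsition_taylor_dk} by the method of undetermined coefficients, after first extracting the square-root behavior forced by the fact that $k=l_n$ is a simple critical point of $D$. I would begin by setting $\kappa = k-l_n$ and recording from (\ref{taylor_dk}) that
\[
 y-y_n = \sum_{m=2}^\infty d_m \kappa^m, \qquad d_2\neq 0 .
\]
Since $D(k)$ is entire with $D'(l_n)=0$ and $D''(l_n)=2d_2\neq 0$, the point $k=l_n$ is a double zero of $D(k)-y_n$; hence by the Puiseux (Newton--Puiseux) theorem $\kappa$ is given near $y=y_n$ by a convergent power series in $(y-y_n)^{1/2}$. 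This guarantees the existence and convergence of the expansion (\ref{puiseux_k}), and the only remaining task is to compute the coefficients $e_p$ and their asymptotics.

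To carry out the inversion cleanly, I would write $h=(-1)^n(y_n-y)$ and $w=h^{1/2}$, and set $\tilde d_m := (-1)^n d_m$, so that by (\ref{d2}) one has $\tilde d_2 = -1 - (V+V^2/8)(n\pi)^{-2}+O(n^{-4})$ and $-\tilde d_2>0$. The defining relation then reads
\[
 \sum_{m=2}^\infty \tilde d_m \kappa^m = -w^2, \qquad \kappa = \sum_{p=1}^\infty e_p w^p,
\]
where the branch is fixed by requiring $e_1>0$ (equivalently $k>l_n$ on the relevant side; one checks via Proposition \ref{proposition_discriminant}(iv) that $h\geq 0$ for real $k$ near $l_n$ in both parities of $n$). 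Substituting the second series into the first and equating coefficients of $w^2,w^3,w^4,\dots$ determines the $e_p$ recursively: the coefficient of $w^2$ gives $\tilde d_2 e_1^2=-1$, hence $e_1=(-\tilde d_2)^{-1/2}$; the coefficient of $w^3$ gives $2\tilde d_2 e_1 e_2+\tilde d_3 e_1^3=0$, hence $e_2=\tilde d_3/(2\tilde d_2^2)$; and in general the coefficient of $w^{p+1}$ expresses $e_p$ as a polynomial in $\tilde d_2,\dots,\tilde d_{p+1}$ and $e_1,\dots,e_{p-1}$ with a power of $\tilde d_2$ in the denominator.

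Finally I would insert the asymptotics (\ref{yn})--(\ref{d6}) into these closed expressions for the $e_p$. For instance,
\[
 e_1 = (-\tilde d_2)^{-1/2}=\left[1+\left(V+\tfrac{V^2}{8}\right)(n\pi)^{-2}+O(n^{-4})\right]^{-1/2}
 = 1-\left(\tfrac{V}{2}+\tfrac{V^2}{16}\right)(n\pi)^{-2}+O(n^{-4}),
\]
which is (\ref{e1}), and likewise $e_2=\tilde d_3/(2\tilde d_2^2)=(V/2+V^2/12)(n\pi)^{-3}+O(n^{-5})$ gives (\ref{e2}); the same substitution yields (\ref{e3})--(\ref{e5}).

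The main obstacle is not conceptual but the consistent bookkeeping of asymptotic orders. The coefficients $d_m$ exhibit a parity structure: the even-indexed $d_{2m}$ have $O(1)$ leading terms, while the odd-indexed $d_{2m+1}$ are $O(n^{-3})$ (they vanish identically for the free operator $V=0$). This parity propagates to the $e_p$, so that the odd coefficients $e_1,e_3,e_5$ are $O(1)$ with $O(n^{-2})$ corrections whereas the even coefficients $e_2,e_4$ are $O(n^{-3})$. To reach $e_5$ with the stated $O(n^{-2})$ remainder I must keep each $d_m$ and each intermediate product to two nonzero orders and track the $O(n^{-4})$ and $O(n^{-5})$ error terms faithfully through the algebra of the recursion; this step is routine in principle but is where essentially all of the computational effort and the risk of error lie.
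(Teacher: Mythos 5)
Your proposal is correct and takes essentially the same route as the paper: it rewrites the Taylor expansion of Proposition \ref{proopsition_taylor_dk} with the sign $(-1)^n$ absorbed, substitutes the ansatz $k=l_n+\sum_p e_p h^{p/2}$, equates coefficients of the half-integer powers to get the triangular system ($c_2e_1^2=1$, $2c_2e_1e_2+c_3e_1^3=0$, and so on, with your $\tilde d_m=-c_m$), and then inserts the asymptotics (\ref{d2})--(\ref{d6}); your sample evaluations of $e_1$ and $e_2$ agree with (\ref{e1}) and (\ref{e2}). Your added justification of existence and convergence via the Newton--Puiseux theorem, and the branch fixing via Proposition \ref{proposition_discriminant}(iv), only make explicit what the paper handles in its surrounding remarks.
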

\begin{proof}
Put $h=(-1)^n(y_n-y)$ and $c_m=(-1)^{n+1}d_m$, where $d_m$ are 
the coefficients in (\ref{taylor_dk}).
Then the Taylor expansion (\ref{taylor_dk}) is rewritten as
\begin{equation}
\label{pk01}
 h=\sum_{m=2}^\infty c_m (k-l_n)^m.
\end{equation}
Substituting the Puiseux expansion (\ref{puiseux_k}) into 
(\ref{pk01}), 
and comparing the coefficient of each power $h^{p/2}$,
we have
\begin{eqnarray*}
&&c_2 e_1^2 = 1,\\
&&2 c_2e_1e_2 +  c_3e_1^3=0,\\
&&c_2(2 e_1 e_3 +  e_2^2)+ 3c_3e_1^2e_2 + c_4 e_1^4=0,\\
&&c_2 (2 e_1e_4 + 2e_2e_3)+c_3(3e_1^2e_3+3e_1e_2^2)+4c_4e_1^3e_2+c_5e_1^5
=0,\\
&&c_2(2e_1e_5+2e_2e_4+e_3^2) 
  +c_3(6e_1e_2e_3+3e_1^2e_4+e_2^3)
+c_4 (6e_1^2e_2^2 + 4e_1^3e_3)\\
&&\hspace{9cm}+5c_5e_1^4e_2
+c_6e_1^6 =0.
\end{eqnarray*}
Since the expansions of $c_m=(-1)^{n+1}d_m$ are obtained 
from (\ref{d2})-(\ref{d6}),
we can calculate the coefficients (\ref{e1})-(\ref{e5})
by solving the above equations.
\end{proof}
{\bf Remark.} 1.
The branch points of $k=D^{-1}(y)$ are $y=y_n$,
and $y_n$ is nearby $2\cdot (-1)^n$ by (\ref{yn}).
Thus the radius of convergence of the Puiseux expansion
(\ref{puiseux_k}) is about $4$, for sufficiently large $n$.
This fact justifies the calculus in the sequel.

2. In the case $V=0$, the expansion (\ref{puiseux_k}) coincides with 
the Puiseux expansion
\begin{displaymath}
\arccos \frac{y}{2}=n\pi + (2\mp y)^{1/2}+\frac{1}{24}(2\mp y)^{3/2}+\frac{3}{640}(2\mp y)^{5/2}+\cdots\quad
(\mbox{near }y=\pm 2),
\end{displaymath}
where $n$ is any integer satisfying $(-1)^n=\pm 1$.
\vspace{2mm}

Substituting $y=2\cos\theta$ into (\ref{puiseux_k}),
we obtain the function $k=k(\theta)=D^{-1}(2\cos\theta)$.
Since the Puiseux expansion (\ref{puiseux_k})
has two branches,
we obtain two functions
\begin{eqnarray}
 \label{ba07}
&& k_\pm (\theta)=
l_n \pm e_1 h^{1/2} + e_2 h \pm e_3 h^{3/2} + e_4 h^2 \pm e_5 h^{5/2} + \cdots,
\\
&& h=(-1)^n(y_n-2\cos\theta).
 \label{ba08}
\end{eqnarray}
Notice that $h$ is positive for real $\theta$, 
so $h^{p/2}$ in (\ref{ba07}) is uniquely defined 
as a positive function.
For notational simplicity, we put
\begin{equation}
 \label{zndef}
z_n = (-1)^n y_n,\quad 
\theta=n\pi+\tau.
\end{equation}
Then we have the following formulas, useful for the calculation below.
\begin{eqnarray}
\label{formula1}
&& z_n = 2 + \frac{V^2}{4}(n\pi)^{-2} + O(n^{-4}),\\
\label{formula2}
&& h = z_n - 2 \cos \tau = \tau^2 + O(\tau^4)+
\frac{V^2}{4}(n\pi)^{-2} + O(n^{-4}),\\
\label{formula3}
&& \frac{dh}{d\tau}=2\sin \tau,\\
\label{formula4}
&&(2\sin \tau)^2=4 - (2\cos\tau)^2 = (4-z_n^2)+2z_n h -h^2,\\
\label{formula5}
&& 4-z_n^2 = -V^2(n\pi)^{-2} + O(n^{-4}).
\end{eqnarray}
Then we obtain expansions of 
the band functions and their derivatives near the band edge,
as follows.
\begin{proposition}
 \label{proposition_puiseux_lambda}
Let $k_\pm$, $h$ and $\tau$ as in  (\ref{ba07})-(\ref{zndef}),
and put $\lambda_\pm(\theta)= k_\pm(\theta)^2$.
Then the following expansions hold near $\theta=n\pi$
(or $\tau=0$).

\begin{enumerate}
 \item 
\begin{displaymath}
\lambda_\pm =l_n^2  
\pm \lambda_{0,1} h^{1/2}+ \lambda_{0,2} h
\pm \lambda_{0,3} h^{3/2}+ \lambda_{0,4} h^2
\pm \lambda_{0,5} h^{5/2} + O(h^3),
\end{displaymath}
\begin{eqnarray*}
 l_n^2 &=&
(n\pi)^2 + V -\left(\frac{3V^2}{4} + \frac{V^3}{12}\right)(n\pi)^{-2}+O(n^{-4}),
\\
 \lambda_{0,1}
&=&
2 n\pi -\frac{V^2}{8}(n\pi)^{-1}+O(n^{-3}),
\\
 \lambda_{0,2} 
&=&
1
+\frac{V^2}{24}(n\pi)^{-2}+O(n^{-4}),
\\
 \lambda_{0,3} &=&
\frac{1}{12}n\pi
-\frac{V^2}{64}(n\pi)^{-1}+O(n^{-3}),
\\
 \lambda_{0,4}&=&
\frac{1}{12}
+\frac{V^2}{240}(n\pi)^{-2}+O(n^{-4}),
\\
 \lambda_{0,5}
&=&\frac{3}{320}n\pi -\frac{3V^2}{1024}(n\pi)^{-1}+O(n^{-3}).
\end{eqnarray*}

 \item 
\begin{equation}
 \frac{d\lambda_\pm}{d\theta}
=
2\sin\tau \cdot
\left(
\pm \frac{1}{2}\lambda_{0,1} h^{-1/2}+ \lambda_{0,2} 
\pm \frac{3}{2}\lambda_{0,3} h^{1/2}+ 2\lambda_{0,4} h
\pm \frac{5}{2}\lambda_{0,5} h^{3/2} + O(h^2)
\right).
\label{pu14}
\end{equation}

 \item 
\begin{eqnarray}
 \frac{d^2\lambda_\pm}{d\theta^2}
=\pm \lambda_{2,-3} h^{-3/2}\pm \lambda_{2,-1} h^{-1/2}
+ \lambda_{2,0} \pm \lambda_{2,1} h^{1/2}
+ O(h),
\label{pu19}
\end{eqnarray}
\begin{eqnarray}
  \lambda_{2,-3}
&=&\frac{V^2}{2}(n\pi)^{-1} + O(n^{-3}),
\label{pu21}\\
 \lambda_{2,-1}
&=&
-\frac{V^2}{16}(n\pi)^{-1}+O(n^{-3}),
\label{pu22}\\
 \lambda_{2,0}
&=&
2 +\frac{V^2}{6}(n\pi)^{-2}+ O(n^{-4})\label{pu23},\\
 \lambda_{2,1}
&=&
-\frac{9V^2}{256}(n\pi)^{-1}
+O(n^{-3}).
\label{pu24}
\end{eqnarray}

 \item  
\begin{equation}
  \frac{d^3\lambda_\pm}{d\theta^3}
=2\sin \tau\cdot
\left(
\mp\frac{3}{2} \lambda_{2,-3} h^{-5/2}
\mp\frac{1}{2} \lambda_{2,-1} h^{-3/2}
\pm\frac{1}{2} \lambda_{2,1} h^{-1/2}
+ O(1)
\right).\label{pu25}
\end{equation}
\end{enumerate}
\end{proposition}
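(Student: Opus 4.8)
The plan is to obtain part (i) by squaring the Puiseux series (\ref{ba07}) for $k_\pm$, and then to read off parts (ii)--(iv) by repeatedly applying the chain rule in the variable $h$. Since $\lambda_\pm = k_\pm^2$, I would first split $k_\pm = P \pm Q$, where $P = l_n + e_2 h + e_4 h^2 + \cdots$ collects the integer powers of $h$ and $Q = e_1 h^{1/2} + e_3 h^{3/2} + e_5 h^{5/2} + \cdots$ collects the half-integer powers. Then $\lambda_\pm = P^2 + Q^2 \pm 2PQ$, so the integer powers of $h$ carry no sign while the half-integer powers carry $\pm$, which is precisely the structure asserted in (i). Collecting powers gives $\lambda_{0,1} = 2 l_n e_1$, $\lambda_{0,2} = 2 l_n e_2 + e_1^2$, $\lambda_{0,3} = 2(l_n e_3 + e_1 e_2)$, and so on; substituting the asymptotics (\ref{e1})--(\ref{e5}) for the $e_p$ together with $l_n^2$ from (\ref{asymptotic_ln}) then produces the explicit coefficients. (The termwise manipulation is legitimate because, as noted after Proposition \ref{proposition_puiseux_k}, the radius of convergence in $h$ is about $4$.)

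For (ii) and (iv) I would use that $\theta = n\pi + \tau$ gives $d\theta = d\tau$, while (\ref{formula3}) gives $dh/d\theta = 2\sin\tau$; hence $d/d\theta = 2\sin\tau\,(d/dh)$. Differentiating the series of (i) termwise in $h$ and multiplying by $2\sin\tau$ yields (\ref{pu14}) immediately. For (iv), the point is that the right-hand side of (iii) is already a genuine Puiseux series in $h$ (all $\tau$-dependence having been absorbed, as explained below), so one further application of $2\sin\tau\,(d/dh)$, differentiating termwise, gives (\ref{pu25}).

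The substantive step is (iii), because the factor $2\sin\tau$ in $\lambda_\pm' = 2\sin\tau\, g(h)$ (with $g = d\lambda_\pm/dh$) is not constant. Differentiating once more,
\[
\frac{d^2\lambda_\pm}{d\theta^2} = 2\cos\tau\, g(h) + (2\sin\tau)^2 g'(h).
\]
The key is to eliminate $\tau$ in favour of $h$: from $h = z_n - 2\cos\tau$ one has $2\cos\tau = z_n - h$, while (\ref{formula4}) gives $(2\sin\tau)^2 = (4-z_n^2) + 2z_n h - h^2$. Substituting,
\[
\frac{d^2\lambda_\pm}{d\theta^2} = (z_n - h)\,g(h) + \bigl((4-z_n^2) + 2z_n h - h^2\bigr)\,g'(h),
\]
which is a pure Puiseux series in $h$. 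Collecting the powers $h^{-3/2}, h^{-1/2}, h^0, h^{1/2}$ and inserting the expansions from (i) yields the $\lambda_{2,j}$. The point of interest is the leading coefficient $\lambda_{2,-3} = -\frac{1}{4}(4-z_n^2)\lambda_{0,1}$: although it multiplies the most singular power $h^{-3/2}$, the near-cancellation $4 - z_n^2 = -V^2(n\pi)^{-2} + O(n^{-4})$ from (\ref{formula5}) forces $\lambda_{2,-3}$ to be only $O(n^{-1})$, in agreement with (\ref{pu21}). This smallness is exactly what will later make the third derivative, rather than the second, govern the behaviour near the inflection points.

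I expect the main obstacle to be computational rather than conceptual: squaring a six-term series and propagating the $O(n^{-k})$ error terms consistently, and, in (iii), tracking which contributions to each power of $h$ survive after the substitution of $2\cos\tau$ and $(2\sin\tau)^2$. Once the reduction of the second derivative to a series in $h$ is in place, the remaining bookkeeping for (ii) and (iv) is mechanical.
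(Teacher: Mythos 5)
Your proposal is correct and takes essentially the same route as the paper, whose proof consists precisely of substituting the expansions (\ref{ba07}), (\ref{asymptotic_ln}) and (\ref{e1})--(\ref{e5}) into $\lambda_\pm=k_\pm^2$ and differentiating repeatedly in $\theta$, with the formulas (\ref{formula1})--(\ref{formula5}) (i.e.\ $dh/d\tau=2\sin\tau$, $2\cos\tau=z_n-h$, $(2\sin\tau)^2=(4-z_n^2)+2z_nh-h^2$) used exactly as you use them to reduce everything to a Puiseux series in $h$. Your $P\pm Q$ splitting, the elimination of $\tau$ in the second derivative, and the identification $\lambda_{2,-3}=-\tfrac{1}{4}(4-z_n^2)\lambda_{0,1}$ with the cancellation $4-z_n^2=O(n^{-2})$ merely make explicit the bookkeeping the paper leaves implicit, and they reproduce (\ref{pu21})--(\ref{pu24}) correctly.
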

\begin{proof}
 The proof can be done simply by substituting the expansion 
(\ref{ba07}), (\ref{asymptotic_ln}) 
and (\ref{e1})-(\ref{e5}) into $\lambda_{\pm}=k_\pm^2$
and taking the derivative with respect to 
$\theta$ (or $\tau$) repeatedly.
The formulas (\ref{formula1})-(\ref{formula5}) help the calculation.
\end{proof}

By construction, we have
\begin{equation}
 \lambda_+(\theta)=\lambda_{n+1}(\theta),\quad
 \lambda_-(\theta)=\lambda_{n}(\theta),\quad
\theta=n\pi+\tau
\label{ba09}
\end{equation}
for small $\tau$.
Using (\ref{ba09}) and the expansion
(\ref{pu19}), 
we can find the asymptotics 
of the solution to $\lambda_n''(\theta)=0$.
\begin{proposition}
 \label{proposition_inflection}
The equation
\begin{displaymath}
\frac{d^2\lambda_n}{d\theta^2}=0,\quad
(n-1)\pi \leq \theta \leq n\pi
\end{displaymath}
has a unique solution $\theta=\theta_0$ for sufficiently large $n$.
The asymptotics of $\theta_0$ is given as
\begin{equation}
\label{asymptotics_theta0}
 \theta_0 = n\pi - \left(\frac{V^2}{4n\pi}\right)^{1/3}
+
O(n^{-1})\quad
\mbox{as}\ n\to \infty.
\end{equation}
\end{proposition}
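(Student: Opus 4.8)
The plan is to reduce the equation $\lambda_n''(\theta)=0$ to an algebraic equation in the single variable $h$ using the Puiseux data of Proposition~\ref{proposition_puiseux_lambda}, solve it by a dominant-balance argument, and then translate the resulting value of $h$ back into $\theta=n\pi+\tau$. First I would fix the correct branch: by (\ref{ba09}), near $\theta=n\pi$ the band function $\lambda_n$ coincides with $\lambda_-$, so by (\ref{pu19}) the equation to solve is
\begin{equation*}
-\lambda_{2,-3}h^{-3/2}-\lambda_{2,-1}h^{-1/2}+\lambda_{2,0}-\lambda_{2,1}h^{1/2}+O(h)=0 ,
\end{equation*}
where, from (\ref{pu21})-(\ref{pu24}), $\lambda_{2,-3}=\tfrac{V^2}{2}(n\pi)^{-1}+O(n^{-3})>0$ and $\lambda_{2,0}=2+O(n^{-2})$, while $\lambda_{2,-1},\lambda_{2,1}=O(n^{-1})$. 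Multiplying through by $h^{3/2}$ gives the equivalent form
\begin{equation*}
\lambda_{2,0}h^{3/2}=\lambda_{2,-3}+\lambda_{2,-1}h+\lambda_{2,1}h^2+O(h^{5/2}) .
\end{equation*}

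Next I would carry out the dominant balance. The leading relation $\lambda_{2,0}h^{3/2}=\lambda_{2,-3}$ gives $h_\ast=(\lambda_{2,-3}/\lambda_{2,0})^{2/3}=(V^2/4n\pi)^{2/3}(1+O(n^{-2}))$; since the omitted terms on the right are $O(n^{-1}h)=O(n^{-5/3})$ against the $O(n^{-1})$ leading term, they contribute only a relative $O(n^{-2/3})$, so $h_\ast=(V^2/4n\pi)^{2/3}(1+O(n^{-2/3}))$. To convert back I use (\ref{formula1})-(\ref{formula2}): with $\tau=\theta-n\pi$ one has $h=(z_n-2)+2(1-\cos\tau)=\tau^2+O(\tau^4)+\tfrac{V^2}{4}(n\pi)^{-2}+O(n^{-4})$. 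Because $h_\ast=O(n^{-2/3})$ dominates the constant $\tfrac{V^2}{4}(n\pi)^{-2}=O(n^{-2})$ and $\tau^4=O(n^{-4/3})$, solving for $\tau$ and taking the root $\tau<0$ (the only one lying in $[(n-1)\pi,n\pi]$) yields $\tau=-h_\ast^{1/2}(1+O(n^{-2/3}))=-(V^2/4n\pi)^{1/3}+O(n^{-1})$, which is exactly (\ref{asymptotics_theta0}).

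For existence and uniqueness I would exploit the sign structure of the two branches. On the $\lambda_-$ side the leading part $-\lambda_{2,-3}h^{-3/2}+\lambda_{2,0}$ is strictly increasing in $h$ (as $\lambda_{2,-3}>0$), and near the balance value $h_\ast=O(n^{-2/3})$ its derivative $\tfrac32\lambda_{2,-3}h^{-5/2}\sim n^{2/3}$ dominates the $O(1)$ remainder, so $\lambda_n''$ crosses zero exactly once: it is strongly negative ($\approx-\tfrac{4}{V}(n\pi)^2$) at the minimal value $h=z_n-2$ attained at $\theta=n\pi$, and is positive ($\approx\lambda_{2,0}$) once $h\gtrsim n^{-2/5}$, where the $h^{-3/2}$ term has become $O(n^{-2/5})$. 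Since $\tau\mapsto h$ is monotone on $(-\pi,0)$ this single zero is one value $\theta_0$. To rule out further inflection points on the interval I would invoke the complementary branch: near the lower edge $\theta=(n-1)\pi$ the relevant object is the $\lambda_+$ branch of the expansion around $y_{n-1}$, whose second derivative (\ref{pu19}) has all its signs positive, so $\lambda_n''>0$ there; and in the band center the $h^{-3/2}$ term is $O(n^{-1})$, leaving $\lambda_n''\approx\lambda_{2,0}>0$. Together these cover $[(n-1)\pi,n\pi]$ and force $\theta_0$ to be unique.

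I expect the main obstacle to be the error bookkeeping rather than the conceptual steps: one must verify that the relative $O(n^{-2/3})$ error in $h_\ast$, run through the inversion $\tau=-h_\ast^{1/2}(1+\cdots)$, collapses to an \emph{additive} $O(n^{-1})$ error in $\theta_0$ and no worse, which requires keeping the subleading coefficient $\lambda_{2,-1}$ and the $\tau^4$ term under explicit control. A secondary technical point is confirming that the two Puiseux expansions (around $y_n$ and around $y_{n-1}$), together with the elementary positivity estimate in the band center, genuinely cover the whole interval $[(n-1)\pi,n\pi]$, so that no inflection point can escape the local analysis.
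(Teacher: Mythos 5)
Your dominant-balance computation is exactly the paper's fixed-point argument: the paper rewrites $\lambda_n''=0$ as $h=f(h)$ with $f(h)=\bigl((\lambda_{2,-3}+\lambda_{2,-1}h+\lambda_{2,1}h^2+O(h^{5/2}))/\lambda_{2,0}\bigr)^{2/3}$ (see (\ref{ba11})), bootstraps $h=O(n^{-2/3})$, obtains $h=\delta_n^{2/3}(1+O(n^{-2/3}))$ as in (\ref{ba12}), and converts via (\ref{formula2}) to $\tau=-\delta_n^{1/3}+O(n^{-1})$ — identical to your $h_\ast$ and inversion, including the error bookkeeping that turns the relative $O(n^{-2/3})$ into the additive $O(n^{-1})$ in (\ref{asymptotics_theta0}). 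Your uniqueness mechanism near the upper edge is also essentially the paper's: you differentiate the bracket of (\ref{pu19}) in $h$ and let the $\tfrac32\lambda_{2,-3}h^{-5/2}$ term dominate, while the paper uses the separately established third-derivative expansion (\ref{pu25}) to show $\lambda_n^{(3)}<0$ on $[n\pi-\delta_n^{1/4},n\pi)$, hence strict monotonicity of $\lambda_n''$ there; since $d/d\theta=2\sin\tau\, d/dh$, these are the same argument. One caveat: you differentiate the $O(h)$ remainder of (\ref{pu19}) as if its derivative were $O(1)$, which needs the justification that the remainder comes from a convergent Puiseux series — precisely what the paper's statement (\ref{pu25}) packages for you.

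The genuine gap is your treatment of the band center. The expansion (\ref{pu19}) is stated only near $\theta=n\pi$ and carries an $O(h)$ remainder; through the middle of the band $h=z_n-2\cos\tau$ is of order $1$ (it reaches $\approx 2$ at mid-band and $\approx 4$ at the lower edge, at the boundary of the Puiseux series' radius of convergence). So once $h=O(1)$, the $O(h)$ error is comparable to $\lambda_{2,0}=2+O(n^{-2})$, and your inference ``the $h^{-3/2}$ term is $O(n^{-1})$, leaving $\lambda_n''\approx\lambda_{2,0}>0$'' does not follow: smallness of the singular term does not control the full truncation error. This is exactly why the paper develops a second, independent expansion around the band center (Lemma \ref{lemma_taylor_inner} and Proposition \ref{proposition_inner_taylor}): on the interval (\ref{inner06}) one has $|\sin\theta|\gtrsim \delta_n^{1/4}$, so (\ref{inner_lambda2}) gives $\lambda_n''=2+V^2\cos\theta/(2\theta\sin^3\theta)+O(n^{-1})=2+O(n^{-1/4})>0$. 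The two edge Puiseux expansions (around $y_n$ and $y_{n-1}$) only cover $\delta_n^{1/4}$-neighborhoods of the edges, i.e.\ $h=O(n^{-1/2})$, matching the paper's intervals $I_1$ and $I_3$; in between you must switch tools. You do flag ``the elementary positivity estimate in the band center'' as a point to confirm, but the mechanism you actually propose for it would fail; inserting Proposition \ref{proposition_inner_taylor} for the middle interval closes the argument, after which your proof coincides with the paper's.
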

Before the proof,
we give a numerical result
by using the explicit formulas (\ref{der1})-(\ref{der6}),
in Figure \ref{figure_inflection}.
The result shows
the formula (\ref{asymptotics_theta0}) gives a good approximation.
\begin{figure}[htbp]
\begin{center}
 \includegraphics[width=6cm]{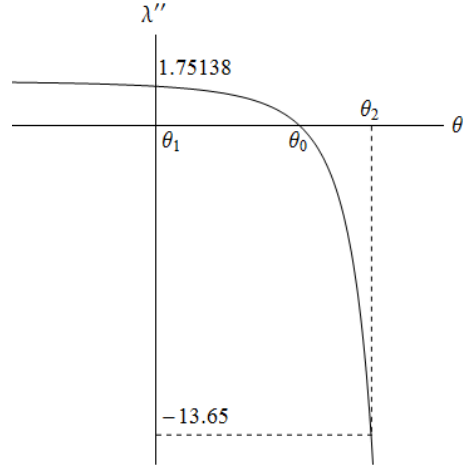}
\caption{The graph of $\lambda_n''(\theta)$
near $\theta_0=n\pi-\delta_n^{1/3}$,
$\delta_n=V^2/(4n\pi)$.
Here we take $V=1$, $n=1000$, 
$\theta_1=n\pi - 2 \delta_n^{1/3}$,
and $\theta_2=n\pi - \delta_n^{1/3}/2$.
}
\label{figure_inflection}
\end{center} 
\end{figure}

\begin{proof}
Put $\delta_n=V^2/(4n\pi)$.
We divide the interval $[(n-1)\pi,n\pi]$
into three intervals
\begin{eqnarray*}
&& I_1=[(n-1)\pi,(n-1)\pi+\delta_n^{1/4}],\quad
 I_2=[(n-1)\pi+\delta_n^{1/4}, n\pi -\delta_n^{1/4}],\\
&& I_3=[n\pi -\delta_n^{1/4}, n\pi].
\end{eqnarray*}
For $\theta\in I_1$, we apply the first equality of (\ref{ba09}) 
with $n$ replaced by $n-1$, so $\lambda_n=\lambda_+$
and $\theta=(n-1)\pi + \tau$. 
Then the expansion (\ref{pu19}) for $d^2\lambda_+/d\theta^2$
implies $\lambda_n''(\theta)>0$ for $\theta\in I_1$,
for sufficiently large $n$.
Moreover,
the formula (\ref{inner_lambda2}) given later 
in Proposition \ref{proposition_inner_taylor}
shows $\lambda_n''(\theta)>0$ for $\theta\in I_2$,
for sufficiently large $n$.

For $\theta\in I_3$, we apply 
the second equality of (\ref{ba09}), so $\lambda_n=\lambda_-$
and $\theta=n\pi+\tau$.
Then $-\delta_n^{1/4}<\tau<0$,
and (\ref{pu25}) for $\lambda_-$  implies
$\lambda_n^{(3)}(\theta)<0$ 
for $\theta\in I_3\setminus\{n\pi\}$,
for sufficiently large $n$.
Thus $\lambda_n''(\theta)$ is monotone decreasing on this interval,
and (\ref{pu19}) for $d^2\lambda_-/d\theta^2$ implies
$\lambda_n''(n\pi -\delta_n^{1/4})>0$ and 
$\lambda_n''(n\pi )<0$ for sufficiently large $n$.%
\footnote{When $V>0$, the latter fact can also be proved
by the explicit value
$\lambda_n''(n\pi)=-4(n\pi)^2/V$;
see ((\ref{pm11}) below).
}
Thus 
there exists a unique $\theta_0\in [(n-1)\pi,n\pi]$
with $\lambda_n''(\theta_0)=0$
for sufficiently large $n$,
and $\theta_0\in I_3$.

Let us find more detailed asymptotics of $\theta_0$.
By (\ref{pu19}) for $d^2\lambda_-/d\theta^2$,
\begin{eqnarray}
 \frac{d^2\lambda_n}{d\theta^2}=0
&\Leftrightarrow&
- \lambda_{2,-3} h^{-3/2}- \lambda_{2,-1} h^{-1/2}
+ \lambda_{2,0} - \lambda_{2,1} h^{1/2}
+ O(h)=0\nonumber\\
&\Leftrightarrow&
h
= f(h),\quad
f(h)=
\left(
\frac{\lambda_{2,-3} + \lambda_{2,-1} h
+  \lambda_{2,1} h^2
+ O(h^{5/2})}
{\lambda_{2,0}}
\right)^{2/3}.
\label{ba11}
 \end{eqnarray}
If $\theta\in I_3$, then $\tau =O(n^{-1/4})$
and $h=O(n^{-1/2})$ by (\ref{formula2}).
So, if $h$ is the solution to (\ref{ba11}),
the expansions of the coefficients (\ref{pu21})-(\ref{pu24})
imply $h=O(n^{-2/3})$,
and again by (\ref{ba11})
\begin{eqnarray}
h = f(h)&=&
 \left(\delta_n+O(n^{-5/3})\right)^{2/3}\nonumber\\
&=& \delta_n^{2/3}
\left(1+O(n^{-2/3})\right).
\label{ba12}
\end{eqnarray}
Thus (\ref{ba12}) and (\ref{formula2}) imply
\begin{displaymath}
 \tau = -\delta_n^{1/3}+O(n^{-1}).
\end{displaymath}
Since $\theta=n\pi + \tau$, we have the conclusion.
\end{proof}

\subsection{Analysis on $\lambda_n(\theta)$ near the band center}
Let us prove 
$\lambda_n(\theta)$ is similar to the parabola $\lambda=\theta^2$
near the band center, that is, for $\theta$ in the interval
\begin{equation}
 \label{inner06}
\left[
(n-1)\pi+\left(\frac{V^2}{4n\pi}\right)^{1/4}
,
n\pi-\left(\frac{V^2}{4n\pi}\right)^{1/4}
\right].
\end{equation}
Notice that the asymptotics (\ref{asymptotics_theta0})
implies $\theta_0$ is not in the interval (\ref{inner06})
for large $n$.

\begin{lemma}
 \label{lemma_taylor_inner}
Let $k(\theta)=D^{-1}(2\cos\theta)$ given in (\ref{ba01}),
and $\theta$ in the interval (\ref{inner06}).
Then,
we have the following expansion.
\begin{eqnarray}
\label{taylor_k_inner} 
k &=& \theta - V D'(\theta)^{-1}\frac{\sin\theta}{\theta}
-\frac{V^2}{2}D'(\theta)^{-3}D''(\theta)
\left(\frac{\sin\theta}{\theta}\right)^2
+ R_3,\\
R_3 &=&
-V^3\left(\frac{\sin\theta}{\theta}\right)^3
\int_0^1 \frac{(1-z)^2}{2}\left(D^{-1}\right)^{(3)}
\left(D(\theta) -z V\frac{\sin\theta}{\theta}\right)dz.
\nonumber
\end{eqnarray}
\end{lemma}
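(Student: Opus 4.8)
The plan is to recognize the right-hand side of (\ref{taylor_k_inner}) as the second-order Taylor expansion of $D^{-1}$, with integral remainder, about the base point $D(\theta)$. The starting observation is the algebraic identity
\[
2\cos\theta = D(\theta) - V\frac{\sin\theta}{\theta},
\]
which is immediate from the definition (\ref{d0k}) of $D$. Writing $w=D(\theta)$ and $\Delta=-V\sin\theta/\theta$, this turns the definition (\ref{ba01}) into
\[
k(\theta)=D^{-1}(2\cos\theta)=D^{-1}(w+\Delta),
\]
so the whole problem reduces to a single-variable Taylor expansion of the smooth function $D^{-1}$ at $w$ with the small increment $\Delta$.

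First I would write Taylor's formula to second order with the integral form of the remainder,
\[
D^{-1}(w+\Delta)=D^{-1}(w)+(D^{-1})'(w)\Delta+\frac{1}{2}(D^{-1})''(w)\Delta^2+\int_0^1\frac{(1-z)^2}{2}(D^{-1})^{(3)}(w+z\Delta)\,dz\,\Delta^3.
\]
Since $w=D(\theta)$, the identity $D^{-1}(D(\theta))=\theta$ yields the leading term $\theta$, while the inverse-function formulas (\ref{dkformula2}), $(D^{-1})'(D(\theta))=D'(\theta)^{-1}$ and $(D^{-1})''(D(\theta))=-D'(\theta)^{-3}D''(\theta)$, convert the first two correction terms into exactly $-VD'(\theta)^{-1}\sin\theta/\theta$ and $-\frac{V^2}{2}D'(\theta)^{-3}D''(\theta)(\sin\theta/\theta)^2$. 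Substituting $\Delta=-V\sin\theta/\theta$ and $w+z\Delta=D(\theta)-zV\sin\theta/\theta$ into the remainder reproduces the stated $R_3$ verbatim. This matching is purely mechanical and poses no difficulty.

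The single point that genuinely requires the hypothesis that $\theta$ lie in the interval (\ref{inner06}) is the justification that $D^{-1}$ is three times continuously differentiable along the entire segment $\{w+z\Delta:z\in[0,1]\}$, so that Taylor's theorem applies and $R_3$ is finite. I would argue that this segment is contained in the range of the monotone branch $D\colon[l_{n-1},l_n]\to[y_{n-1},y_n]$ on which $D^{-1}$ is the genuine inverse. By Proposition \ref{proposition_discriminant}, $D'$ vanishes on $[l_{n-1},l_n]$ only at the endpoints $l_{n-1}$ and $l_n$, where $D^{-1}$ has the square-root branch points $y_{n-1}$ and $y_n$ of the Puiseux expansion (\ref{puiseux_k}); strictly inside, $D^{-1}$ is real-analytic. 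The endpoints of the segment are $w=D(\theta)$ at $z=0$ and $2\cos\theta=D(k(\theta))$ at $z=1$; here $k(\theta)\in[l_{n-1},l_n]$ by the construction of $D^{-1}$ in (\ref{ba01}), and $\theta\in[l_{n-1},l_n]$ by the separation estimate below, so both endpoints lie in the range $[y_{n-1},y_n]$, and since the segment is affine in $z$ it stays inside this interval throughout. The main obstacle, then, is the quantitative separation from the branch points: using $l_{n-1}=(n-1)\pi+O(n^{-1})$ and $l_n=n\pi+O(n^{-1})$ from Proposition \ref{proposition_knln} (via (\ref{asymptotic_ln})), the restriction (\ref{inner06}) keeps $\theta$ at distance at least $(V^2/(4n\pi))^{1/4}-O(n^{-1})$ from both $l_{n-1}$ and $l_n$. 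For large $n$ this distance is positive and of order $n^{-1/4}$, which bounds $D(\theta)$ and $2\cos\theta$ away from $y_{n-1},y_n$, guarantees $(D^{-1})^{(3)}$ is continuous along the segment, and so validates the expansion.
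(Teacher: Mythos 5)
Your proof is correct and takes essentially the same route as the paper: a second-order Taylor expansion of $D^{-1}$ about the base point $y=D(\theta)$ with integral-form remainder, the substitution $2\cos\theta=D(\theta)-V\sin\theta/\theta$, and the inverse-function formulas (\ref{dkformula2}) to identify the coefficients and the remainder $R_3$. Your extra verification that the segment $\{D(\theta)-zV\sin\theta/\theta: z\in[0,1]\}$ stays quantitatively away from the branch points $y_{n-1},y_n$, so that $(D^{-1})^{(3)}$ is continuous along it, is a point the paper leaves implicit (cf.\ the remark after Proposition \ref{proposition_puiseux_k}), and you handle it correctly.
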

{\bf Remark.} The explicit forms of the derivatives
are given in (\ref{d0k})-(\ref{d6k}).

\begin{proof}
The Taylor expansion of $k=D^{-1}(y)$
around $y=D(\theta)$ is 
\begin{eqnarray}
 k 
&=& D^{-1}(D(\theta))
+ (D^{-1})'(D(\theta))\cdot(y-D(\theta))\nonumber\\
&&\quad + \frac{(D^{-1})''(D(\theta))}{2}\cdot(y-D(\theta))^2
+ R_3(y),\label{ba15}\\
R_3(y)&=&
(y-D(\theta))^3\cdot
\int_0^1 \frac{(1-z)^2}{2}
\left(D^{-1}\right)^{(3)}\left(D(\theta)+ z(y-D(\theta))\right)dz.
\label{ba16}
\end{eqnarray}
Substituting the equality
\begin{displaymath}
y=2\cos\theta = D(\theta)-V\frac{\sin\theta}{\theta}
\end{displaymath}
into (\ref{ba15}) and (\ref{ba16}),
we obtain the conclusion since
\begin{eqnarray*}
D^{-1}(D(\theta))&=&\theta,\\
 (D^{-1})'(D(\theta))&=&D'(\theta)^{-1},\\
 (D^{-1})''(D(\theta))&=&-D'(\theta)^{-3}D''(\theta),
\end{eqnarray*}
where we used (\ref{dkformula2}).
\end{proof}
\begin{proposition}
\label{proposition_inner_taylor}
 Let $n$ be a sufficiently large integer.
Then,
for $\theta$ in the interval (\ref{inner06}),
we have the expansion of $k=D^{-1}(2\cos\theta)$
given in (\ref{ba01}) as follows.
\begin{eqnarray}
\label{inner_k0}
k
&=&
\theta
+ \frac{V}{2\theta}
+ \frac{V^2}{8\theta^2} \cot\theta
+ O(n^{-5/2})
,\\
\nonumber
\frac{dk}{d\theta}
&=& 
1 - \frac{V}{2\theta^2} - \frac{V^2}{8\theta^2\sin^2\theta}+
O(n^{-9/4})
,\\
\nonumber
\frac{d^2k}{d\theta^2}
&=&   
\frac{V^2\cos\theta}{4\theta^2\sin^3\theta}+
O(n^{-2}).
\end{eqnarray}
Moreover, for $\lambda_n(\theta)=(k(\theta))^2$ we have
\begin{eqnarray}
\nonumber
\lambda_n
&=& \theta^2 + V 
+\frac{V^2}{4\theta}\cot\theta
+ O(n^{-3/2}),\\
\nonumber
\frac{d\lambda_n}{d\theta}
&=& 
2\theta 
-\frac{V^2}{4\theta \sin^2\theta}
+O(n^{-5/4}),\\
  \label{inner_lambda2}
\frac{d^2\lambda_n}{d\theta^2}
&=& 
2
+
\frac{V^2\cos\theta}{2\theta\sin^3\theta}
+
O(n^{-1}).
\end{eqnarray}
\end{proposition}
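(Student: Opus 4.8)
The plan is to feed the explicit derivative formulas (\ref{d0k})--(\ref{d2k}) into the finite Taylor expansion of Lemma \ref{lemma_taylor_inner} and expand everything in powers of $1/\theta$, while controlling the error terms uniformly over the interval (\ref{inner06}). The single structural fact that drives every estimate is that on this interval $\theta$ stays at distance at least $\delta_n^{1/4}$, $\delta_n=V^2/(4n\pi)$, from the nearest multiple of $\pi$, so that $|\sin\theta|\geq \sin(\delta_n^{1/4})\geq c\,n^{-1/4}$ for large $n$. Thus $1/\sin\theta=O(n^{1/4})$ is the worst singular factor we ever meet, and a clean bookkeeping of its powers yields all the stated orders.

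First I would treat the two explicit terms in (\ref{taylor_k_inner}). Writing $D'(\theta)=-2\sin\theta\bigl(1-\tfrac{V\cos\theta}{2\theta\sin\theta}+\tfrac{V}{2\theta^2}\bigr)$ from (\ref{d1k}) and expanding the reciprocal by the geometric series, the term $-VD'(\theta)^{-1}\sin\theta/\theta$ produces $\tfrac{V}{2\theta}+\tfrac{V^2}{4\theta^2}\cot\theta+O(\cdots)$; the third term, using $D''(\theta)=-2\cos\theta+O(1/\theta)$ from (\ref{d2k}) together with $D'(\theta)^{-3}\sim(-2\sin\theta)^{-3}$, contributes $-\tfrac{V^2}{8\theta^2}\cot\theta$. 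Adding these gives exactly the coefficient $+\tfrac{V^2}{8\theta^2}\cot\theta$ of (\ref{inner_k0}); the key point is that the $\cot\theta$ contributions of the two terms do not cancel but combine to half the naive value.

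The main obstacle is the uniform estimate of the remainder $R_3$ in Lemma \ref{lemma_taylor_inner}, because it contains two competing effects of $\sin\theta$. On one hand the prefactor $(\sin\theta/\theta)^3$ is small; on the other hand $(D^{-1})^{(3)}$, expressed through (\ref{dkformula2}) as $3(D')^{-5}(D'')^2-(D')^{-4}D^{(3)}$ evaluated along the segment joining $D(\theta)$ and $2\cos\theta$, blows up like $(D')^{-5}\sim|\sin\theta|^{-5}$. I would first check that the whole segment of arguments corresponds to $k$-values on which $|D'|\geq c|\sin\theta|$, so that the $(D')^{-5}$ bound is legitimate; the net power of $\sin\theta$ is then $3-5=-2$, giving $|R_3|\leq C V^3|\sin\theta|^{-2}\theta^{-3}=O(n^{1/2}\cdot n^{-3})=O(n^{-5/2})$ uniformly, which is precisely the error claimed for $k$ in (\ref{inner_k0}).

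Finally, for the derivatives I would differentiate the expansion of $k$, noting that every $\theta$-differentiation raises the power of $1/\sin\theta$ by one and hence degrades the error by the factor $O(n^{1/4})$: the $O(n^{-5/2})$ error for $k$ becomes $O(n^{-9/4})$ for $dk/d\theta$ and $O(n^{-2})$ for $d^2k/d\theta^2$, matching (\ref{inner_k0}). That differentiating the remainder is legitimate and obeys this loss follows from the explicit integral form of $R_3$ (or, alternatively, from a Cauchy estimate using the analyticity of $D^{-1}$ away from the branch points, cf.\ the Remark after Proposition \ref{proposition_puiseux_k}). For $\lambda_n=k^2$ I would then use $\lambda_n'=2kk'$ and $\lambda_n''=2(k')^2+2kk''$; since $k\sim\theta\sim n$, multiplication by $k$ inflates the errors by one power of $n$, turning the $k$-errors into $O(n^{-3/2})$, $O(n^{-5/4})$, $O(n^{-1})$ for $\lambda_n,\lambda_n',\lambda_n''$ respectively. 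A short check shows the $1/\theta$ terms cancel in $\lambda_n'$ and that the dominant correction to $\lambda_n''=2$ comes from $2kk''\sim V^2\cos\theta/(2\theta\sin^3\theta)$, reproducing (\ref{inner_lambda2}).
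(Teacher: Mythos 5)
Your proposal is correct and takes essentially the same route as the paper: substitute the explicit derivatives of $D$ into the finite Taylor expansion of Lemma \ref{lemma_taylor_inner}, exploit $|\sin\theta|^{-1}=O(n^{1/4})$ on the interval (\ref{inner06}) so that the net power $(\sin\theta)^3(D')^{-5}$ in $R_3$ gives $|R_3|=O(n^{-5/2})$, lose a factor $n^{1/4}$ per $\theta$-differentiation, and pass to $\lambda_n=k^2$ with the cancellation of the $V/\theta$ terms in $\lambda_n'$. The only organizational difference is that the paper establishes your key lower bound $|D'(k_z)|\geq c|\sin\theta|$ along the segment by a two-step bootstrap (a rough estimate $|R_3|=O(n^{-7/4})$ first yields $|k(\theta)-\theta|=O(n^{-1})$, which is then fed back in), while you verify the localization of $k_z$ directly; both are valid, and the paper likewise differentiates the explicit integral form of $R_3$ rather than invoking Cauchy estimates.
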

\begin{proof}
Let us rewrite the formula (\ref{taylor_k_inner}) as
\begin{eqnarray}
k = \theta 
- V \frac{\sin\theta}{D'(\theta)}\frac{1}{\theta}
-\frac{V^2}{2}
\left(
\frac{\sin \theta}{D'(\theta)}
\right)^3
\frac{D''(\theta)}{\theta^2\sin\theta}
+ R_3.
\label{inner15}
\end{eqnarray}
When $\theta$ is in the interval (\ref{inner06}), we have
\begin{equation}
 \label{inner16}
\theta = O(n),\quad \theta^{-1}=O(n^{-1}),\quad
(\sin\theta)^{-1}=O(n^{1/4}),\quad
\cot\theta = O(n^{1/4}).
\end{equation}
Thus
\begin{eqnarray}
 \frac{\sin\theta}{D'(\theta)}
&=&
\frac{\sin\theta}{-2\sin\theta + V(\theta^{-1}\cos\theta-\theta^{-2}\sin\theta)}
\nonumber\\
&=&-\frac{1}{2}\cdot
\frac{1}{1 - (V/2)(\theta^{-1}\cot\theta-\theta^{-2})}\nonumber\\
&=&-\frac{1}{2}-\frac{V}{4}\theta^{-1}\cot\theta + O(n^{-3/2}),
\label{inner17}
\\
 \left(\frac{\sin\theta}{D'(\theta)}\right)^3
&=&-\frac{1}{8}-  \frac{3V}{16}
\theta^{-1}\cot\theta + O(n^{-3/2}),
\nonumber\\
 \frac{D''(\theta)}{\sin\theta}
&=&
\frac{-2\cos\theta +V(-\theta^{-1}\sin\theta 
- 2 \theta^{-2}\cos\theta + 2\theta^{-3}\sin\theta)}{\sin\theta}\nonumber\\
&=& -2\cot\theta + O(n^{-1}).\nonumber
\end{eqnarray}
Thus the first three terms in (\ref{inner15}) coincide
the formula (\ref{inner_k0}).
Let us show that the remainder term $R_3$ is negligible.
By the differentiation of the inverse function,
we can prove that
\begin{equation}
\label{inner20}
(D^{-1})^{(\j)}(y)=D'(k)^{-(2j-1)}\cdot
(\mbox{Polynomial of }D'(k),\ \ldots,\ D^{(j)}(k)).
\end{equation}
The polynomial part is bounded uniformly with respect to $n$.
By the expansion (\ref{puiseux_k}) and the monotonicity of $k=k(\theta)$,
we see that $k=k(\theta)$
satisfies
\begin{equation}
\label{inner21}
 (n-1)\pi + \left(\frac{V^2}{4n\pi}\right)^{1/4}+O(n^{-3/4})
\leq k
\leq n\pi - \left(\frac{V^2}{4n\pi}\right)^{1/4}+O(n^{-3/4}),
\end{equation}
and
\begin{equation}
\label{inner22}
 D'(k)^{-1}=(-2\sin k + O(n^{-1}))^{-1} = O(n^{1/4}).
\end{equation}
Put $k_z=D^{-1}(D(\theta)-zV{\sin\theta}/{\theta})$ $(0\leq z \leq 1)$,
then $k_z$ lies between $k=D^{-1}(2\cos\theta)$ and $\theta=D^{-1}(D(\theta))$,
and by (\ref{inner20})-(\ref{inner22})
\begin{eqnarray*}
&&(D^{-1})^{(3)}\left(D(\theta)-zV\frac{\sin\theta}{\theta}\right)\nonumber\\
&=& 
D'(k_z)^{-5}\cdot 
(\mbox{Polynomial of }D'(k_z),D''(k_z), D^{(3)}(k_z))\nonumber\\
&=& O(n^{5/4}).
\end{eqnarray*}
Thus, we once have a rough estimate
\begin{displaymath}
 |R_3|= O(n^{-3})\cdot O(n^{5/4}) = O(n^{-7/4}),
\end{displaymath}
so the equation (\ref{inner_k0}) holds with the worse remainder term
$O(n^{-7/4})$. However, this conclusion implies
\begin{equation}
 \label{inner24}
 |k(\theta)-\theta|=O(n^{-1}),
\end{equation}
which also implies
\begin{eqnarray*}
&& (\sin \theta)^3 \cdot 
(D^{-1})^{(3)}\left(D(\theta)-zV\frac{\sin\theta}{\theta}\right)\\
&=& 
\left(\frac{\sin\theta}{D'(k_z)}\right)^3
\cdot D'(k_z)^{-2}\cdot 
(\mbox{Polynomial of }D'(k_z),D''(k_z), D^{(3)}(k_z))\\
&=& O(n^{1/2}),
\end{eqnarray*}
Thus we have $|R_3|=O(n^{-5/2})$ and (\ref{inner_k0}) holds.

Other estimates can be obtained by differentiating the 
formula (\ref{taylor_k_inner}).
Then we find the estimate for the remainder term becomes worse 
by the power $n^{1/4}$ per one differentiation.
For example, the leading term in the remainder 
in (\ref{inner17}) is up to constant multiple
\begin{displaymath}
 \theta^{-2}\cot^2\theta = O(n^{-3/2}).
\end{displaymath}
Differentiating this term, we get
\begin{displaymath}
-2 \theta^{-3}\cot^2\theta+\theta^{-2}\cdot 2\cot\theta
\cdot \left(-\frac{1}{\sin^2\theta}\right)=O(n^{-5/4}),
\end{displaymath}
and the result is worse than $O(n^{-3/2})$ by $n^{1/4}$.
As for $R_3$,
\begin{eqnarray}
 \frac{dR_3}{d\theta}
&=&
-V^3\left(\frac{3\sin^2\theta\cos\theta}{\theta^3}-\frac{3\sin^3\theta}{\theta^4}\right)
\int_0^1 \frac{(1-z)^2}{2}\left(D^{-1}\right)^{(3)}
\left(D(\theta) -z V\frac{\sin\theta}{\theta}\right)dz\nonumber\\
&&-V^3\left(\frac{\sin\theta}{\theta}\right)^3
\int_0^1 \frac{(1-z)^2}{2}\left(D^{-1}\right)^{(4)}
\left(D(\theta) -z V\frac{\sin\theta}{\theta}\right)\nonumber\\
&&\hspace{3cm}\cdot
 \left(D'(\theta)-zV\left(\frac{\cos\theta}{\theta}-\frac{\sin\theta}{\theta^2}\right)\right)
dz.\label{inner25}
\end{eqnarray}
For the first term of (\ref{inner25}), 
one $\sin \theta$ in the numerator changed into $\cos\theta$ 
by differentiation,
and the estimate becomes worse by $n^{1/4}$,
because of (\ref{inner16}).
For the second term,
that `$(D^{-1})^{(3)}$ turned into $(D^{-1})^{(4)}$' 
makes two $D'(k_z)$ in the denominator (see (\ref{inner20})),
one of which cancels with $D'(\theta)$ appeared next.
Thus the estimate also becomes worse by $n^{1/4}$, in total.
We can treat the other remainder terms similarly.
\end{proof}

\subsection{Estimate for the amplitude function}
We shall give the bound for the amplitude
function in (\ref{bloch16}), that is,
\begin{eqnarray}
a_n(\theta,x,y)
&=&\Phi_n(\theta,x,y)\frac{dk}{d\theta}\nonumber\\
&=&\frac{-2\sin k}{D'(k)}
\left(
\cos kx + \frac{V}{2k}\sin kx
\right)
\left(
\cos ky + \frac{V}{2k}\sin ky
\right)\nonumber\\
&&\hspace{.5cm}+ i 
\frac{-2\sin \theta}{D'(k)}
\sin k(x-y)
+
\frac{-2\sin^2 \theta}{\sin k \cdot D'(k)}\sin kx \sin ky,
\label{amp01} 
\end{eqnarray}
where $x,y\in (0,1)$ and $k=k(\theta)=D^{-1}(2\cos \theta)$
given in (\ref{ba01}).

\begin{proposition}
 \label{proposition_amplitude1}
Let $a_n(\theta,x,y)$ given in (\ref{amp01}).
\begin{enumerate}
 \item The function $a_n$ is bounded uniformly with respect to
$\theta\in \mathbf{R}$, $x,y\in (0,1)$ and $n=1,2,\ldots$.

 \item Put $\delta_1 = |V|/(n\pi)$, $\delta_2=(V^2/(4n\pi))^{1/4}$, and
\begin{eqnarray*}
&&I_1 = [(n-1)\pi, (n-1)\pi +\delta_1],\quad
I_2 = [(n-1)\pi + \delta_1, (n-1)\pi + \delta_2],\\
&& 
I_3 = [(n-1)\pi + \delta_2, n\pi - \delta_2],\quad
I_4 = [n\pi - \delta_2, n\pi - \delta_1],\quad
I_5 = [n\pi - \delta_1, n\pi].
\end{eqnarray*}
For sufficiently large $n$,
the derivative $a_n'={\partial a_n}/{\partial\theta}$ 
obeys the following bound
\begin{equation}
|a_n'(\theta,x,y)|
\leq
\begin{cases}
 C n & (\theta\in I_1),\\
 C (n^{-1}(\theta-(n-1)\pi)^{-2}+1) & (\theta\in I_2),\\
 C (n^{-1/2}+1)& (\theta\in I_3),\\
 C (n^{-1}(n\pi - \theta)^{-2} +1)& (\theta\in I_4),\\
 C n & (\theta\in I_5),
\end{cases}
 \label{bound_amplitude}
\end{equation}
where 
$C$ is a positive constant independent of 
$\theta$, $x$, $y$, and $n$.
Especially,
\begin{displaymath}
\|a_n'\|_{L^1([(n-1)\pi, n\pi])}
= \int_{(n-1)\pi}^{n\pi}|a_n'(\theta,x,y)|d\theta
\end{displaymath}
is bounded uniformly with respect to $n$, $x$, $y$.
\end{enumerate}
\end{proposition}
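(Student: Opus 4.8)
The plan is to prove (i) and (ii) separately, using the band-edge Puiseux expansion of Proposition~\ref{proposition_puiseux_lambda} near $\theta=n\pi$ and the inner expansion of Proposition~\ref{proposition_inner_taylor} near the band center. The first, and in fact decisive, step is to regroup the three terms of (\ref{amp01}) so that the potentially singular factors $1/D'(k)$ and $1/\sin k$ enter only through the three combinations $\frac{\sin k}{\sin\theta}\frac{dk}{d\theta}$, $\frac{dk}{d\theta}$ and $\frac{\sin\theta}{\sin k}\frac{dk}{d\theta}$. Indeed, since $dk/d\theta=-2\sin\theta/D'(k)$ by (\ref{bloch17.5}), writing $p(x)=\cos kx+\tfrac{V}{2k}\sin kx$ we have
\[
a_n=\frac{\sin k}{\sin\theta}\frac{dk}{d\theta}\,p(x)p(y)+i\,\frac{dk}{d\theta}\,\sin k(x-y)+\frac{\sin\theta}{\sin k}\frac{dk}{d\theta}\,\sin kx\,\sin ky .
\]
For (i) I would show each of the three combinations is bounded uniformly in $n$ and $\theta$. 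Away from the edges $D'(k)$ and $\sin k$ are bounded below and this is clear; near $\theta=n\pi$, where $\sin\theta$, $\sin k$ and $D'(k)$ all tend to $0$, I would use the identity $(2\sin\theta)^2=4-D(k)^2$ together with Proposition~\ref{proposition_puiseux_lambda} to check that $\sin k/\sin\theta$ stays bounded and that the blow-up of $\sin\theta/\sin k$ is exactly compensated by the vanishing of $dk/d\theta$. Since $p(x)$, $\sin k(x-y)$ and $\sin kx\,\sin ky$ are bounded on $(0,1)$, assertion (i) follows.

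For (ii) I would differentiate $a_n$ in the regrouped form above. This regrouping is essential: differentiating (\ref{amp01}) naively produces a term $D''(k)(dk/d\theta)/D'(k)^2$ of apparent size $O(n^2)$ near the edge, whereas the true size is $O(n)$, and the cancellation becomes visible only once the three combinations are isolated. On $I_1$ and $I_5$, where $|\theta-n\pi|\le\delta_1=|V|/(n\pi)$, Proposition~\ref{proposition_puiseux_lambda} gives $D'(k)\sim 2h^{1/2}$ with $h\sim h_0:=z_n-2\sim(V/(2n\pi))^2$ by (\ref{formula1}), so that $d^2k/d\theta^2\sim -h_0/(h_0+\tau^2)^{3/2}$ is of size $O(n)$ there. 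Differentiating each of the three bounded combinations then yields $|a_n'|\le Cn$, the dominant contribution coming from the $\theta$-derivatives of $\frac{\sin k}{\sin\theta}\frac{dk}{d\theta}$ and $\frac{\sin\theta}{\sin k}\frac{dk}{d\theta}$.

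On the intermediate intervals $I_2$, $I_4$ the same expansion applies but now $\tau^2\gg h_0$, so $d^2k/d\theta^2\sim -h_0\,\tau^{-3}$ and $\frac{d}{d\theta}\bigl(\frac{\sin\theta}{\sin k}\frac{dk}{d\theta}\bigr)\sim -a\,\tau^{-2}$ with $a=V/(n\pi)$; both are dominated by $n^{-1}(\mathrm{dist})^{-2}$ once $\mathrm{dist}=n\pi-\theta\ge\delta_1$, which gives the stated bound. On the central interval $I_3$ I would instead use Proposition~\ref{proposition_inner_taylor}: there $k=\theta+V/(2\theta)+\cdots$ and $dk/d\theta=1+O(n^{-2})$, and the leading parts of the derivatives of the three combinations cancel (as for $\frac{d}{d\theta}\frac{\sin\theta}{\sin k}$ when $k\approx\theta$), leaving the refined bound $C(n^{-1/2}+1)$. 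Finally, integrating the five pointwise bounds and using $|I_1|=|I_5|=\delta_1\sim n^{-1}$ with $n^{-1}\delta_1^{-1}=O(1)$, $\int_{\delta_1}^{\delta_2}n^{-1}s^{-2}\,ds=O(1)$ and $|I_3|=O(1)$, I obtain that $\|a_n'\|_{L^1([(n-1)\pi,n\pi])}$ is bounded uniformly in $n$, $x$, $y$.

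The hardest part will be the bookkeeping on the crossover intervals $I_2$, $I_4$, where neither expansion is cleanly dominant: one must verify uniformly that the cancellations responsible for reducing $O(n^2)$ to $O(n)$ at the edge survive a single differentiation, so that the bound degrades only to $n^{-1}(\mathrm{dist})^{-2}$ and not to $n^{-2}(\mathrm{dist})^{-3}$ or worse, and that the two regimes match at $\theta=n\pi\mp\delta_1$. Controlling the remainders of the Puiseux and inner expansions through the differentiation—each derivative worsening the remainder by a factor $n^{1/4}$, exactly as in the proof of Proposition~\ref{proposition_inner_taylor}—while keeping all constants independent of $x,y\in(0,1)$ is the delicate technical point.
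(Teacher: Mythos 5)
Your plan is, in structure, the paper's own proof: the ``regrouping'' you propose as the decisive first step is literally the form (\ref{amp01}) already given in the paper (your three combinations are exactly $\frac{-2\sin k}{D'(k)}$, $\frac{dk}{d\theta}=\frac{-2\sin\theta}{D'(k)}$, and $\frac{-2\sin^2\theta}{\sin k\, D'(k)}$), part (i) is proved there just as you sketch (Puiseux expansion near the edges, with $h=z_n-2\cos\tau\geq 4\sin^2(\tau/2)$ playing the role of your identity (\ref{formula4}), and the inner expansion on $I_3$), and part (ii) proceeds interval by interval with the same final integration. So the approach is right; the problem is that the one idea doing all the work in part (ii) is asserted rather than supplied. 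You say the cancellation reducing the naive edge size to the stated bounds ``becomes visible once the three combinations are isolated'' and defer its uniform verification to ``delicate bookkeeping on the crossover intervals $I_2$, $I_4$.'' In the paper this step is not asymptotic bookkeeping at all, but an exact algebraic identity valid for all $k$: differentiating the first factor gives the numerator $-2\cos k\,D'(k)+2\sin k\,D''(k)=V\left(-2k^{-1}-2k^{-2}\sin k\cos k+4k^{-3}\sin^2 k\right)=O(n^{-1})$ (the $\pm 4\sin k\cos k$ terms cancel identically, by (\ref{d1k})--(\ref{d2k})), so that $f_1=\frac{-2\sin\theta}{D'(k)^3}\cdot O(n^{-1})$ holds uniformly on the whole band and your worry about matching the two regimes at $\theta=n\pi\mp\delta_1$ simply does not arise. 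Without this identity the cancellation-free bound on $I_2\cup I_4$ is only $O(|\tau|^{-1})$, whose integral is $O(\log n)$, not $O(1)$.

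There is a second, sharper gap in your treatment of the third combination. You propose to differentiate $\frac{\sin\theta}{\sin k}\frac{dk}{d\theta}$ as it stands, but at the upper band edge $\theta=n\pi$ one has $k(n\pi)=n\pi$ exactly, and with $\tau=\theta-n\pi$ one finds $k-n\pi\sim -\tau^2\, n\pi/V$, so $\sin k$ vanishes to \emph{second} order in $\tau$ while $\sin\theta$ vanishes to first order. Each term of the product rule --- $\frac{\cos\theta}{\sin k}k'$, $\frac{\sin\theta\cos k}{\sin^2 k}(k')^2$, $\frac{\sin\theta}{\sin k}k''$ --- then diverges like $|\tau|^{-1}$ as $\tau\to 0$; only their sum is $O(n)$, so your claimed $Cn$ bound on $I_5$ does not follow from termwise estimates of the differentiated combination. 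The paper forestalls this by the exact factorization $D(k)^2-4=\sin k\left(-4\sin k+4Vk^{-1}\cos k+V^2k^{-2}\sin k\right)$, giving $\frac{-2\sin^2\theta}{\sin k\,D'(k)}=\frac{-4\sin k+4Vk^{-1}\cos k+V^2k^{-2}\sin k}{2D'(k)}$, which removes $\sin k$ from the denominator \emph{before} differentiating and again yields $f_3=\frac{-2\sin\theta}{D'(k)^3}\cdot O(n^{-1})$. (A minor slip: on $I_3$ you write $dk/d\theta=1+O(n^{-2})$; by (\ref{amp03}) it is $1+O(n^{-3/4})$, since $(\sin\theta)^{-1}=O(n^{1/4})$ there --- harmless for your argument, but the stated order is wrong.) In short: correct strategy, identical to the paper's, but the two exact cancellation identities that make the edge estimates uniform --- and that dissolve the very crossover difficulty you flag as hardest --- are the missing content, and for $f_3$ their absence is not just an omission but makes the proposed termwise differentiation fail near $\theta=n\pi$.
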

\begin{proof}
(i) First we prove $dk/d\theta=-2\sin \theta/D'(k)$ is 
bounded uniformly with respect to $n$ and 
$\theta\in [(n-1)\pi,n\pi)]$.
For $\theta\in I_4\cup I_5$, we have $\tau = \theta-n\pi = O(n^{-1/4})$,
and $h = (-1)^n (y_n-2\cos\theta)=O(n^{-1/2})$ by (\ref{formula2}).
Then we have by the expansions (\ref{taylor_dk}) and (\ref{ba07})
\begin{eqnarray}
k-l_n&=& k_{-} -l_n = - e_1 h^{1/2}(1+O(n^{-1/2})) = 
- h^{1/2}(1+O(n^{-1/2})),\nonumber\\
 D'(k) &=& 2d_2 (k-l_n)(1+O(n^{-1/2}))= (-1)^n h^{1/2}(1+O(n^{-1/2})).
\label{amp02}
\end{eqnarray}
Let $z_n=(-1)^n y_n$ as in (\ref{zndef}). 
Since $z_n \geq 2$, 
\begin{equation}
\label{amp021}
 h
=z_n-2\cos\tau
\geq 
2 - 2 \cos \tau
= 4\sin^2\frac{\tau}{2}.
\end{equation}
Thus
\begin{displaymath}
\left| \frac{-2\sin\theta}{D'(k)}\right|
\leq 
\left|
\frac{2\sin\tau}{2 \sin(\tau/2)}
\right|
(1+O(n^{-1/2}))=
\left|
2\cos(\tau/2)
\right|
(1+O(n^{-1/2})),
\end{displaymath}
and the right hand side is uniformly bounded.
We can prove $dk/d\theta$ is uniformly bounded
for $\theta\in I_1\cup I_2$ in a similar way.
For $\theta\in I_3$,
the bounds (\ref{inner16}) and (\ref{inner24}) hold,
so (\ref{inner16}) holds even if $\theta$ is replaced by $k$.
Then
\begin{eqnarray}
 \frac{-2\sin\theta}{D'(k)}
&=&\frac{-2\sin k (1+O(n^{-1}))}{-2\sin k + V(k^{-1}\cos k - k^{-2}\sin k)}
\nonumber\\
&=&\frac{1+O(n^{-1})}{1 -(V/2)(k^{-1}\cot k - k^{-2})}
= 1 + O(n^{-3/4}).
\label{amp03}
\end{eqnarray}
Thus $dk/d\theta$ is uniformly bounded on all the intervals $I_1,\ldots,I_5$.
Similarly we can prove 
$-2 \sin k/D'(k) $ and $-2 \sin^2\theta/(\sin k \cdot D'(k))$ 
are uniformly bounded.
The remaining factors are clearly bounded,
so we have the conclusion.

(ii) It is sufficient to show the three functions
\begin{displaymath}
f_1(\theta)= \frac{d}{d\theta}\left(\frac{-2\sin k}{D'(k)}\right),\quad
f_2(\theta)= \frac{d}{d\theta}\left(\frac{-2\sin \theta}{D'(k)}\right),\quad
f_3(\theta)= \frac{d}{d\theta}\left(\frac{-2\sin^2 \theta}{\sin k \cdot D'(k)}\right)
\end{displaymath}
obey the bound (\ref{bound_amplitude}),
since the derivatives of the remaining factors are bounded.

First, by (\ref{d1k}), (\ref{d2k}) and (\ref{bloch17.5})
\begin{eqnarray}
f_1(\theta)
&=&
\frac{-2\cos k D'(k)+ 2 \sin k D''(k)}{D'(k)^2}\cdot \frac{dk}{d\theta}
\nonumber\\
&=& \frac{-2\sin \theta}{D'(k)^3}\cdot
V(-2k^{-1}-2k^{-2}\cos k \sin k + 4 k^{-3}\sin^2 k)\nonumber\\
&=& \frac{-2\sin \theta}{D'(k)^3}\cdot O(n^{-1}).
\label{amp04}
\end{eqnarray}
For $\theta \in I_4\cup I_5$,
the expansion (\ref{amp02}) implies
\begin{eqnarray}
 \frac{-2\sin \theta}{D'(k)^3} = -2 \cdot \sin\tau \cdot h^{-3/2} (1+O(n^{-1/2})).
\label{amp05}
\end{eqnarray}
For $\theta \in I_5$, by (\ref{formula1})
\begin{eqnarray}
&&\sin \tau = O(n^{-1}),\nonumber\\
&& h = z_n - 2 \cos\tau \geq z_n -2
= \frac{V^2}{4}(n\pi)^{-2}+O(n^{-4}),\quad
h^{-3/2} = O(n^3).
\label{amp06}
\end{eqnarray}
Thus (\ref{amp04}), (\ref{amp05}) and (\ref{amp06}) imply
 $|f_1(\theta)|=O(n)$ for $\theta\in I_5$.
For $\theta \in I_4$, we have 
$|\tau|=O(n^{-1/4})$ and
by (\ref{amp021})
\begin{eqnarray}
&& \sin \tau = \tau (1+O(n^{-1/2})),\nonumber\\
&& h \geq  \tau^2 (1+O(n^{-1/2})),\quad
h^{-3/2} \leq  \tau^{-3}(1+O(n^{-1/2}))
\label{amp07}
\end{eqnarray}
for large $n$.
Thus (\ref{amp04}), (\ref{amp05}) and (\ref{amp07}) imply
$|f_1(\theta)|\leq C n^{-1}\tau^{-2}$ for $\theta\in I_4$,
for some positive constant $C$ independent of $n$.
For $\theta\in I_3$, we have by (\ref{amp03})
\begin{displaymath}
 -\frac{2\sin \theta}{D'(k)} = O(1),\quad D'(k)^{-1}=O(n^{1/4}).
\end{displaymath}
So (\ref{amp04}) implies $|f_1(\theta)|=O(n^{-1/2})$ for $\theta\in I_3$.
Similarly we can prove $|f_1(\theta)|= O(n^{-1}(\theta-(n-1)\pi)^{-2})$
for $\theta\in I_2$ and $|f_1(\theta)|= O(n)$ for $\theta\in I_1$,
thus $f_1(\theta)$ obeys the bound (\ref{bound_amplitude}).

Next, we shall estimate $f_2(\theta)$.
By (\ref{dkformula1}),
(\ref{d0k})-(\ref{d2k})
and (\ref{bloch17.5}), we have
\begin{eqnarray*}
 f_2(\theta)
&=&\frac{-2\cos\theta\cdot  D'(k)
+2\sin\theta \cdot D''(k)\cdot(-2\sin\theta)/D'(k)}{D'(k)^2}\\
&=&\frac{-D(k)D'(k)^2-(4-D(k)^2)D''(k)}{D'(k)^3}\\
&=& \frac{1}{D'(k)^3}
(-2V^2k^{-2}\cos k + O(n^{-3}))\\
&=&\frac{1}{D'(k)^3}\cdot O(n^{-2}).
\end{eqnarray*}
This equality and (\ref{amp02}), (\ref{amp06}), (\ref{amp07}) and
(\ref{inner16}) (with $\theta$ replaced by $k$) 
gives the same conclusion for $f_2(\theta)$
(actually, we obtain a bit faster decay for $\theta\in I_2\cup I_3 \cup I_4$).

Finally, by (\ref{dkformula1}), (\ref{d0k})-(\ref{d2k}) and (\ref{bloch17.5}),
\begin{eqnarray*}
&& \frac{-2\sin^2\theta}{\sin k \cdot D'(k)}
=
\frac{D(k)^2-4}{2\sin k \cdot D'(k)}
= \frac{-4 \sin k + 4Vk^{-1}\cos k + V^2 k^{-2}\sin k}{2D'(k)},\\
&&f_3(\theta)
=
\frac{-2\sin \theta}{D'(k)^3}(2Vk^{-1}+ O(n^{-2}))
=\frac{-2\sin \theta}{D'(k)^3}\cdot O(n^{-1}).
\end{eqnarray*}
This estimate is the same as (\ref{amp04}).
So the same conclusion holds for $f_3(\theta)$.
\end{proof}

\section{Proof of Theorem \ref{theorem_main}}
The $L^1-L^\infty$ norm of the operator
$P_ne^{-itH}$ is just the supremum 
with respect to $x,y\in \mathbf{R}\setminus\mathbf{Z}$
of the absolute value of the 
integral kernel $K_{n,t}(x,y)$ given in (\ref{bloch16}).
Put $x=[x]+x'$, $y=[y]+y'$, $[x],[y]\in \mathbf{Z}$,
$x',y'\in (0,1)$,
and $s=([x]-[y])/t$, then (\ref{bloch16}) is rewritten as%
\footnote{This kind of modification is used in the analysis by
Korotyaev \cite{Ko}, 
in which the propagation speed of the wave front 
in a periodic media is studied.
}
\begin{eqnarray}
 K_{n,t}(x,y)
=
\frac{1}{2\pi}
\int_{-\pi}^\pi e^{-it (\lambda_n(\theta)-s\theta)}
a_n(\theta,x',y')
d\theta,
\label{st01}
\end{eqnarray}
where $a_n$ is the amplitude function given in (\ref{amp01}).
The following lemma, taken from Stein's book \cite[page 334]{St},
gives the decay order of the oscillatory integral with respect to $t$.
 
\begin{lemma}
 \label{lemma_stein}
Let $I=(a,b)$ be a finite open interval and $k=1,2,3,\ldots$.
Let $\phi\in C^{k}(I;\mathbf{R})$, and assume
\begin{displaymath}
m_k=\inf_{x\in I}|\phi^{(k)}(x)|>0.
\end{displaymath}
If $k=1$, we additionally assume $\phi'$ is a 
monotone function on $I$.
Let $\psi\in C^1(I;\mathbf{C})$ and assume $\psi'\in L^1(I)$.
Then we have
\begin{displaymath}
\left|\int_a^b e^{i t \phi(x)}\psi(x)dx\right|\leq 
t ^{-1/k}
\cdot 
c_k (m_k)^{-1/k}
\left(|\psi(b)| + \int_a^b |\psi'(x)|dx\right)
\end{displaymath}
for any $t>0$, where $c_k = 5\cdot 2^{k-1}-2$.
\end{lemma}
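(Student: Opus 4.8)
The plan is to follow the classical van der Corput strategy, which splits into two stages: first reduce to the case of constant amplitude $\psi\equiv 1$ by a summation-by-parts argument, and then establish the pure oscillatory bound by induction on the order $k$. The reduction is clean because the hypotheses are inherited by every initial subinterval $(a,x)\subset I$.

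For the reduction, I would set $F(x)=\int_a^x e^{it\phi(u)}\,du$, so that $F(a)=0$ and $F'=e^{it\phi}$. Integration by parts gives
\[
\int_a^b e^{it\phi}\psi\,dx = F(b)\psi(b) - \int_a^b F(x)\psi'(x)\,dx.
\]
Since $|\phi^{(k)}|\geq m_k$ on every $(a,x)$ (and, when $k=1$, $\phi'$ remains monotone there), a uniform bound $\sup_{x\in I}|F(x)|\leq c_k(m_k t)^{-1/k}$ from the constant-amplitude case would immediately yield
\[
\left|\int_a^b e^{it\phi}\psi\,dx\right|\leq c_k(m_k t)^{-1/k}\left(|\psi(b)|+\int_a^b|\psi'(x)|\,dx\right),
\]
which is exactly the claim. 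Thus everything reduces to proving $\left|\int_a^b e^{it\phi}\,dx\right|\leq c_k(m_k t)^{-1/k}$ with $c_k=5\cdot 2^{k-1}-2$.

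For this I would induct on $k$. In the base case $k=1$, writing $e^{it\phi}=(it\phi')^{-1}\frac{d}{dx}e^{it\phi}$ and integrating by parts produces a boundary term of size at most $2/(m_1 t)$ together with the remainder $\frac{1}{t}\int_a^b|(1/\phi')'|\,dx$; because $\phi'$ is monotone and nonvanishing, $1/\phi'$ is monotone of constant sign, so its total variation equals $|1/\phi'(b)-1/\phi'(a)|\leq 1/m_1$, giving $c_1=3$. For the step $k\geq 2$, I would use that $\phi^{(k-1)}$ is monotone (its derivative $\phi^{(k)}$ never vanishes), hence has at most one zero $c$ in $I$; excising $(c-\delta,c+\delta)$ and bounding the integral there trivially by $2\delta$, while on the remaining at most two intervals one has $|\phi^{(k-1)}|\geq m_k\delta$ so the inductive hypothesis applies with constant $c_{k-1}$, yields
\[
\left|\int_a^b e^{it\phi}\,dx\right|\leq 2\delta + 2c_{k-1}(m_k\delta t)^{-1/(k-1)}.
\]
Choosing $\delta=(m_k t)^{-1/k}$ balances the two terms and produces the recursion $c_k=2c_{k-1}+2$, which together with $c_1=3$ solves to $c_k=5\cdot 2^{k-1}-2$ as claimed. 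I would also note that the monotonicity hypothesis is genuinely needed only at the bottom level $k=1$, since at every intermediate step $\phi^{(k-1)}$ is automatically monotone.

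The hard part will be the exact bookkeeping of constants in the inductive step: one must verify that $\phi^{(k-1)}$ changes sign at most once, so that the complement of the excised neighborhood consists of \emph{at most two} intervals (otherwise the prefactor $2$ in front of $c_{k-1}$, and hence the recursion, would be wrong), and one must choose $\delta$ so that the recursion comes out as the exact identity $c_k=2c_{k-1}+2$ rather than merely up to a multiplicative constant. This precise optimization is what pins down the stated value $c_k=5\cdot 2^{k-1}-2$; the degenerate regime where $(m_k t)^{-1/k}$ exceeds $b-a$ is handled separately by the trivial bound $|\int_a^b e^{it\phi}\,dx|\leq b-a\leq c_k(m_k t)^{-1/k}$.
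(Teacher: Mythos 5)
Your proposal is correct and coincides with the paper's own treatment: the paper does not prove the lemma in detail but refers to Stein's integration-by-parts-plus-induction proof of van der Corput's lemma (\cite[pages 332--334]{St}), which is exactly the argument you reconstruct, including the reduction via $F(x)=\int_a^x e^{it\phi}\,du$, the base constant $c_1=3$, and the recursion $c_k=2c_{k-1}+2$ yielding $c_k=5\cdot 2^{k-1}-2$. The only point worth noting is that when $\phi^{(k-1)}$ has no zero in $I$ you should excise a $\delta$-neighborhood of the endpoint where $|\phi^{(k-1)}|$ is minimal (monotonicity still gives $|\phi^{(k-1)}|\geq m_k\delta$ on the remainder), but this is standard bookkeeping and does not affect the constant.
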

{\bf Remark.}
Here, we say $f(x)$ is monotone if $f(x)\leq f(y)$ ($x<y$)
or $f(x)\geq f(y)$ ($x<y$).
The assumption `$I$ is finite' can be removed if the integral 
in the left hand side converges.
The proof is done by integration by parts
and a mathematical induction (see Stein \cite[page 332-334]{St}).

\begin{proof}[Proof of Theorem \ref{theorem_main}]
The statement for $|t|\leq 1$ immediately follows from
Proposition \ref{proposition_amplitude1}.
By taking the complex conjugate if necessary,
we can assume $t>1$ without loss of generality.

Let $\theta_0$ as in Proposition \ref{proposition_inflection}.
Let $v_{\max} = |\lambda_n'(\theta_0)|$ be the maximum of
the function $|\lambda_n'(\theta)|$.
Put $\theta_0=n\pi + \tau_0$ and 
$h_0 = (-1)^n (y_n -2\cos \theta_0) = z_n- 2\sin \tau_0$.
Put $\delta_n=V^2/(4n\pi)$.
Then we have by (\ref{asymptotics_theta0}) and (\ref{formula2}),
\begin{eqnarray}
\label{pm01}
&& \tau_0 =  - \delta_n^{1/3} + O(n^{-1}),\\
&& \sin \tau_ 0 = - \delta_n^{1/3} + O(n^{-1}),
\label{pm02}
\\
&& h_0 = \delta_n^{2/3} + O(n^{-4/3}).
\label{pm03}
\end{eqnarray}
Since $\lambda_n(\theta)=\lambda_-(\theta)$ near the upper edge,
we have by (\ref{pm01})-(\ref{pm03}) and (\ref{pu14})%
\footnote{
The formula (\ref{pm04})
is consistent with Figure \ref{figure_lambda1}.
}
\begin{eqnarray}
 v_{\max}&=&\lambda_n'(\theta_0)\nonumber\\
&=& 2\sin \tau_0\cdot
\left(
- \frac{1}{2}\lambda_{0,1} h_0^{-1/2}+ \lambda_{0,2} 
-\frac{3}{2}\lambda_{0,3} h_0^{1/2}+ 2\lambda_{0,4} h_0
- \frac{5}{2}\lambda_{0,5} h_0^{3/2} + O(h_0^2)
\right)\nonumber\\
&=& 2 n \pi \left(1 + O(n^{-2/3})\right).
\label{pm04}
\end{eqnarray}

If $|s|\geq v_{\max}+1$, then we have
\begin{equation}
\label{pm05}
 |\lambda_n'(\theta)-s|\geq |s| - v_{\max} \geq 1.
\end{equation}
By the periodicity of the integrand,
we can divide the integral (\ref{st01})
as the sum of two integrals so that
$\lambda_n'$ is a monotone function on each interval.
Then we can apply Lemma \ref{lemma_stein} with $k=1$,
and we have by (\ref{pm05}) and Proposition \ref{proposition_amplitude1}
\begin{eqnarray*}
 \left|K_{n,t}(x,y)\right|&\leq& 2 t^{-1 }c_1 
\cdot 
\left(
\inf_{\theta\in \mathbf{R}}|\lambda_n'(\theta)-s|
\right)^{-1}
\left(
\|a\|_{L^\infty(I)}
+
\|a'\|_{L^1([I)}
\right)\\
&\leq& C t^{-1},
\end{eqnarray*}
where $I=[-\pi,\pi]$ and $C$ is a positive constant independent of $n$.

If $|s|\leq v_{\max}+1$,
then we slide the interval of integration by periodicity,
and divide it into four intervals
\begin{eqnarray*}
I_1 = \left[
n\pi- 2 \delta_n^{1/3}, n\pi - \delta_n^{1/3}/2
\right],\quad
I_2 =  \left[
 n\pi - \delta_n^{1/3}/2,  n\pi + \delta_n^{1/3}/2
\right],\quad\\
I_3 =  \left[n\pi + \delta_n^{1/3}/2, n\pi + 2 \delta_n^{1/3}
\right],\quad
I_4 =\left[n\pi + 2 \delta_n^{1/3}, (n+2)\pi -2 \delta_n^{1/3}\right].
\end{eqnarray*}
Put
\begin{displaymath}
 K_j = \frac{1}{2\pi}\int_{I_j}e^{-it(\lambda_n(\theta)-s\theta)}
a_n(\theta,x',y')d\theta\quad
(j=1,2,3,4).
\end{displaymath}
If $\theta\in I_1$ or $I_3$, 
then $\delta_n^{1/3}/2\leq |\tau|\leq 2 \delta_n^{1/3}$
and by (\ref{formula2})
\begin{eqnarray}
&&h = \tau^2 + O(\tau^4)+O(n^{-2}) =\tau^2 (1+O(n^{-2/3})),\nonumber\\
&&h^{-1} = \tau^{-2}(1+O(n^{-2/3})).
\label{pm06}
\end{eqnarray}
By (\ref{pu19}),
\begin{eqnarray}
\lambda_n''(\theta)
&=& -2 \delta_n h^{-3/2}+ 2 + O(n^{-2/3})\nonumber\\
&=& -2 \delta_n |\theta-n\pi|^{-3}+ 2+O(n^{-2/3})
\quad (\theta\in I_1\cup I_3).
\label{pm07}
\end{eqnarray}
(\ref{pm06}) and (\ref{pm07}) imply%
\footnote{
We can make sure the accuracy of the formulas
(\ref{pm08}) and (\ref{pm09}) 
in Figure \ref{figure_inflection}.
}
\begin{eqnarray}
 \lambda_n''(n\pi \pm 2 \delta_n^{1/3})=\frac{7}{4} + O(n^{-2/3}),
\label{pm08}
\\
 \lambda_n''(n\pi \pm \delta_n^{1/3}/2)=-14 + O(n^{-2/3}).
\label{pm09}
\end{eqnarray}
Next, for $0<|\tau|<\delta_n^{1/4}$, 
we have from (\ref{formula2})
\begin{displaymath}
  \frac{V^2}{4}(n\pi)^{-2} + O(n^{-4}) \leq
h \leq \tau^2+O(n^{-1}),
\end{displaymath}
and by (\ref{pu25}) 
\begin{eqnarray}
 \lambda_n^{(3)}(\theta)
= 
\sin \tau \cdot
\frac{3V^2}{2n\pi} \cdot h^{-5/2}
\left(1+O(h)+ n h^{5/2}O(1)\right)
\not = 0\quad
(0<|\tau|<\delta_n^{1/4})
\label{pm10}
\end{eqnarray}
for sufficiently large $n$,
since $h=O(n^{-1/2})$ and $n h^{5/2}\cdot O(1)=O(n^{-1/4})$.
Thus $\lambda_n^{(2)}$ is monotone on the left half of $I_2$.
This fact and (\ref{pm09}) imply
\begin{equation}
 |\lambda_n''(\theta)|\geq |\lambda_n''(n\pi \pm \delta_n^{1/3}/2)|
\geq C\quad (\theta\in I_2)
\end{equation}
for sufficiently large $n$,
for some positive constant $C$ independent of $n$.
Moreover, (\ref{pm08}), (\ref{pm10}) 
and (\ref{inner_lambda2}) imply 
$|\lambda_n''(\theta)|$ is also uniformly bounded from below
on $I_4$.
Then Lemma \ref{lemma_stein} implies $K_2$
and $K_4$ is $O(t^{-1/2})$, uniformly with respect to $n$.
Moreover, for $\theta\in I_1\cup I_3$,
we have $\delta_n^{1/3}/2\leq |\tau|\leq 2\delta_n^{1/3}$,
and (\ref{formula2}) and (\ref{pm10}) imply
\begin{equation}
 |\lambda_n^{(3)}(\theta)|
= 6\delta_n \tau^{-4} (1+O(\tau^2))
\geq \frac{3}{8}\delta_n^{-1/3} (1+O(n^{-2/3}))
\geq C n^{1/3}
\end{equation}
for sufficiently large $n$,
where $C$ is a constant independent of $n$.
Thus Lemma \ref{lemma_stein} implies
\begin{displaymath}
 |K_1|+|K_3| \leq C t^{-1/3}n^{-1/9},
\end{displaymath}
and the conclusion holds.
\end{proof}

We conclude the paper by arguing
the summability of the bandwise estimates.
If we fix $x$, $y$ and take the limit $n\to \infty$, 
then $|s| \leq v_{\max}$ for sufficiently large $n$,
by (\ref{pm04}) (see also Figure \ref{figure_lambda1}).
So there always exists the stationary phase point $\theta_s$
(the solution to $\lambda_n'(\theta)=s$)
nearby $\theta=n \pi$,
for sufficiently large $n$.
For simplicity, assume $V>0$ in the sequel.
When $\theta=n\pi$, we have by the formulas
(\ref{der1})-(\ref{der6}),
(\ref{dkformula2}),
(\ref{d1k}) and (\ref{d2k})
\begin{eqnarray}
&&k(n\pi)=n\pi,\quad k'(n\pi)=k^{(3)}(n\pi)=0,\nonumber\\
&& k''(n\pi)=D'(n\pi)^{-1}(-2 \cos n\pi)
= n\pi (V \cos n\pi)^{-1}\cdot (-2 \cos n\pi)= -\frac{2n\pi}{V},
\nonumber\\
&&\lambda_n'(n\pi)=\lambda_n^{(3)}(n\pi)=0,\nonumber\\
&&\lambda_n''(n\pi) = 2(k(n\pi)k''(n\pi)+ k'(n\pi)^2)=-\frac{4(n\pi)^2}{V}.
\label{pm11}
\end{eqnarray}
Thus, 
even if we cut out a small interval $J_n$ around $\theta=n\pi$,
the bound for the integral over $J_n$ is not better than 
$O(n^{-1}t^{-1/2})$,
which is not summable with respect to $n$.
However,
we already know the sum $\sum_{n=1}^\infty P_n e^{-itH}$
converges in the strong operator topology in $L^2(\mathbf{R})$.
Thus it seems that the sum of the integral kernels 
converges only conditionally.
We have to analyze
the cancellation between the integral kernels for two adjacent bands
more carefully,
in order to obtain a better estimate.
We hope to argue this subject in the future.

\end{document}